\newtheorem{theorem}{Theorem}
\newtheorem{lemma}[theorem]{Lemma}
\newtheorem{definition}[theorem]{Definition}
\newtheorem*{definition*}{Definition}
\newcommand{\ie}{\textit{i.e.}\xspace}
\newcommand{\eg}{\textit{e.g.}\xspace}
\newcommand{\ex}[1]{ \mathbb{E} \left[ #1 \right] }
\newcommand{\exc}[2]{ \mathbb{E} \left[ #1 \, | \; #2 \right] }
\newcommand{\prob}[1]{ Pr \left( #1 \right) }
\newcommand{\epsi}[0]{ \varepsilon }
\newcommand{\reals}[0]{ \mathbb{R}^+ }
\newcommand{\ff}[1]{ f \left( #1 \right) }
\newcommand{\marge}[2]{\Delta \left( #1 \, | \, #2 \right) }
\newcommand{\argmax}{\arg\, \max}
\newcommand{\oh}[1]{\mathcal{O}\left( #1 \right)}
\newcommand\numberthis{\addtocounter{equation}{1}\tag{\theequation}}
\renewcommand{\restriction}{\mathord{\upharpoonright}}
\newcommand{\unc}{\textsc{UnconstrainedMax}\xspace}
\newcommand{\threseq}{\textsc{ThreshSeq}\xspace}
\newcommand{\threseqama}{\textsc{TS-AMA-v1}\xspace}
\newcommand{\tsbin}{\textsc{TS-AMA-v2}\xspace}
\newcommand{\thresam}{\textsc{Threshold-Sampling}\xspace}
\newcommand{\adapst}{\textsc{AdaptiveSimpleThreshold}\xspace}
\newcommand{\adaptg}{\textsc{AdaptiveThresholdGreedy}\xspace}
\newcommand{\sm}{\textbf{SMCC}\xspace}
\newcommand{\tp}{\textsc{Threshold}\xspace}
\newcommand{\uni}{\mathcal N}
\newcommand{\atgRatio}{\frac{e-1}{e(2+\alpha)-\alpha}-\epsi}
\newcommand{\opt}{\textsc{OPT}\xspace}
\newcommand{\reducedmean}{\textsc{ReducedMean}\xspace}
\newcommand{\algOnefullname}{\textsc{AdaptiveSimpleThreshold}\xspace}
\newcommand{\algTwofullname}{\textsc{AdaptiveThresholdGreedy}\xspace}
\newcommand{\anm}{\textsc{AdaptiveNonmonotoneMax}\xspace}
\newcommand{\atg}{\textsc{AST}\xspace}
\newcommand{\latg}{\textsc{ATG}\xspace}
\newcommand{\iter}{\textsc{IteratedGreedy}\xspace}
\newcommand{\frg}{\textsc{FastRandomGreedy}\xspace}
\newcommand{\threshgreedy}{\textsc{ThresholdGreedy}\xspace}
\newcommand{\park}{\textsc{ParCardinal}\xspace}
\newcommand{\parkone}{\textsc{ParCardinal}(v1)\xspace}
\newcommand{\parktwo}{\textsc{ParCardinal}(v2)\xspace}
\newcommand{\iterratio}{\frac{e - 1}{e(2 + \alpha) - \alpha}}
\newcommand{\red}[1]{#1} % 1st revision
\newcommand{\revtwo}[1]{#1} % 2nd revision
\begin{document}
\title{Practical and Parallelizable Algorithms for Non-Monotone Submodular Maximization with Size Constraint}

%   \author{
%   \name Yixin Chen \email yc19e@my.fsu.edu \\
%   \addr Department of Computer Science\\
%   Florida State University\\
%   Tallahassee, Florida
%   \AND
%   \name Alan Kuhnle \email kuhnle@cs.fsu.edu\\
%   \addr Department of Computer Science\\
%   Florida State University\\
%   Tallahassee, Florida
% }

  \author{\name Yixin Chen \email chen777@tamu.edu \\
       \addr Department of Computer Science \& Engineering\\
        Texas A\&M University\\
         College Station, TX
       \AND
       \name Alan Kuhnle \email kuhnle@tamu.edu \\
       \addr Department of Computer Science \& Engineering\\
         Texas A\&M University\\
         College Station, TX}

%  \editor{Kevin Murphy and Bernhard Sch{\"o}lkopf}

\maketitle
\begin{abstract}
  We present combinatorial and parallelizable algorithms
  for \revtwo{the} maximization of a submodular function, not necessarily
  monotone, with respect to a size constraint.
  We improve the best approximation factor achieved
  by an algorithm that has optimal adaptivity and
  nearly optimal query
  complexity to \red{$1/6 - \epsi$,
  and even further to $0.193 - \epsi$ by increasing the adaptivity by a factor of $\oh{\log (k)}$}. 
  The conference version
  of this work mistakenly employed a subroutine that does
  not work for non-monotone, submodular functions.
  In this version, we propose a fixed and improved subroutine
  to add a set with high average marginal gain,
  \threseq, 
  which returns a solution in $\oh{\log(n)}$
  adaptive rounds with high probability.
    Moreover, we provide two approximation algorithms.
    The first has approximation
    ratio $1/6 - \epsi$, adaptivity
    $\oh{\log (n)}$, and query complexity $\oh{ n \log (k)}$,
    while the second has approximation ratio $0.193 - \epsi$,
    adaptivity $ \oh{\log (n) \log(k) }$, and query complexity $\oh{n \log (k)}$.
    Our algorithms are empirically 
    validated to use a low number of adaptive rounds and total queries
    while obtaining solutions with high objective value in comparison with
    state-of-the-art approximation algorithms, including continuous algorithms
    that use the multilinear extension. 
    %We also demonstrate empirically
    %that 
    %algorithms which make use of the multilinear extension of a submodular
    %function are extremely inefficient in comparison with combinatorial algorithms.
    \end{abstract}

% \begin{keywords}
% non-monotone submodular maximization, parallelizable algorithms, threshold procedure
% \end{keywords}

%%% Local Variables:
%%% mode: latex
%%% TeX-master: "main"
%%% End:

\section{Introduction} \label{sec:intro}
A nonnegative set function $f:2^{\mathcal N} \to \reals$, defined on all subsets
of a ground set $\mathcal N$ of size $n$,
is \textit{submodular}
if for all $A, B \subseteq \mathcal N$,
$f(A) + f(B) \ge f(A \cup B) + f( A \cap B )$.
\red{A set function is monotone if $A \subseteq B$ implies
$f(A) \le f(B)$.}
Submodular set functions naturally arise in
many learning applications, 
including data summarization \cite{Simon2007,Sipos2012,Tschiatschek2014,Libbrecht2017}, viral
marketing \cite{Kempe2003,Hartline2008}, and recommendation systems \cite{El-Arini2011}. 
Some applications yield submodular functions
that are not monotone: for example, image summarization with
diversity \cite{Mirzasoleiman2016} or revenue maximization on
a social network \cite{Hartline2008}.
In this work, we study the maximization of
a (not necessarily monotone) submodular function subject to a cardinality constraint;
that is, given submodular function $f$ and integer $k$, determine
$\argmax_{|S| \le k} f(S)$ (\sm). Access to $f$
is provided through a value query oracle, which
when queried with the set $S$ returns the value $f(S)$.

As the amount of data in applications has exhibited
exponential growth in recent years 
(\eg the growth of social networks \cite{Mislove2008}
or genomic data \cite{Libbrecht2017}), it is 
necessary to design algorithms for \sm that can scale to
these large datasets. 
One aspect of algorithmic efficiency is the \textit{query complexity},
the total number
of queries to the oracle for $f$. 
Since evaluation of $f$
is often expensive, the queries to $f$ often dominate the
runtime of an algorithm. In addition to low query complexity,
it is necessary to
design algorithms that parallelize well  to take advantage of
modern computer architectures. To quantify the degree
of parallelizability of an algorithm,
the \textit{adaptivity} or \textit{adaptivive complexity} of an algorithm
is the minimum number of sequential rounds such that
in each round the algorithm makes $\oh{\text{poly}(n)}$
independent queries to the evaluation oracle.
The lower the adaptivive complexity of an algorithm,
the more suited the algorithm is to parallelization,
as within each adaptive round, the queries to $f$
are independent and may be easily parallelized. 

The design of algorithms with nontrivial
adaptivity for \sm when $f$ is monotone 
was initiated by {\citeA{Balkanski2018}},
who also prove a lower bound of $\Omega ( \log( n) / \log \log (n) ) $
adaptive rounds to achieve a constant approximation ratio. Recently, much
work has focused on the design of adaptive algorithms for \sm
with (not necessarily monotone) submodular functions,
as summarized in Table~\ref{table:cmp}.
However, although many algorithms with low adaptivity have been proposed,
most of these algorithms exhibit at least a quadratic dependence
of the query complexity on the size $n$ of the ground set, for $k = \Omega(n)$.
For many applications, instances have grown too large
for quadratic query complexity to be practical.
Therefore, it is necessary
to design adaptive algorithms that also have
nearly linear query complexity.
An algorithm in prior literature that
meets this requirement is the
algorithm developed by {\citeA{Fahrbach2018a}}, which has $\oh{n \log (k)}$
query complexity and $\oh{\log (n)}$ adaptivity.
However, the 
approximation ratio stated in {\citeA{Fahrbach2018a}}
for this algorithm does not hold,
as discussed in Section \ref{sec:related_work} 
and Appendix \ref{sec:counterexample}.
\revtwo{During our revision of this paper, 
\citeA{fahrbach2018non} fixed it,
ensuring that the approximation ratio holds now.}
%Recently, algorithms with nontrivial adaptivity
%for \sm have been developed, both for the special
%case when $f$ is monotone \cite{}, and in general
%\cite{}. 
% In this work, we will consider the problem \sm with
% general, submodular $f$; that is, we will not assume monotonicity
% of $f$.
% Although algorithms with nearly optimal adaptivity of 
% $O( \log n )$ have been developed for \sm, these 
% algorithms all have total query complexity at least
% quadratic in the size $n$ of the ground set, 
\begin{table*}[t] \centering 
  \begin{threeparttable}
  \begin{minipage}{\textwidth}
  \caption{Adaptive algorithms for \sm where objective $f$ is not 
  necessarily monotone. 
  \revtwo{We consider three metrics here.
  ``Approximation Ratio'' reflects the accuracy of the algorithm, where a higher value signifies greater accuracy.
  ``Adaptivity'' measures the algorithm's parallelizability,
  with a lower value indicating higher parallelizability.
  ``Queries'' represents the total number of queries to the oracle for $f$, dominating the algorithm's runtime. 
  A lower query count implies faster algorithmic performance. 
  Both the adaptivity and query complexity values presented in this table are asymptotic.
  }} \label{table:cmp}
\begin{tabular}{l|l|l|l} 
  \toprule
Reference         & Approximation Ratio & Adaptivity & Queries \\
  \midrule
  % {\shortciteA{Buchbinder2015a}} & $1/e - \epsi \revtwo{\approx 0.367-\epsi}$ & $O(k)$ & ${O(n)}$ \\
  {\shortciteA{Buchbinder2015a}} & $1/e - \epsi \revtwo{\approx 0.367-\epsi}$ & $k$ & ${n}$ \\
  \midrule
  % {\shortciteA{Balkanskia}} & $1/(2e) - \epsi\revtwo{\approx 0.183-\epsi}$ & $O\left( \log^2(n) \right)$ & $O \left( OPT^2 n \log^2(n) \log(k) \right) $\\
  {\shortciteA{Balkanskia}} & $1/(2e) - \epsi\revtwo{\approx 0.183-\epsi}$ & $ \log^2(n) $ & $ OPT^2 n \log^2(n) \log(k) $\\
  \midrule
  % {\shortciteA{Chekuri2019a}} & $3 - 2 \sqrt{2} - \epsi\revtwo{\approx 0.171-\epsi}$ & $O( \log^2(n) )$ & $O \left( nk^4 \log^2(n) \right)$ \\
  {\shortciteA{Chekuri2019a}} & $3 - 2 \sqrt{2} - \epsi\revtwo{\approx 0.171-\epsi}$ & $\log^2(n)$ & $ nk^4 \log^2(n)$ \\
  \midrule
  % {\shortciteA{Ene2020}} & $1/e - \epsi\revtwo{\approx 0.367-\epsi}$ & ${O( \log(n) )}$ & $O \left( nk^2 \log^2(n) \right) $ \\
  {\shortciteA{Ene2020}} & $1/e - \epsi\revtwo{\approx 0.367-\epsi}$ & $ \log(n)$ & $ nk^2 \log^2(n) $ \\
  \midrule
  % {\shortciteA{fahrbach2018non}}& $0.039 - \epsi$ % \tnote{$\dagger$}
  % % \footnote{The approximation ratio of this algorithm does not hold,
  % % and is discussed in Section~\ref{sec:counterexample}.}
  % & ${O( \log (n))}$ & ${O(n \log (k))}$ \\
  {\shortciteA{fahrbach2018non}}& $0.039 - \epsi$
  & $ \log (n)$ & ${n \log (k)}$ \\
  \midrule
  % {\shortciteA{amanatidis2021submodular}}& $0.172-\epsi$ 
  % & \makecell[l]{$\oh{\log(n)}$ \\ $\oh{\log(n)\log(k)}$} 
  % & \makecell[l]{$\oh{nk\log(n)\log(k)}$ \\ $\oh{n\log(n)\log^2(k)}$} \\
  {\shortciteA{amanatidis2021submodular}}& $0.172-\epsi$ 
  & \makecell[l]{$\log(n)$ \\ $\log(n)\log(k)$} 
  & \makecell[l]{$nk\log(n)\log(k)$ \\ $n\log(n)\log^2(k)$} \\
  \midrule
  % Theorem~\ref{thm:atg} (\atg)            & $1/6 - \epsi\revtwo{\approx 0.166-\epsi}$ & ${O( \log (n) )}$ & ${O(n \log (k) )}$ \\
  % Theorem~\ref{thm:latg} (\latg)      & $0.193 - \epsi$ & $O( \log(n)\log(k) )$ & ${O(n \log (k))}$ \\ 
  Theorem~\ref{thm:atg} (\atg)            & $1/6 - \epsi\revtwo{\approx 0.166-\epsi}$ & ${\log (n)}$ & $n \log (k)$ \\
  Theorem~\ref{thm:latg} (\latg)      & $0.193 - \epsi$ & $ \log(n)\log(k)$ & $n \log (k)$ \\ 
  \bottomrule
\end{tabular}
\end{minipage}
% \begin{tablenotes}\footnotesize
%   \item[$\dagger$] The approximation ratio of this algorithm does not hold,
%   and is discussed in Appendix~\ref{sec:counterexample}.
%   \end{tablenotes}
\end{threeparttable}
\end{table*}

\textbf{Contributions.}
In this work, we propose two fast,
combinatorial algorithms for \sm:
the $(1/6 - \epsi)$-approximation algorithm 
\algOnefullname (\atg)
with adaptivity $\oh{\log (n)}$ and query complexity $\oh{n \log (k)}$;
and the $(0.193 - \epsi)$-approximation algorithm \algTwofullname (\latg) with
adaptivity $\oh{\log (n) \log (k)}$ and query complexity $\oh{n \log (k)}$. 

The above algorithms both employ a lowly-adaptive
subroutine to add
multiple elements that satisfy a given marginal gain,
on average.
The conference version \cite{kuhnle2021nearly} of this paper
used
the \thresam subroutine of
{\citeA{Fahrbach2018,Fahrbach2018a}} for this purpose.
However, the theoretical guarantee (Lemma 2.3 of {\citeA{Fahrbach2018a}}) for non-monotone functions does not hold 
% as discussed further. 
\revtwo{due to a bug that has since been fixed in~\citeA{fahrbach2018non}.}
% \footnote{\revtwo{The authors have fixed this bug with a slight change in the arXiv version: \textit{https://arxiv.org/pdf/1808.06932.pdf}}}
In Appendix~\ref{sec:counterexample}, we give a counterexample to the performance guarantee of \thresam.
In this version,
we introduce a new threshold subroutine \threseq,
which not only fixes the problem that \thresam faced,
but achieves its guarantees with high probability
as opposed to in expectation; the high probability guarantees
simplify the analysis of our approximation algorithms that
rely upon the \threseq subroutine. 
% \threseq calculates marginal gains in parallel and only keeps 
% elements with positive marginal gains.
% Due to submodularity, there is no loss on average marginal gain
% after deleting elements with negative marginal gains.
% Therefore, this algorithm returns a set with an average marginal 
% gain of $(1-\epsi)\tau$.
% Also, it is still a highly parallelizable algorithm with 
% $\oh{\log(n)}$ adaptive rounds and $\oh{n}$ oracle queries.
% And it returns a set with an average marginal gain of $(1-\epsi)\tau$
% with high probability instead of the expected marginal gain.

% \threseq is used to replace the \thresam(Fahrbach2018), since the
% theoretical guarantee does not hold with non-monotone functions.
% In section~\ref{sec:counter-example}, we give a counter example to
% show that \thresam may fail to sample a set with an expected
% average marginal gain.

% In this work, we improve the best approximation factor for
% nearly linear-time algorithms that are highly parallelizable 
% to $0.193 - \epsi$. 

Our algorithm \atg uses a double-threshold procedure to obtain
its ratio of $1/6 - \epsi$. Our second algorithm \latg 
is a low-adaptivity modification of the algorithm of {\citeA{Gupta2010a}}, for which 
we improve the ratio from $1/6$ to {0.193} through a novel analysis.
Both of our algorithms use the low-adaptivity, threshold sampling procedure
\threseq and a subroutine for 
unconstrained maximization of a submodular function \cite{Feige2011,Chen2018b}
as components.
More details are given in the related work
discussion below and in Section~\ref{sec:latg}.

The new \threseq does not rely on sampling to achieve
concentration bounds, which
significantly
improves the practical efficiency of our algorithms
over the conference version \red{\cite{kuhnle2021nearly}}.
Empirically, we demonstrate that both of our algorithms achieve
superior objective value to current state-of-the-art algorithms while using a small
number of queries and adaptive rounds on two applications of \sm. 
% The empirical adaptivity and objective value of \anm and \algOnefullname is similar,
% except for on small $k$ values, where the objective value of \anm
% suffers. 
% Further, \algTwofullname outperforms the objective value of the fastest deterministic approximation 
% algorithm \fig of \shortciteS{Kuhnle2019} while using fewer queries.
%  competitive with that of the iterated greedy algorithm of \shortciteS{Gupta2010}, 
%  an algorithm with adaptivity $\Omega( k )$ and query complexity $\Omega( kn )$.
  % Another interesting feature of \algTwofullname
  % is that it does not use multiple, concurrent guesses of the optimal solution value, OPT.
  % To the best of our knowledge, all other adaptive algorithms for \sm require $O ( \log (n) / \epsi )$
  % many guesses for OPT to be run in parallel, which empirically require a high number
  % of threads.
%\end{itemize}
\subsection{Related Work}
\label{sec:related_work}
\textbf{Theshold Procedures.}
A recurring subproblem of \sm (and other submodular optimization problems)
is to 
% add all elements of the ground set 
\red{add to a candidate solution $S$ those elements $x$ of the ground set $\uni$}
that give a marginal gain of
at least $\tau$, for some constant threshold $\tau$. 
To solve this subproblem, 
the algorithm \thresam is proposed in 
{\citeA{Fahrbach2018}} for monotone submodular functions
and applied in {\citeA{Fahrbach2018a}}
and the conference version of this work \cite{kuhnle2021nearly} as subroutines
for non-monotone \sm. However, theoretical guarantee
(Lemma 2.3 of {\citeA{Fahrbach2018a}}) does not hold when the objective
function is non-monotone. Counterexamples and pseudocode for \thresam are given in Appendix~\ref{sec:counterexample}.
\revtwo{A recent work by~\citeA{fahrbach2018non} has modified the \thresam algorithm and fixed the problem discussed above.}
% The algorithm works by checking the expected value
% of the marginal gain of adding a uniformly randomly
% chosen element to a uniformly random set
% of size $t$, for a range of values of $t$; standard
% concentration bounds are used to determine how many
% samples need to be taken to check this expected value.

% However, in the case of non-monotone submodular functions,
% \textit{it is not enough to check the gain of the ``last''
%   element only,}
% since some other elements in the randomly chosen set
% of size $t$ may have negative marginal gains.
% An element with negative marginal gain may result in 
% a large decrease on the total value of the set,
% which can cause the average marginal gain
% to be less than $\tau$.
% In Appendix~\ref{sec:counterexample}, we give counterexamples
% to show that the theoretical guarantee does not hold.
%\textbf{The \threseq Algorithms of \shortciteS{amanatidis2021submodular}.}

Two alternative solutions to the non-monotone threshold problem were proposed 
in {\citeA{amanatidis2021submodular}} for the case of non-monotone,
submodular maximization subject to a knapsack constraint.
Due to the complexity of the constraints, the thresholding procedures
in {\citeA{amanatidis2021submodular}} have a high time complexity and
require
% Their algorithm selects a subset from the candidate with a good tradeoff of
% including more elements with marginal gains larger than threshold $\tau$
% and excluding more elements with negative marginal gains.
% Briefly, the algorithm works as follows:
% after sampling a sequence at the beginning of each iteration,
% the marginal gain of each element with respect to each prefix of the sequence
% is calculated; 
% with the exact probability of choosing an element whose marginal gain is 
% larger than $\tau$ or less than 0, a prefix is selected with average
% marginal gain to be more than $(1-\epsi)^2\tau$.
% Even though \threseq works and returns a valid subset, it is time-consuming with query calls
% to the whole ground set multiple times; that
$\oh{n^2}$ query calls within one iteration
even when restricted to {a} size constraint.
Although a variant with binary search is proposed to get 
% the prefix of the sequence with less query calls,
fewer queries, the sequential binary search worsens
the adaptivity of the algorithm.

In this work, we propose the
\threseq algorithm (Section \ref{sec:ts})
that fixes the problems of \thresam and runs
in \red{$\oh{n}$ queries and $\oh{ \log n}$ adaptive rounds.}
We solve these problems
by \red{introducing two sets found by the algorithm:}
% bifurcating the solution found by the algorithm into two sets: 
an auxilliary set $A$ separate from the solution
set $A'$ found by \threseq \red{that solves \tp (Def.~\ref{def:tp}) separately}. 
The algorithm maintains that
$A' \subseteq A$, and the larger set is used for filtering
from the ground set, while the smaller set maintains desired
bounds on the average marginal gain. 
%znthe calculation of the bunch of marginal gains can not be well parallelized now.
%At this time, the adapativity becomes $\oh{\log^2(n)}$.

\textbf{Algorithms with Low Adaptivive Complexity.}
%\section{Preliminaries}
%\paragraph{Adaptive Algorithms for \sm} 
Since the study of parallelizable algorithms for
submodular optimization was initiated by
{\citeA{Balkanski2018}}, there have been a
number of $\oh{\log n}$-adaptive algorithms designed
for \sm. When $f$ is monotone, adaptive algorithms
that obtain the optimal ratio \cite{Nemhauser1978a} of $1 - 1/e - \epsi$
have been designed by {\citeA{Balkanski,Fahrbach2018,Ene,chen2021best}}.
% Of these, the algorithms of \shortciteS{Fahrbach2018,Ene,chen2021best} also
% have nearly optimal query complexity; that is, they have
% query complexity $O(n \log k)$.
Of these, the algorithm of {\citeA{chen2021best}} also has
the state-of-the-art
sublinear adapativity and linear query complexity.

However, when the function $f$ is not monotone, the
best approximation ratio with polynomial query complexity
for \sm is unknown, but falls within the range $[0.385, 0.491]$
\cite{Buchbinder2016,Gharan2011a}. For \sm,
algorithms with nearly optimal adaptivity have been
designed by {\citeA{Balkanskia,Chekuri2019a,Ene2019b,Fahrbach2018a,amanatidis2021submodular}};
for the query complexity and approximation factors of
these algorithms, see Table~\ref{table:cmp}.
Of these, the best approximation ratio of $(1/e - \epsi) \approx 0.368$ 
is obtained by the algorithm of
{\citeA{Ene2020}}.
However, this algorithm requires access to an oracle for
the gradient of the continuous extension of a submodular
set function, which requires $\Omega (nk^2 \log^2 (n) )$ 
queries to sufficiently approximate. 
The practical performance
of the algorithm of {\citeA{Ene2020}} is 
investigated in our empirical evaluation of
Section~\ref{sec:exp}.
Other than the algorithms of {\citeA{Fahrbach2018}} and {\citeA{amanatidis2021submodular}}, 
all parallelizable
algorithms exhibit a runtime of at least quadratic dependence on $n$.
% even the algorithm \textsc{ParKnapsack}\xspace of \shortciteS{amanatidis2021submodular} 
% which achieves optimal logarithmic adaptivity;
In contrast, our algorithms have query complexity of 
$\oh{n \log k}$ and have $\oh{\log n}$ or $\oh{\log^2 n}$
adaptivity.  

After the conference version \cite{kuhnle2021nearly} of this paper,
{\citeA{amanatidis2021submodular}} proposed a parallelizable algorithm,
\park, for knapsack constraints,
which is the first constant factor approximation with optimal
adaptivive complexity.
In the paper, \park is directly applied to cardinality constraints.
It achieves a $0.172-\epsi$ ratio with two different variants:
one has $\oh{\log(n)}$ adaptive rounds and $\oh{nk\log(n)\log(k)}$ queries;
another one has $\oh{\log(n)\log(k)}$ adaptive rounds and $\oh{n\log(n)\log^2(k)}$ queries.
Compared to our nearly linear algorithms, 
the first variant of \park requires total queries with more than quadratic dependence on $n$; 
and the second variant gets a worse approximation ratio and worse number of queries than our algorithm (\latg) with the same adaptivity.

\textbf{The \iter Algorithm.}
Although the standard greedy algorithm performs arbitrarily
badly for \sm,
{\citeA{Gupta2010a}} showed that multiple repetitions of the
greedy algorithm, combined with an approximation for
the unconstrained maximization problem, yields an approximation
for \sm. %Adaptations of this idea are used in both Sections
%\ref{sec:atg} and~\ref{sec:latg}.
Specifically, {\citeA{Gupta2010a}} provided
the \iter algorithm, which
achieves an approximation ratio of $1/6$
for \sm when
the $1/2$-approximation of {\citeA{Naor2012}} is used
for the unconstrained maximization subproblems.
Our algorithm \algTwofullname uses \threseq combined with
the descending thresholds technique of {\citeA{Badanidiyuru2014}} to
obtain an adaptive version
of \iter, as described
in Section~\ref{sec:latg}. Pseudocode for \iter is given in Appendix~\ref{apx:iter},
where an improved ratio of $\approx$0.193 is proven for this algorithm; we
also prove the ratio of nearly $0.193$ 
for our adaptive algorithm \latg in Section~\ref{sec:latg}.

% \paragraph{Standard Greedy}

% The next two sections present the ingredients
% used by our algorithms in Sections~\ref{sec:atg} and
%~\ref{sec:latg}.
\subsection{Preliminaries} \label{sec:prelim} 
A submodular set function defined on all subsets of ground set 
$\mathcal N$ is denoted by $f$. 
The marginal gain of adding an element $x$ to a set 
$S$ is denoted by $\marge{x}{S} = f( S \cup \{ x \} ) - f(S)$. 
Let $\opt = \max_{|S|\le k}f(S)$\red{, the optimal value of the SMCC 
problem for ground set $\uni$ and size constraint k.}
The restriction
of $f$ to all subsets of a set $S \subseteq \mathcal N$ is denoted by
$f \restriction_{S}$. 
Next, we describe two subproblems both
of our algorithms need to solve: namely, 
unconstrained maximization subproblems and
a threshold sampling subproblem.
For both of these subproblems, procedures with low adaptivity are needed.

\textbf{The Unconstrained Maximization Problem.} 
The first subproblem is unconstrained maximization
of a submodular function.
When the function
$f$ is non-monotone, the problem of maximizing $f$ without
any constraints is NP-hard \cite{Feige2011}.
% showed that a random set yields a $(1/4)$-approximation. This result
% was improved by \shortciteS{Naor2012}, who designed a linear-time,
% $(1/2)$-approximation.
Recently, {\citeA{Chen2018b}} developed an algorithm that achieves
nearly the optimal ratio of $1/2$ with constant adaptivity, as summarized in the
following theorem. 
\begin{theorem}[{\citeA{Chen2018b}}] \label{lemm:unc}
  For each $\epsi > 0$,
  there is an algorithm that
  achieves a $(1/2 - \epsi)$-approximation
  for unconstrained submodular maximization using
  $\oh{\log (1/\epsi ) / \epsi}$ adaptive rounds 
  and $\oh{n \log^3 (1/ \epsi ) / \epsi^4}$ evaluation
  oracle queries.
\end{theorem}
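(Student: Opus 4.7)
My plan is to adapt the sequential randomized double-greedy algorithm of Buchbinder, Feldman, Naor, and Schwartz (BFNS) to a parallel, low-adaptivity regime. Recall that BFNS maintains $X \subseteq Y$ with $X_0 = \emptyset$ and $Y_0 = \mathcal{N}$, processing elements one by one: for element $e_i$ compute $a_i = \max\{0, f_{e_i}(X_{i-1})\}$ and $b_i = \max\{0, -f_{e_i}(Y_{i-1} \setminus \{e_i\})\}$, then add $e_i$ to $X$ with probability $a_i/(a_i+b_i)$, else remove it from $Y$. A standard potential-function argument with $\Phi_i = f((X_i \cup \opt) \cap Y_i)$ yields the $1/2$-approximation in expectation. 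The obstacle to parallelization is that the state evaluated when processing $e_i$ depends on all prior random decisions, so a naive batched version can lose a constant factor.

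My parallel version would proceed in $T = O(\log(1/\epsi)/\epsi)$ adaptive rounds. In round $t$, I would (i) query the current marginals $a_e, b_e$ for all undecided $e \in Y \setminus X$ in one adaptive query; (ii) use a \thresh-style subroutine combined with a geometric search over thresholds to identify a large ``stable'' batch $B$, where stability means that modifying the state by processing any single $e \in B$ changes the marginals of the other $e' \in B$ by at most a $(1+\epsi)$ multiplicative factor; and (iii) simultaneously flip independent Bernoulli coins for each $e \in B$, each with its own probability $a_e/(a_e+b_e)$, then update $X$ and $Y$ accordingly. A pigeonhole argument over $O(\log(1/\epsi)/\epsi)$ threshold buckets guarantees that a constant fraction of the undecided elements are resolved in each round, so all elements are decided within $T$ rounds.

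For the approximation analysis, I would rerun the BFNS potential argument batch by batch. The key lemma is that, for a stable batch $B$, the expected decrease $\ex{\Phi_{t-1} - \Phi_t}$ is at most $(1+O(\epsi))$ times $\tfrac{1}{2}\,\ex{f(X_t) + f(Y_t) - f(X_{t-1}) - f(Y_{t-1})}$, which is the batched analog of the classical BFNS per-element inequality. Summing over $T$ rounds and using that the terminal state satisfies $X_T = Y_T$ yields $\ex{f(X_T)} \geq (1/2 - O(T\epsi))\, f(\opt)$. Rescaling the internal tolerance by $1/T$ then restores the $(1/2 - \epsi)$-ratio in the theorem statement.

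The hardest step is the batched potential lemma, which requires a submodular perturbation bound: if $X$ is modified by flipping a random subset of stable elements, then for every other stable $e$ the quantities $a_e$ and $b_e$ drift by at most a $(1+\epsi)$ factor in expectation. This should follow from pairwise submodularity together with a union bound over the batch, but the stability threshold must be calibrated carefully so that (a) the perturbation bound holds, and (b) large batches still exist at every round. As for the resources, the $T$ rounds and $n$ per-round marginal queries contribute $O(n\log(1/\epsi)/\epsi)$ queries, while the threshold search and the Chernoff-style sampling needed to verify batch stability contribute additional $O(\log^2(1/\epsi)/\epsi^3)$ factors, together matching the claimed $O(n\log^3(1/\epsi)/\epsi^4)$ query complexity and $O(\log(1/\epsi)/\epsi)$ adaptivity.
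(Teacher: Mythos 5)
This statement is an external result that the paper simply cites from Chen, Feldman, and Karbasi and then uses as a black-box subroutine for the unconstrained-maximization subproblems; the paper itself contains no proof of it, so there is nothing in-paper to compare against. Your reconstruction is nonetheless in the right spirit: the cited algorithm is indeed a parallelization of the Buchbinder--Feldman--Naor--Schwartz double greedy that resolves batches of elements per adaptive round and amortizes the potential-function argument over batches.

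There is, however, a genuine gap in the way you define and use stability. Your stability notion is \emph{pairwise}: processing any \emph{single} $e \in B$ perturbs the marginals $a_{e'}, b_{e'}$ of the other batch members by at most a $(1+\epsi)$ multiplicative factor. But the batched potential lemma needs a bound on the potential decrease after the \emph{entire} batch has been flipped simultaneously, and the drift compounds: even if each individual flip costs a $(1+\epsi)$ factor, the cumulative perturbation after flipping a batch of size $|B|$ could be as large as $(1+\epsi)^{|B|}$, which is useless for $|B| = \Omega(n)$. ``Pairwise submodularity plus a union bound over the batch'' does not tame this, because the per-element bounds are multiplicative statements about the evolving state, not additive error terms a union bound can sum. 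The technical heart of the cited result is precisely to control the \emph{aggregate} drift of a simultaneously-flipped batch --- in the actual construction this is done by drawing a random permutation of the undecided elements and searching for a prefix length whose processing leaves later marginals essentially unchanged, rather than by threshold bucketing plus pigeonhole --- and to show that such a stable batch is both large and findable in few adaptive rounds. Your pigeonhole step establishes that some threshold bucket contains many elements, but it does not establish that any such bucket has the aggregate stability your batched potential lemma actually requires. That lemma, which you rightly flag as the crux, does not follow from the ingredients you assemble.
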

\noindent To achieve the approximation factor listed for
our algorithms in
Table~\ref{table:cmp}, the algorithm of {\citeA{Chen2018b}} is employed
for unconstrained maximization subproblems.

% The \iter algorithm works as follows: first, a standard
% greedy procedure is run on the ground set; let $A$ be the
% resulting set of size $k$. Next, a second greedy procedure is run
% after removing $A$ from the ground set; let $B$
% denote the resulting set of the second greedy procedure.
% Next, a procedure for the unconstrained maximization problem
% is run on both $A$,$B$ to produce sets $A'$,$B'$. Finally,
% a set in $\{A,A',B,B'\}$ with the highest $f$ value is returned.
% The adaptivity of IteratedGreedy is at least $2k$ from the
% two standard greedy procedures, and the query complexity is
% at least $2kn$.
\textbf{The \tp Problem.}
The second subproblem is the following:
\begin{definition}[\tp]\label{def:tp}
  Given a threshold $\tau \in \mathbb R$ and integer $k$,
choose a set $S$ such that 1) $f(S) \ge \tau |S|$; 2)
if $|S| < k$, then
for any $x \not \in S$, $\marge{x}{S} < \tau$.
\end{definition}
Algorithms that can use a solution to this
subproblem occur frequently, and so
multiple algorithms in the literature
for this subproblem
have been formulated
\cite{Fahrbach2018,Balkanski2018c,Kazemi2019,amanatidis2021submodular,chen2021best}.
We want a procedure that can solve
$\tp$ with the following three properties:
1) \red{\revtwo{using} $\oh{n}$ queries of the submodular function}; 2) in $\oh{\log n}$
adaptive rounds; 3) the function $f$ is non-monotone.

None of the prior algorithms satisfy our
requirements, since
the procedures in {\citeA{Fahrbach2018,Kazemi2019,chen2021best}} only work when
the submodular function is monotone;
and the two procedures in {\citeA{amanatidis2021submodular}}
have either $\oh{n^2 \log(n)}$ queries or $\oh{\log^2(n)}$ adaptivity.
Moreover, in both {\citeA{Fahrbach2018}}
and {\citeA{amanatidis2021submodular}}, the
procedures for $\tp$ only guarantee
$\ex{ f(S) } \ge \tau |S|$.

\red{In this paper, we propose \threseq, 
an algorithm that makes $\oh{n}$ query calls and has $\oh{\log n}$
adaptive rounds to solve \tp.
%% This algorithm is a non-monotone version of \threseq in \shortciteS{chen2021best}.
However, this algorithm does not exactly solve $\tp$. 
Instead,
it returns two sets that
solve each of the questions in \tp,
which is enough for our algorithms.}
% To apply \threseq in the non-monotone cases, 
% the algorithm returns two result sets: \todo{This needs to be discussed more: the fact that it returns two sets.}
% one is the same as in monotone \threseq;
% another one deletes the elements with negative marginal gains 
% without decreasing the marginal gains of the other elements in the set.
% In brief, \threseq
% ensures the marginal gain of any singleton falls below a
% given threshold $\tau$ on the first result set, while the average contribution of
% elements added in the second result set is roughly $\tau$ 
% with probability $1 - \delta/n$.

\textbf{Organization.}
% Section~\ref{sec:counterexample} goes over the 
% \thresam algorithm in \cite{Fahrbach2018}
% and gives a counterexample in the non-monotone case.
In Section~\ref{sec:ts}, we introduce our threshold sampling
algorithm: \threseq.
Then, in Sections~\ref{sec:atg} and~\ref{sec:latg}, 
we analyze our algorithms
using the \threseq and \unc procedures.
Our empirical evaluation is reported in Section~\ref{sec:exp}
with more discussions in Appendix~\ref{apx:exp}.
%In \shortciteS{Fahrbach2018}, \thresam was used as a key
%ingredient to formulate
%an adaptive $1 - 1/e - \epsi$ approximation for \sm with monotone functions. In \cite{Fahrbach2018a}, \thresam was used
%as the essential component of a $0.039 - \epsi$ approximation
%for \sm, as discussed in Section~\ref{sec:intro}. 

%%% Local Variables:
%%% mode: latex
%%% TeX-master: "main"
%%% End:

\section{The \threseq Algorithm} \label{sec:ts}
In this section, we introduce the linear and highly parallelizable
threshold sampling algorithm
\threseq (Alg.~\ref{alg:threshold}).
\red{\threseq takes as input oracle $f$, constraint $k$, error rate $\epsi$, threshold $\tau$,
and \revtwo{failure probability parameter} $\delta$ which reflects the success probability.}
%This is a modified non-monotone version of \threseq proposed in \shortciteS{chen2021best}.
This algorithm has logarithmic adaptive rounds and linear query calls
with high probability.
Rather than directly solving \tp \red{(Def.~\ref{def:tp})} \red{with one solution set,
it returns two relevant sets that deal with the two properties separately.} 
% it returns two sets $A' \subseteq A$ such
% that the average marginal gain of elements of $A'$
% is exactly larger than the threshold with a small error rate,
% and $\marge{x}{A} < \tau$ for any $x \not \in A$. 
\begin{algorithm}
\caption{\revtwo{A general framework of threshold sampling algorithms}}
\label{alg:ts_framework}
\begin{algorithmic}[1]
	\Procedure{}{$f, \uni, k$}
		\State \textbf{Input:} evaluation oracle $f:2^{\mathcal N} \to \reals$, constraint $k$, error $\epsi$, threshold $\tau$
		\State Initialize $V\gets \uni$, $A\gets \emptyset$ \label{line:ts_framework_init}
		\While{$|V|> 0$}
			\State $V\gets \left\{x \in V: \marge{x}{A} \ge \tau \text{ and }A\cup \{x\} \text{ feasible}\right\}$ \label{line:ts_framework_filter}
			\State $T \gets$ a subset of $V$ \Comment{make a decision on selecting a good subset from $V$}\label{line:ts_framework_decide}
			\State $A\gets A \cup T$\label{line:ts_framework_add}
		\EndWhile
		\State \textbf{return} $A$
	\EndProcedure
\end{algorithmic}
\end{algorithm}

\subsection{Algorithm Overview}
\red{The state-of-the-art threshold sampling algorithms, 
whether for monotone or non-monotone functions, 
share a common structure \revtwo{(Alg.~\ref{alg:ts_framework})} that works as follows:
1) The algorithm initializes a candidate set $V$ with the whole ground set $\uni$ and
an empty solution set $A$ \revtwo{(Line~\ref{line:ts_framework_init})};
2) During each iteration, it filters out elements in the candidate set $V$ 
that either make negligible contributions to
$A$ or violate the given constraint,
and then selects a prefix of $V$ to add to $A$ \revtwo{(Line~\ref{line:ts_framework_filter}-\ref{line:ts_framework_add})};
3) Then, the algorithm repeats the last step until
the candidate set $V$ is empty.
% or the solution $A$ will not satisfy the given constraint
% after adding any element in $V$.
The difference between those algorithms lies in how they select the prefix in Step (2) \revtwo{on Line~\ref{line:ts_framework_decide}}.
% Different threshold sampling algorithms have different selection strategies.
\thresam in~\citeA{Fahrbach2018} applies a random sampling procedure for
each prefix considered at that iteration.
Threshold sampling algorithms in~\citeA{Balkanski2018c} and~\citeA{amanatidis2021submodular}
explicitly check all the candidate elements for a given prefix.
Later, ~\citeA{Kazemi2019} and~\citeA{chen2021best} proposed threshold sampling algorithms
that work by performing a uniformly random permutation of elements
and making the decision after querying once for each prefix.
This makes them comparably much more practical in performance
and demonstrates that multiple query calls of a given prefix are redundant.
Subsequently, we are able to keep a solution with the same threshold and 
fewer query calls.}

\red{To efficiently obtain large sequences of elements with gains above
$\tau$, an approach inspired by monotone threshold sampling algorithms
in\revtwo{~\citeA{Kazemi2019} and~\citeA{chen2021best}} is proposed.
As discussed above,}
these algorithms work by adaptively adding sequences of elements
to a set $A$,
where the sequence has been checked in parallel
to have at most an $\epsi$ fraction of the sequence
failing the marginal gain condition.
A uniformly 
random permutation of elements is considered,
% Benefit from the random permutation procedure,
% these algorithms not only maintain a solution set
% with a deterministic bound of average marginal gain,
% but they can also detect when 
where the average marginal
gain being below $\tau$ is detected by a high proportion
of failures in the sequence.
\red{This step leads to a constant fraction
of elements being filtered out at the next iteration
with high probability.
When combined with an exponentially decreasing candidate set 
and a constant number of adaptive rounds for each iteration, 
these algorithms achieve logarithmic adaptivity and linear query complexity.}

\begin{algorithm}[h]
	\caption{A parallelizable threshold algorithm for threshold $\tau$}
	\label{alg:threshold}
	\begin{algorithmic}[1]
	\Procedure{\threseq}{$f, \mathcal N, k, \delta, \epsi, \tau$}
	\State \textbf{Input:} evaluation oracle $f:2^{\mathcal N} \to \reals$, constraint $k$, \revtwo{failure probability parameter} $\delta$, error $\epsi$, threshold $\tau$
	\State Initialize $A \gets \emptyset$, $A' \gets \emptyset$, $V \gets \mathcal N$, 
	$\ell = \lceil 4\left(\frac{2}{\epsi}\log(n)+\log\left(\frac{n}{\delta}\right)\right) \rceil$ 
	\For{ $j \gets 1$ to $\ell$}  \Comment{Sequential \textbf{for} loop}
		\State Update $V \gets \{ x \in V : \marge{x}{A} \ge \tau \}$ \label{line:threshold-filtering} \Comment{Filtering step w.r.t. $A$}
		\If{ $|V| = 0$ } 
			\State \textbf{return} $A,A'$ \label{line:tsreturn}
		\EndIf
		\State $V \gets$ \textbf{random-permutation}$(V)$ \label{line:threshold-permute}
		\State $s \gets \min \{k-|A|, |V|\}$
		\State $B[1:s] \gets [\textbf{none},\cdots,\textbf{none}]$
		\For{$i \gets 1 $ to $s$ in parallel} \Comment{Parallel gain computation}
			\State $T_{i-1} \gets \{v_1, v_2, \ldots, v_{i-1}\}$ 
			\State \textbf{if} $ \marge{v_i}{A\cup T_{i-1}} \geq  \tau $
				\textbf{then} $B[i] \gets \textbf{true}$ \label{line:threshold-if}
			\State \textbf{elif} $ \marge{v_i}{A\cup T_{i-1}} <  0$
				\textbf{then} $B[i] \gets \textbf{false}$
		\EndFor
		\State $i^* \gets \max\{i:\# \text{\textbf{true}s in } B[1:i] \ge (1-\epsi)i\}$ \Comment{Detection of good filtering next iteration} \label{line:threshold-select_of_istar}
		\State $A \gets A \cup T_{i^*}$ \Comment{$A$ gets all elements}
		\State \revtwo{$A' \gets A' \cup \{V[i]: 1\le i \le i^*, B[i] \neq \textbf{false}\} $} \label{line:threshold-Aprime} \Comment{$A'$ only gets nonnegative-gain elements}
		\If{ $|A| = k$ }
			\State \textbf{return} $A,A'$ 
		\EndIf
	\EndFor
	\State \textbf{return} \textit{failure}
	\EndProcedure
\end{algorithmic}
\end{algorithm}

The intuitive reason why this does not directly work for non-monotone
functions (\ie $A$ is not
a solution to \tp (Def.~\ref{def:tp})) is: 
% the same reason 
% why \thresam of \citeA{Fahrbach2018,Fahrbach2018a} fails:
if one of the elements added fails the marginal gain
condition, it may do so arbitrarily badly and have a large
negative marginal gain.
Moreover, one cannot
simply exclude such elements from consideration, because they
are needed to ensure that the filtering step at the next iteration will
discard a large enough fraction of elements.
\red{Deleting such elements requires recalculating the marginal gains 
with respect to the updated sets, 
which increases the number of adaptive rounds required in each iteration by a factor of \revtwo{$\oh{k}$}.}
Our solution is to
keep these elements in the set $A$ which is used for filtering
\red{and responsible for Property (2) of \tp (Def.~\ref{def:tp})},
but only include those elements
with a nonnegative marginal gain in the candidate solution set
$A'$\red{, which is responsible for Property (1) of \tp (Def.~\ref{def:tp})}. 
The membership of $A'$ is known since the gain of every element
was computed in parallel. Moreover, $|A'| \ge (1 - \epsi)|A|$
gives the needed relationship on the average marginal gain of each element of $A'$.
\red{Due to submodularity,
the objective value \revtwo{does not decrease} when we 
exclude elements with negative marginal gains.}

\red{\textbf{Discussion of $\delta$.}
Different from other threshold sampling algorithms,
\threseq incorporates an additional input parameter, $\delta$.
This parameter reflects the number of iterations
in the outer for loop,
or specifically the adaptive rounds achieved by the algorithm. 
As the algorithm progresses and more elements are added to the solution set, the size of A  increases while the size of V decreases.
Then, the algorithm stops successfully once $|A| = k$ or $|V| = 0$.
The more iterations, the more likely it is to succeed.
% So, the value of $\delta$
% influences whether the algorithm terminate successfully.
% More precisely, it can revise the success probability of the algorithm.
\revtwo{Intuitively, the higher $\delta$ is, the lower is the probability of \threseq choosing a subset that improves on costs and satisfies the constraint.
As stated in Theorem~\ref{thm:threshold} Property (1), \threseq succeeds with a probability greater than $1-\delta/n$.}
% Using the probability that an iteration filters out a constant fraction
% of elements and Chernoff bounds, we prove that the success probability is 
% larger than $1-\delta/n$.
For downstream approximation algorithms that \revtwo{use} \threseq as a subroutine
with a specific $\delta$ value,
the more calls made to \threseq,
the lower success probability it achieves.
% Setting $\delta=1$ results in a relatively small success probability of $1-1/n$
% for a single run of \threseq.
% Since thresholding algorithm itself does not provide an approximation ratio of \sm, 
% it is always used as a subroutine.
% With the same $\delta$ value, the success probability of the main algorithm 
% decreases as the number of calls to \threseq increases.
The adoption of $\delta$ makes such probability manageable.}

% \threseq has two nested for loops.
% The outer for loop updates the candidate set $V$ by 
% filtering out elements with small marginal gains,
% randomly shuffles the elements in $V$ before selection,
% and selects a subset in sequence from $V$ to 
% the solution candidate set $A$,
% while the inner for loop figures out the prefix of the subset.
% At last, the solution set $A'$ is selected from $A$.

% In detail, \threseq works as follows.
% First, we filter out the elements in the candidate set $V$
% with marginal gain 
% less than $\tau$ on current $A$.
% Then, a random permutation of $V$ is returned.
% After that, we calculate the marginal gain of element in $V$ 
% based on the current $A$ and the elements before it.
% We call there is a \textbf{true}
% if its marginal gain is larger than the threshold $\tau$,
% or \textbf{false} if it is below 0, or \textbf{none} neither.
% Next, we select the prefix $i^*$ which is the largest $i$ follows that the
% number of \textbf{true}s in the first $i$ elements is more than $(1-\epsi)i$.
% At last, we add the subset with prefix $i^*$ to $A$,
% and delete the \textbf{false} ones as $A'$.
% With the random permutation step and the prefix selection step,
% we can ensure that the subset returned has more than $(1-\epsi)i^*$
% \textbf{true}s and
% we can filter out a constant fraction of elements in candidate set $V$
% with a constant probability.
% By deleting the \textbf{false} ones, the average marginal
% gain of the solution set $A'$ is always larger than $(1-\epsi)\tau$.
\subsection{Theoretical Guarantees}
\begin{theorem} \label{thm:threshold}
Let $(f,k)$ be an instance of \sm . For any constant $\epsi$, 
the algorithm \threseq outputs $A'\subseteq A \subseteq \mathcal{N}$ such that the following properties hold:
\revtwo{\begin{itemize}
\item[1)] The algorithm succeeds with probability at least $1 - \delta/n$.
\item[2)] There are $\oh{n/\epsi}$ oracle queries in expectation and $\oh{\log (n/\delta)/\epsi}$ adaptive rounds.
% \red{3) It holds that $f(A') \ge (1-\epsi)\tau |A|$, and $f(A')\ge f(A)$.
% If $|A| < k$, then $\marge{x}{A} < \tau$ for all $x\in \mathcal{N}$.}
\item[3)] It holds that $f(A') \ge (1-\epsi)\tau |A|$.
If $|A| < k$, then $\marge{x}{A} < \tau$ for all $x\in \mathcal{N}$.
\item[4)] It also holds that $f(A')\ge f(A)$ and $|A'|\ge (1-\epsi)|A|$
\end{itemize}}
\end{theorem}
\red{The performance of \threseq is derived mainly by answering two questions:} 
1) if a constant fraction of elements can be filtered out
at any iteration with a high probability;
2) if the two sets returned solve \tp (Def.~\ref{def:tp}) indirectly.
In Lemma~\ref{lemma:ThresholdFilterSet} below,
it is certified that the number of elements being deleted
in the next iteration monotonically increases from 0 to $|V|$ 
as the size of the selected set increases.
Then, by probability lemma and concentration bounds \red{(in Appendix~\ref{apx:prob})},
Lemma~\ref{lemma:ThresholdProb} answers the first question.
\begin{restatable}{lemma}{ThresholdFilterSet}
	\label{lemma:ThresholdFilterSet}
    \red{Given $V$} after \textbf{random-permutation} on Line~\ref{line:threshold-permute},
    let $S_i=\{x \in V: \marge{x}{A\cup T_i} < \tau\}$.
    It holds that $|S_0|=0$, $|S_{|V|}|=|V|$, and $|S_{i-1}| \le |S_i|$.
\end{restatable}
\begin{restatable}{lemma}{ThresholdProb} 
	\label{lemma:ThresholdProb}
	It holds that $\prob{i^*<\min\{s, t\}} \le 1/2$.
\end{restatable}
Furthermore, with enough iterations, the candidate set $V$ becomes
empty at some point with a high probability.
Also, since the size of the candidate set $|V|$ exponentially decreases,
intuitively, the total queries is linear in expectation. 

A downside of this bifurcated approach is that a downstream algorithm
receives two sets $A, A'$ instead of one from \threseq.
It is obvious that the second property of \tp \red{(Def.~\ref{def:tp})} holds naturally with set $A$.
\red{Lemma~\ref{lemma:ThresholdGood} below shows how we can relate set $A'$ with set $A$.
Therefore,} by discarding the elements with negative gains in $A$,
the gains of the rest elements\red{, denoted by $A'$,} 
increase and follow the first property of \tp \red{(Def.~\ref{def:tp})}.
\begin{restatable}{lemma}{ThresholdGood}
	\label{lemma:ThresholdGood}
	Say an element added to the solution set is good if its gain is greater than $\tau$.
	\red{Suppose that Algorithm~\ref{alg:threshold} terminates successfully.}
	$A$ and $A'$ returned by Algorithm~\ref{alg:threshold} hold the following properties: 
	\revtwo{\begin{itemize}
	\item[1)] There are at least $(1-\epsi)$-fraction of $A$ that is good.
	\item[2)] A good element in $A$ is always a good element in $A'$. 
	\item[3)] And, any element in $A'$ has non-negative marginal gain when added.
	\end{itemize}}
\end{restatable}

\red{The proofs of the lemmas above can be found in Appendix~\ref{apx:threseq}.}
Now, we provide the proof concerning the performance of \threseq.

% A downside of this bifurcated approach is that a downstream algorithm
% receives two sets $A, A'$ instead of one from \threseq and must be able to handle
% the fact that the gain of an element to the solution $A'$ may be greater
% than $\tau$. Fortunately, our approximation algorithms below can easily
% handle this restriction. 

% \textbf{Overview of Proof.}
% The proof of this theorem mainly focuses on two questions: 
% 1) if a constant fraction of elements can be filtered out
% at any iteration with a high probability;
% 2) if the two sets returned solve \tp indirectly.
% In Lemma~\ref{lemma:ThresholdFilterSet},
% it is certified that the number of elements being deleted
% in the next iteration monotonically increases from 0 to $|V|$ 
% as the size of the selected set increases.
% Then, by probability lemma and concentration bounds \red{(in Appendix~\ref{apx:prob})},
% Lemma~\ref{lemma:ThresholdProb} answers the first question.
% Furthermore, with enough iterations, the candidate set $V$ becomes
% empty at some point with a high probability.
% Also, since the size of the candidate set $|V|$ exponentially decreases,
% intuitively, with logarithmic iterations, the total queries is linear. 
% As for \tp, it is obvious that the second property holds with set $A$;
% and, by discarding the elements with negative gains in $A$,
% the gains of the rest elements increase and follow the first property of \tp.

\begin{proof}[Proof of Success Probability (Property 1)]
	% \textbf{Success Probability.}
	The algorithm succeeds if $|V|=0$ or $|A|=k$ at termination.
	If we can filter out a constant fraction of $V$ or select
	a subset with $k-|A|$ elements at any iteration with a 
	constant probability, then, with enough iterations,
	the algorithm successfully terminates with a high probability.
	% The proofs of lemmas in this section are given in Appendix~\ref{apx:threseq}.
	% \begin{restatable}{lemma}{ThresholdFilterSet}
	% 	\label{lemma:ThresholdFilterSet}
    %     \red{Given $V$} after \textbf{random-permutation} on Line~\ref{line:threshold-permute},
    %     let $S_i=\{x \in V: \marge{x}{A\cup T_i} < \tau\}$.
    %     It holds that $|S_0|=0$, $|S_{|V|}|=|V|$, and $|S_{i-1}| \le |S_i|$.
	% \end{restatable}

	From Lemma~\ref{lemma:ThresholdFilterSet},
	there exists a point $t$ such that
	$t = \min \{i: |S_i| \ge \epsi |V|/2\}$,
	where the next iteration filters out more than
	$\epsi/2$-fraction of elements if $i^* \ge t$.
	% Next, we prove that, at each iteration, 
	% there is more than 1/2 probability
	% either an $\epsi/2$-fraction of $V$ is filtered out,
	% or the algorithm terminates at this iteration.
	% While the number of elements been deleted in the next
	% iteration increases with the size of the selected set increasing,
	% the portion of trues in $B$ decreases.
	Intuitively, when $i \le t$, there is a \red{constant}
	probability that
	the \red{fraction} of \textbf{true}s in $B[1:i]$
	exceeds $1-\epsi$.
	\red{According to Lemma~\ref{lemma:ThresholdFilterSet}, 
	Lemma~\ref{lemma:ThresholdProb} is provided to give the probability
	that whether $|A| = k$ or $\epsi/2-$fraction of $V$ are filtered out
	at the next iteration.}
	% \begin{restatable}{lemma}{ThresholdProb} 
	% 	\label{lemma:ThresholdProb}
	% 	It holds that $\prob{i^*<\min\{s, t\}} \le 1/2$.
	% \end{restatable}
	% \textbf{Overview of proof.}
	% If $i^*<\min\{s, t\}$, it always holds that,
	% for $i' = \min\{s, t\}$, there are more than $\epsi i'$
	% bad elements in $T_{i'}$.
	% Since $i' \le t$, the probability that an element in $T_{i'}$
	% is bad is less than $\epsi/2$.
	% By Lemma~\ref{lemma:indep2}, Law of Total Probability, 
	% and Markov's Inequality, the probability that $T_{i'}$
	% has more than $\epsi i'$ bad elements is less than 1/2.
	% Thus, the probability of $i^*<\min\{s, t\}$ is less than 1/2.

	% Alg.~\ref{alg:threshold} successfully terminates once
	% $|V| = 0$ or $|A| = k$.
	% If either case happens with a high probability,
	% Alg.~\ref{alg:threshold} succeeds with a high probability.
	% Suppose the algorithm does not stop when $|V| = 0$.
	\red{For the purposes of the analysis, consider
		a version of the algorithm that does not break on Line~\ref{line:tsreturn} when $|V| = 0$.}
	If so, \red{in subsequent iterations following $|V| = 0$},
	it is always the case that $s = 0$ and $T_{i^*}=\emptyset$.
	Lemma~\ref{lemma:ThresholdProb} still holds in this case.
	\red{As a result, the algorithm returns the same solution set as
	the original one.}

	% If there are at least $m=\lceil\log_{1-\epsi/2}(1/n) \rceil$ 
	% iterations that $i^* \ge \min\{s, t\}$, the algorithm terminate successfully.
	% Define such iteration as a successful iteration.
	% Then, the number of successful iterations is a sum of dependent
	% Bernoulli random variables. 
	% With probability lemma and Chernoff bounds,
	% the algorithms is proven to succeed with probability at least
	% $1-\delta/n$ in Appendix~\ref{apx:threseq}.
	\red{When the algorithm fails to terminate, at each iteration,
	    it always holds that $i^* < s$;
	    and there are no more than $m=\lceil\log_{1-\epsi/2}(1/n) \rceil$ 
	    iterations that $i^* \ge t$.
	    Therefore, there are no more than $m$ iterations that
	    $i^* \ge \min\{s,t\}$.
	    Otherwise, with more than $m$ iterations that
	    $i^* \ge \min\{s,t\}$, 
	    if there is an iteration that
	    $s \le t$, the algorithm terminates with $|A| = k$.
	    Otherwise, with more than $m$ iterations that 
	    $i^* \ge t$, the algorithm terminates with $|V| = 0$.
	    Define a \textit{successful iteration} as an iteration that
	    $i^* \ge \min\{s,t\}$, which means it successfully filters out 
	    $\epsi/2$-fraction of $V$ or the algorithm stops here.
	    Let $X$ be the number of successes in the $\ell$ iterations.
	    Then, $X$ can be regarded as a sum of dependent Bernoulli trails,
	    where the success probability is larger than 1/2 
	    from Lemma~\ref{lemma:ThresholdProb}.
	    Let $Y$ be a sum of independent Bernoulli trials,
	    where the success probability is equal to 1/2.
	    Then, the probability of failure can be bounded as follows,
	    \begin{align*} 
	        \prob{\text{failure}} &\le \prob{X \le m} 
	        \overset{(a)}{\le} \prob{Y \le m} 
	        \le \prob{Y \le 2\log(n)/\epsi}\\ 
	        &\overset{(b)}{\le} e^{-\left(\frac{\log(n)}{2\log(n)+
	        \epsi \log\left(\frac{n}{\delta}\right)}-1\right)^2
	        \cdot \left(\frac{2}{\epsi}\log(n)+\log\left(\frac{n}{\delta}\right)\right)}
	        % &=e^{-\log\left(\frac{n}{\delta}\right)-
	        % \frac{\log^2(n)}{\epsi^2 \log\left(\frac{n}{\delta}\right) 
	        % +2\epsi \log(n)}}
	        \le \frac{\delta}{n},
	    \end{align*}
	    where Inequality (a) follows from Lemma~\ref{lemma:indep},
	    and Inequality (b) follows from Lemma~\ref{lemma:chernoff}.}
	\end{proof}

	\begin{proof}[Proof of Adaptivity and Query Complexity (Property 2)]
	% \textbf{Adaptivity and Query Complexity.}
	In Alg.~\ref{alg:threshold}, the oracle queries occur on 
	Line~\ref{line:threshold-filtering} and~\ref{line:threshold-if}.
	Since filtering and inner \textbf{for} loop can be done in parallel, 
	there are constant adaptive rounds in an iteration.
	Therefore, the adaptivity is $\oh{\ell} = \oh{\log(n/\delta)/\epsi}$.
	
	As for the query complexity, 
	let $V_j$ be the set $V$ after filtering on Line~\ref{line:threshold-filtering}
	in iteration $j$.
	% There are $|V_{j-1}|+1$ and $|V_{j}|+1$ query calls on 
	% Line~\ref{line:threshold-filtering} and~\ref{line:threshold-if},
	% respectively.
	% Suppose the number of
	% iterations that successfully filter out more than 
	% $\epsi/2$-fraction of $V$ equals $i$
	% before current iteration $j$.
	% The size of $|V_j|$ can be bounded by $n{(1-\epsi/2)}^{i}$.
	% We show that the expected total queries are $\oh{n/\epsi}$ 
	% in Appendix~\ref{apx:threseq}.
	\red{Let $j_{i}$ be the $i$-th successful iterations,
	    $Y_i=j_{i}-j_{i-1}$.
	    By Lemma~\ref{lemma:indep} in Appendix~\ref{apx:prob}, it holds that $\ex{Y_i} \le 2$.
	    For any iteration $j$ that $j_{i-1}+1 \le j \le j_i$,
	    there are $i-1$ successes before it.
	    Thus, it holds that $|V_j| \le n(1-\epsi/2)^{i-1}$.}
	
    \red{At any iteration $j$, there are $|V_{j-1}|+1$ oracle queries on Line~\ref{line:threshold-filtering}.
    As for the inner \textbf{for} loop, there are no more than $|V_j|+1$ oracle queries.
    The expected number of total queries can be bounded as follows:
    \begin{align*}
        \ex{\text{Queries}} &\le \sum_{j=1}^{\ell} \ex{|V_{j-1}|+|V_j|+2}
        \le n+2\ell+\sum_{j=1}^{\ell} 2\ex{|V_j|}\\
        &\le n+2\ell+\sum_{i\ge 1} 2\ex{Y_i\cdot n(1-\epsi/2)^{i-1}}
        \le n+2\ell+4n/\epsi.
    \end{align*}
    Therefore, the total queries are $\oh{n/\epsi}$\red{, where $\epsi \in (0,1)$}.}
	\end{proof}

	\begin{proof}[Proof of Marginal Gains (Property 3 and 4)]
	% \textbf{Objective Values and Marginal Gains.}
	\red{The algorithm terminates successfully if either 
		$| V | = 0$ or $| A | = k$ during its execution. 
		As proved above, this happens with a probability of as least $1 -\delta/n$. 
		In the proof below, we condition on the event that the algorithm terminates successfully and returns $A$, $A'$.}

	\red{If the algorithm returns $A$ such that $| A | < k$, then it must be the case that the algorithm terminates with $| V | = 0$.}
	% If $|A| < k$, it holds \red{with high probability} that the algorithm terminates with $|V| = 0$.
	So, for any $x \in \mathcal{N}$,
	there exists an iteration $j_{(x)}+1$ such that
	$x$ is filtered out at iteration $j_{(x)}+1$.
	\red{Let $A_{j_{(x)}}$ be A after iteration $j_{(x)}$.}
	Then, due to submodularity \red{and $A_{j_{(x)}}\subseteq A$, 
	it holds that
	\[\marge{x}{A} \le \marge{x}{A_{j_{(x)}}}<\tau.\]}
	% By calculating the gain for each element in the candidate set $V$,
	% the two sets returned by \threseq have the following good properties
	% \begin{restatable}{lemma}{ThresholdGood}
	% 	\label{lemma:ThresholdGood}
	% 	Say an element added to the solution set is good if its gain is greater than $\tau$.
	% 	$A$ and $A'$ returned by Algorithm~\ref{alg:threshold} hold the following properties: 
	% 	1) There are at least $(1-\epsi)$-fraction of $A$ that is good.
	% 	2) A good element in $A$ is always a good element in $A'$. 
	% 	3) And, any element in $A'$ has non-negative marginal gain when added.
	% \end{restatable}
	\red{Lemma~\ref{lemma:ThresholdGood} applies to any case in which
	the algorithm terminates successfully.
	As a reminder, an added element is considered good 
	if its gain is greater than $\tau$ with respect to the solution
	prior to its inclusion.
	As per Line~\ref{line:threshold-Aprime},
	$A'$ includes all such good elements that are in $A$.
	Based on Property 1 of Lemma~\ref{lemma:ThresholdGood}, 
	it is guaranteed that 
	the number of good elements in $A'$ is more than $(1-\epsi)|A|$.
	Hence, we have
	$|A'|\ge (1-\epsi)|A|$.}

	\red{Furthermore, 
	due to the diminishing returns property of submodular functions,
	removing an element from a set in a sequence will result in 
	non-increasing marginal gains for the remaining elements.}
	% By Property (1) and (2), at least $(1-\epsi)|A|$ elements in $A'$ are good,
	% and the rest of $A'$ has non-negative marginal gains by Property (3).
	% Therefore, $f(A')\ge (1-\epsi)\tau |A|$.
	For any $x \in A$, let $A_{(x)}$ be a subsequence of $A$ before
	$x$ is added into $A$. Define $A_{(x)}'$ analogously.
	\red{Then, consider any $x \in A$, if $x\not \in A'$,
	it implies that $\marge{x}{A_{(x)}} < 0$;
	if $x \in A'$, it holds that $\marge{x}{A_{(x)}'}\ge \marge{x}{A_{(x)}}$.
	Therefore,
	\[f(A') = \sum_{x \in A'}\marge{x}{A_{(x)}'}
	\ge \sum_{x \in A'}\marge{x}{A_{(x)}}
	+\sum_{x \in A\backslash A'}\marge{x}{A_{(x)}}
	\ge f(A).\]}

	\red{By Property 2 and 3 of Lemma~\ref{lemma:ThresholdGood}, 
	if an element $x$ in $A$ is good,
	it holds that $\marge{x}{A_{(x)}'} \ge \tau$;
	if not, 
	it holds that $\marge{x}{A_{(x)}'} \ge 0$.
	Then,}
	\[f(A')= \sum_{x \in A'}\marge{x}{A_{(x)}'}
	\ge \sum_{x \in A', x \text{ is good}}\marge{x}{A_{(x)}'}
	% \ge \sum_{x \in A', x \text{ is good}}\marge{x}{A_{(x)}}
	\ge (1-\epsi)\tau |A|.\]

\end{proof}

%%% Local Variables:
%%% mode: latex
%%% TeX-master: "main"
%%% End:
\section{The \algOnefullname Algorithm} \label{sec:atg}
\begin{algorithm}[t]
    \caption{The \adapst Algorithm}
    \label{alg:atg}
    \begin{algorithmic}[1]
      \Procedure{AST}{$f, \mathcal{N}, k, \epsi$}
      \State \textbf{Input:} evaluation oracle $f:2^{\mathcal N} \to \reals$, constraint $k$,
      accuracy parameter $\epsi > 0$
      \State Initialize $M \gets \max_{x \in \mathcal N} f(x)$; 
      $c \gets 4 + \alpha$, where $\alpha^{-1}$ is ratio of \unc;
      $\ell \gets \lceil\log_{1 - \epsi}(1/(ck)))\rceil$
      \For{ $i \gets 0$ to $\ell$ in parallel}\label{line:forconcurrent}
      \State $\tau_i \gets M \left( 1 - \epsi \right)^i$
      \State $A_i,A_i' \gets \threseq \left( f, k, \tau_i, \epsi, 1 / 2 \right) $
      \State $B_i,B_i' \gets \threseq \left( f\restriction_{\mathcal N\setminus A_i}, k, \tau_i, \epsi, 1 / 2 \right)$
      \State $A_i'' \gets \unc (A_i)$
      \State $C_i \gets \argmax \{ f(A_i'), f(B_i'), f(A_i'') \}$
      \EndFor
      \State \textbf{return} $C \gets \argmax_i \{ f(C_i) \}$
      \EndProcedure
\end{algorithmic}
\end{algorithm}
In this section, we present the simple algorithm \algOnefullname 
(\atg, Alg.~\ref{alg:atg}) and show it obtains \red{an approximation} ratio of $1/6 - \epsi$ 
with nearly optimal query and adaptivive complexity.
This algorithm relies on running \threseq for a suitably chosen
threshold value. A procedure for
unconstrained maximization is also required.
% which uses as subroutines \thresam and an $(1/\alpha )$-approximation 
%algorithm \unc for the unconstrained maximization problem.

\textbf{Overview of Algorithm.}
Algorithm \atg works as follows.
First, the \textbf{for} loop 
guesses a value of $\tau$ close to
$\frac{\opt}{(4 + \alpha)k}$, where 
$1/\alpha$ is the ratio of the algorithm used for the unconstrained maximization
subproblem. 
Next, \threseq is called with parameter $\tau$ to yield set $A$ and $A'$;
followed by a second call to \threseq with $f$ restricted to $\mathcal N \setminus A$
to yield set $B$ and $B'$. Next, an unconstrained
maximization is performed with $f$ restricted to $A$ to yield
set $A''$.
Finally, the best of the three candidate sets $A',B',A''$ is returned.

% Our algorithm \atg may be regarded as a simplification
% of the algorithm \iter of \cite{Gupta2010}, with the
% standard greedy procedure of \cite{Gupta2010} replaced
% by an adaptive threshold procedure and $O(\log k)$ guesses
% for \opt. It is surprising
% that this simplification can achieve nearly the same
% ratio of $1/6$ as \iter.
%We remark that \thresam could be replaced with any
%procedure that guarantees the average marginal
%gain of each addition is at least $\tau$.
% \todo{finish editing}
% The \thresam algorithm %of \cite{Fahrbach2018} 
% is used
% to adaptively add elements with gain at least $\tau$.
% The $(1/\alpha )$-approximation algorithm for \unc is called upon $A$. 
% The best solution $C_i$ from the two threshold procedures and 
% the \unc algorithm is obtained, and the algorithm returns the
% best $C_i$ found in any iteration of the \textbf{for} loop.
% Pseudocode for \atg is provided in Algorithm~\ref{alg:atg}. 
We prove the following theorem concerning the performance of \atg.

\begin{theorem}
  \label{thm:atg}
  % \begin{theorem}
  Suppose there exists an $(1/\alpha )$-approximation for
  \unc with adaptivity \red{$\Phi$} and query complexity
  $\Xi$, and let $\epsi > 0$.
  Then there exists an algorithm for \sm with
  expected approximation ratio $\frac{1}{4+\alpha}-\epsi$ with probability
  at least $1 - 1/n$, expected query complexity
  $\oh{ \log_{1 - \epsi}(1/k) \cdot \left( n / \epsi + \Xi \right) }$,
  and adaptivity $\oh{\log( n) / \epsi + \red{\Phi} }$.
\end{theorem}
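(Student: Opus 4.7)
The plan is to establish the three bounds separately. The query-complexity and adaptivity bounds are immediate from the pseudocode: the outer \textbf{for} loop has $O(\log_{1-\epsi}(1/((4+\alpha)k))) = O(\log k/\epsi)$ iterations run in parallel, each performing two \thresh calls (contributing $O(n/\epsi)$ expected queries and $O(\log(n/\delta)/\epsi)$ adaptive rounds by Lemma~\ref{lemm:thresh}) and one \unc call ($\Xi$ queries, $\Theta$ rounds). Parallelizing the outer loop hides its iteration count from the adaptivity. A union bound over the $2\cdot O(\log k/\epsi)$ invocations of \thresh, each run with failure probability $\delta/2$, controls the overall failure probability after rescaling $\delta$.

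For the approximation ratio, set $c := 4 + \alpha$, fix an optimal solution $O$ with $\opt := f(O)$, and let $i^*$ be the iteration whose threshold satisfies $(1-\epsi)\,\opt/(ck) \le \tau_{i^*} \le \opt/(ck)$. Such an $i^*$ lies in the search range because $M \le \opt \le kM$ follows from submodularity and nonnegativity together with the definition of $M$, and the geometric grid $\tau_i = M(1-\epsi)^i$ covers this range. Abbreviate $A := A_{i^*}$, $B := B_{i^*}$, $A' := A'_{i^*}$, and condition on the event of probability at least $1-\delta$ that both \thresh calls at iteration $i^*$ succeed.

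The ratio argument proceeds by cases on $|A|$ and $|B|$. If $|A|=k$, summing the per-step marginal bound of property~2 of Lemma~\ref{lemm:thresh} over a uniformly random permutation of $A$ yields $f(A) \ge (1-\epsi)\tau_{i^*} k \ge (1-\epsi)^2 \opt/c$, which already matches the claimed ratio up to $O(\epsi)$; the symmetric argument handles $|A|<k$, $|B|=k$ via $f(B)$. The main case is $|A|,|B|<k$, where property~3 of Lemma~\ref{lemm:thresh} gives $f_x(A) < \tau_{i^*}$ for all $x \notin A$ and $f_x(B) < \tau_{i^*}$ for all $x \in \mathcal{N}\setminus(A\cup B)$. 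I combine four inequalities derived from submodularity together with the disjointness $A \cap B = \emptyset$ (enforced by running the second \thresh on $\mathcal{N}\setminus A$): (i) $f(A\cup O) \le f(A) + k\tau_{i^*}$, by telescoping the at most $k$ marginals $f_x(A)$ for $x \in O\setminus A$; (ii) $f(B\cup(O\setminus A)) \le f(B) + k\tau_{i^*}$, analogously, since $O\setminus A \subseteq \mathcal{N}\setminus A$; (iii) $f(B\cup O) \le f(B\cup(O\setminus A)) + f(O\cap A)$, from submodularity applied to $(B\cup(O\setminus A),\, O\cap A)$, whose intersection is empty; and (iv) $f(O) \le f(A\cup O) + f(B\cup O)$, from submodularity applied to $(A\cup O,\, B\cup O)$, whose intersection equals $O$, after dropping the nonnegative $f(A\cup B\cup O)$ term.

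Chaining (i)--(iv) with the unconstrained-maximization guarantee $f(O\cap A) \le \alpha f(A')$ (which uses $O \cap A \subseteq A$) yields
\[
\opt \;\le\; f(A) + f(B) + \alpha f(A') + 2k\tau_{i^*} \;\le\; (2+\alpha)\max\{f(A),f(B),f(A')\} + \tfrac{2}{c}\opt,
\]
from which $\max\{f(A),f(B),f(A')\} \ge \opt/(4+\alpha)$ after rearranging with $c = 4 + \alpha$; taking the best $C_i$ over the outer loop only improves this, and rescaling $\epsi$ absorbs the $(1-\epsi)^2$ loss from Cases~1/2 and the grid discretization. The main obstacle is inequality~(iv): in the non-monotone setting $f(O)$ need not be dominated by $f(A\cup O)$ alone, so covering $O$ requires combining both $A\cup O$ and $B\cup O$, and the crucial identity $(A\cup O)\cap(B\cup O)=O$ is precisely what makes the built-in disjointness $A\cap B=\emptyset$ essential to closing the analysis.
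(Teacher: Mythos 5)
Your proof is correct and takes essentially the same approach as the paper: the same geometric grid of thresholds, the same case split on $|A|,|B|$ vs.\ $k$, the same use of the \thresh marginal bounds and the disjointness $A\cap B=\emptyset$, and the same role for the unconstrained-maximization call. The only difference is a cosmetic reshuffling of the submodularity steps in the main case — you go through $\opt \le f(A\cup O)+f(B\cup O)$ and then split $f(B\cup O)$, while the paper goes through $\opt \le f(O\cap A)+f(O\setminus A)$ and separately bounds $f(O\setminus A)$ via $f(O\cup A)+f((O\setminus A)\cup B)$ — but both chains reduce to the identical inequality $\opt \le f(A)+f(B)+\alpha f(A')+2\opt/c$ with $c=4+\alpha$.
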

\noindent If the algorithm of \citeA{Chen2018b} is used for \unc,
\atg achieves ratio $1/6 - \epsi$
with adaptivive complexity $ \oh{\log( n) / \epsi + \log(1 / \epsi) / \epsi}$ 
and query complexity
$\oh{\log_{1 - \epsi}(1/k) \cdot \left( n / \epsi + n \log^3(1 / \epsi) / \epsi^4 \right)}$.
\revtwo{In the experiment, \unc is implemented to use a random subset which gives an expected $(1/4)-$approximation ratio~\cite{Feige2011}. In this case, \atg achieves ratio $1/8-\epsi$ with adaptivity $\oh{\log( n) / \epsi}$ and query complexity
$\oh{\log_{1 - \epsi}(1/k) n / \epsi}$.}

\textbf{Overview of Proof.}
The proof uses the following strategy: either \threseq finds
a set $A'$ or $B'$ with value \red{approximately} $\tau k$,
which is sufficient to achieve the ratio, or we
have two disjoint sets $A$, $B$ of size less than
$k$, such that for any $x \not \in A \cup B$, $\marge{x}{A} < \tau$
and $\marge{x}{B} < \tau$. 
In this case, for any set $O$, we have
by submodularity \red{and nonnegativity}, $f(O) \le f(O \cap A) + f(O \setminus A )$.
The first term is bounded by the unconstrained maximization, and
the second term is bounded by an application of submodularity and the
fact that the maximum marginal gain of adding an element into $A$ or $B$ is below $\tau$.
The choice of constant $c$ balances the trade-off between the two
cases of the proof. 

%$1/6 - \epsi$ if a $(1/2 - \epsi)$-approximation is used for \unc.
% $A$ and $B$, such that $A \cap B = \emptyset$. 
% First, the algorithm guesses the value $\tau = OPT / 6k$
% in parallel in logarithmically iterations. Within
% each iteration, the algorithm
% ensures that (i) elements added to $A$ or $B$ have 
% marginal gain at least $\tau$
% in expectation, and (ii) any element not added to $A$ or $B$
% has marginal gain less than $\tau$. To accomplish
% (i) and (ii), the algorithm employs \thresam. 
% Next, an approximation algorithm for \unc is called on $A$ to
% obtain $A'$,
% and a set in $\{A', A, B\}$ with maximum $f$
% value is returned. 

%Hence if $|A| = k$ or $|B| = k$, the solution has
%expected value at least $OPT / 6$. Moreover, if both
\begin{proof}[Proof of Theorem~\ref{thm:atg}]
  Let $(f,k)$ be an instance of \sm, and let $\epsi> 0$.
  Suppose algorithm \atg uses a procedure for \unc with
  expected ratio $1/\alpha$.
  \red{We will show that, with some events that happen with probability of at least $(1 - 1/n)$, 
  the set $C$ returned by
  algorithm $\atg(f,k,\epsi)$ satisfies
  $\ex{f(C)} \ge \left( c^{-1}-\epsi \right) \opt$,} where $\opt$
  is the optimal solution value on the instance $(f,k)$.

  \begin{figure}[ht]\centering
  \includegraphics[width=0.5\textwidth]{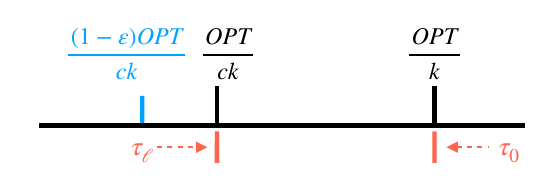}
  \caption{Value of $\tau_0$ and $\tau_\ell$. It is satisfied that
  $\tau_0 \ge \opt / k$ and $\tau_\ell \le \opt / (ck)$.}
  \label{fig:1}
  \end{figure}
  Observe that $\tau_0 = M = \max_{x \in \mathcal N} f(x) \ge \opt / k$ 
  by submodularity of $f$;
  $\tau_\ell = M(1 - \epsi)^\ell \le \opt / (ck)$ 
  since $M \le \opt$. 
  \red{To better explain it, we attach Fig.~\ref{fig:1} above.}
  Because $\tau_i$ decreases by a factor of $1-\epsi$, 
  there exists $i_0$ such that 
  $ \frac{(1 - \epsi)\opt}{ck} \le \tau_{i_0} \le \frac{\opt}{ck}$.
  Let $A,A',B,B',A''$ denote 
  $A_{i_0},A'_{i_0},B_{i_0},B'_{i_0},A''_{i_0}$, respectively.
  For the rest of the proof, we assume that
  the properties of Theorem~\ref{thm:threshold} hold for the calls to \threseq
  with threshold $\tau_{i_0}$, which happens with at least
  probability $1 - 1/n$ by the union bound. 
  
  \textbf{Case $|A| = k$ or $|B| = k$}.
  %  Consider the case that there exists 
  \red{We suppose that $|A|=k$,
  the proof for the case $|B|  = k$ is directly analogous.}
  By Theorem~\ref{thm:threshold} and the value of $\tau_{i_0}$,
  it holds that,
  \begin{align*}
    f(A') \ge (1-\epsi) \tau_{i_0} |A| \ge 
    \frac{(1-\epsi)^2\opt}{c} \ge (1/c-\epsi)\opt.
  \end{align*}
  Then $f(C) \ge f(A') \ge (1/c - \epsi) \opt$.

  \textbf{Case $|A| < k$ and $|B| < k$}.
  Let $O$ be a set such that $f(O) = \opt$ 
  and $|O| \le k$. 
  Since $|A| < k$, by Theorem~\ref{thm:threshold},
  it holds that for any $x \in \mathcal N$,
  $\marge{x}{A} < \tau_{i_0}$. 
  Similarly, for any $x \in \mathcal N \setminus A$,
  $\marge{x}{B} < \tau_{i_0}$. 
  Hence, by submodularity,
  \begin{equation}
    \numberthis
    \label{ineq:1}
    f(O \cup A) - f(A) \le \sum_{o \in O} \marge{o}{A}
    < k \tau_{i_0} \nonumber \le \opt / c, 
  \end{equation}
  \begin{equation} 
    \numberthis{}
    \label{ineq:2}
    f( (O \setminus A) \cup B ) - f(B) 
    \le \sum_{o \in O \setminus A} \marge{o}{B} < 
    k \tau_{i_0} \nonumber \le \opt / c. 
  \end{equation}
  Next, from (\ref{ineq:1}), (\ref{ineq:2}), submodularity, nonnegativity, 
  Theorem~\ref{thm:threshold}, and the fact that $A \cap B = \emptyset$, 
  it holds that,
  \begin{align} \label{ineq:3}
    f(A') + f(B') &\ge f(A) + f(B) \nonumber \\ 
    &\ge f(O \cup A) + f( (O \setminus A) \cup B ) - 2\opt/c \nonumber \\ 
    &\ge f(O \setminus A) + f( O \cup A \cup B)- 2\opt/c \nonumber \\ 
    &\ge f( O \setminus A )- 2\opt/c.
  \end{align}
  %From Theorem~\ref{lemm:unc}, we have
  % \begin{equation} \label{ineq:4}
  %   \frac{2}{1 - \epsi}\ex{f(A')} \ge f( O \cap A )
  % \end{equation}
Since \unc is an $\alpha$-approximation, we have
\begin{equation} \label{ineq:4}
  \alpha \ex{f(A'')} \ge f( O \cap A ).
\end{equation}
  
From Inequalities (\ref{ineq:3}), (\ref{ineq:4}),
and submodularity,
  we have
  \begin{align*} 
    \opt = f(O) &\le f( O \cap A ) + f( O \setminus A ) \\ 
    &\le \alpha \ex{f(C)} + 2f(C) + 2\opt / c,
  \end{align*}
  from which it follows that
  $\ex{ f(C) } \ge \opt / c.$

  \textbf{Adaptivity and Query Complexities}.
  The adaptivity of \atg is twice the adaptivity of \threseq
  plus the adaptivity of \unc plus a constant. 
  Further, the total query complexity is $\log_{1 - \epsi}(1/(ck))$ times
  the sum of twice the query complexity of \threseq and the query complexity
  of \unc.
  % $O \left( \log_{1 - \epsi}(1/(6k)) \cdot \left( n / \epsi + n \log^3(1 / \epsi) / \epsi^4 \right) \right)$.
%  Thus, it is bounded by
%  $O\left( \log( n / \delta ) / \epsi + \log(1 / \epsi) / \epsi \right)$.
\end{proof}

%%% Local Variables:
%%% mode: latex
%%% TeX-master: "main"
%%% End:

\section{The \algTwofullname Algorithm} \label{sec:latg}
\begin{algorithm}[t]
  \caption{The \adaptg Algorithm}
  \label{alg:latg}
  \begin{algorithmic}[1]
    \Procedure{ATG}{$f, \mathcal{N}, k, \epsi$}
    \Statex \textbf{Input:} evaluation oracle $f:2^{\mathcal N} \to \reals$, constraint $k$,
    accuracy parameter $\epsi > 0$, failure probability $\delta > 0$
    \State Initialize $c \gets 8 / \epsi$, $\epsi' \gets (1 - 1/e)\epsi / 8 $, 
    $\ell=\lceil \log_{1 -  \epsi'}(1/(ck)) \rceil+1$,
    $\delta\gets 1/ (2\ell)$,
    $M \gets \max_{x \in \mathcal N} f(x)$, $A \gets \emptyset$, 
    $A' \gets \emptyset$, $B \gets \emptyset$, $B' \gets \emptyset$
    \For{ $i \gets 1$ to $\ell$}\label{line:for1}
    \State $\tau \gets M \left( 1 -  \epsi' \right)^{i-1}$
    \State $S,S' \gets \threseq (f_A, k - |A|,  \tau,  \epsi', \delta)$ \label{line:thresh1}
    \State $A \gets A \cup S$
    \State $A' \gets A' \cup S'$
    \State \textbf{if} $|A|=k$ \textbf{then} break
    \EndFor
    \For{ $i \gets 1$ to $\ell$}\label{line:for2}
    \State $\tau \gets M \left( 1 -  \epsi' \right)^{i-1}$
    \State $S,S' \gets \threseq ( f_B\restriction_{ \mathcal N \setminus A }, k - |B|, \tau,  \epsi', \delta)$ \label{line:thresh2}
    \State $B \gets B \cup S$
    \State $B' \gets B' \cup S'$
    \State \textbf{if} $|B|=k$ \textbf{then} break
    \EndFor
    \State $A'' \gets \unc (A,  \epsi')$\label{line:unc}
    \State $C \gets \argmax \{ f(A'), f(B'), f(A'') \}$\label{line:chooseC}
    \State \textbf{return} $C$
    \EndProcedure
\end{algorithmic}
\end{algorithm}
In this section, we present the algorithm \algTwofullname 
(\latg, Alg.~\ref{alg:latg}), 
which achieves ratio $\approx 0.193 - \epsi$ 
in nearly optimal query and
adaptivive complexity. 
The price of improving the ratio of the preceding section 
is an extra $\log(k)$ factor in the adaptivity. 

\textbf{Overview of Algorithm.}
Our algorithm (pseudocode in Alg.~\ref{alg:latg}) works
as follows. Each \textbf{for} loop corresponds to a low-adaptivity
greedy procedure using \threseq with descending thresholds. Thus,
the algorithm is structured as two iterated calls to a greedy algorithm,
where the second greedy call is restricted to select elements outside the
auxiliary set $A$ returned by the first. 
Finally, an unconstrained maximization
procedure is used within the first greedily-selected auxiliary set $A$.
Then, the best
of three candidate sets is returned. 
In the pseudocode for \latg, Alg.~\ref{alg:latg}, \threseq is called with
functions of the form $f_S$, which is defined
to be the submodular function $f_S( \cdot ) = f( S \cup \cdot )$.

At a high level, our approach is the following:
the \iter framework of \citeA{Gupta2010a}
runs two standard greedy algorithms followed by an
unconstrained maximization, which yields an algorithm
with $\oh{nk}$ query complexity and $\oh{k}$ adaptivity. 
We adopt this framework
but replace the standard greedy algorithm with a
novel greedy approach with low adaptivity and query complexity.
To design this novel greedy approach, we modify
the descending thresholds algorithm of \citeA{Badanidiyuru2014},
which has query complexity $\oh{n \log k}$ but very high adaptivity
of $\Omega(n \log k)$.
We use \threseq to lower the adaptivity of the descending thresholds 
greedy algorithm (see
Appendix~\ref{apx:threshgreedy} for pseudocode and a detailed discussion).
% The use of a new approach in step 1 is necessary
% because of the specific way the analysis works in step 2
% as discussed below, although
% the method of step 1 does resemble in some respects the algorithm
% developed in \cite{Fahrbach2018} for monotone functions (see
% Related Work for a comparison).

For the resulting algorithm \latg, 
we prove a ratio of %we improve the $1/6$ ratio proven for \iter in \cite{Gupta2010}
$0.193 - \epsi$ (Theorem~\ref{thm:latg}), which
improves the $1/6$ ratio for \iter proven in \citeA{Gupta2010a}.
Also, by adopting \threseq proposed in this paper,
the analysis of approximation ratio is simplified.
\red{Thanks to the fact that} the contribution of each element added to the
solution set $A'$ is determined,
at least $(1-\epsi)|A|$ elements in the solution set $A'$ 
have marginal gains which exactly exceed the threshold $\tau$,
while the rest of it have non-negative marginal gains.
Therefore, it is not needed to analyze the marginal gain in expectation anymore.
An exact lower bound is given by the analysis of the two greedy procedures.

A simpler form of our arguments shows that
the improved ratio also holds for the original
\iter of \citeA{Gupta2010a}; this analysis is 
given in Appendix
\ref{apx:iter}. 
We prove the following theorem concerning the performance of \latg.
\begin{theorem} \label{thm:latg}
  Suppose there exists an $(1/\alpha)$-approximation for
  \unc with adaptivity \red{$\Phi$} and query complexity
  $\Xi$, and let $\epsi> 0$.
  Then the algorithm \algTwofullname  for \sm
  has expected approximation ratio
  $\atgRatio$ %$ \approx 0.193 - \epsi $
  %Let $(f,k)$ be an instance of MCC, 
  %Choose $c = 8 / \eta$, and $\epsi = (1 - 1/e) \eta / 8$.
  %Then
%The parameters $\epsi, c$ can be chosen, such that
  %the set $C$ returned by algorithm \latg$(f,k,\epsi,\delta, c)$ satisfies
  with probability at
  least $(1 - 1/n)$,
  adaptivive complexity of
  % $O\left( \log( n / \delta ) / \epsi + \log(1 / \epsi) / \epsi \right)$
  $\oh{\log_{1 - \epsi} (1/k) \log( n) / \epsi + \red{\Phi}}$ 
  and expected query complexity
  of
  % $O \left( \log_{1 - \epsi}(1/(6k)) \cdot \left( n / \epsi + n \log^3(1 / \epsi) / \epsi^4 \right) \right)$.
  $\oh{\log_{1 - \epsi}(1/k) \cdot \left( n / \epsi \right) + \Xi}$.
\end{theorem}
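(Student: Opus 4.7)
The plan is to show that $\ex{f(C)} \ge \tfrac{r\,\opt}{2+\alpha r} - O(\epsi)\opt$ with probability at least $1 - \delta$, where $r := 1 - 1/e$; since $\tfrac{r}{2+\alpha r} = \tfrac{e-1}{e(2+\alpha)-\alpha}$, this gives the claimed ratio. First, I union-bound over the $2\log_{1 - \epsi'}(1/(ck))$ calls to \thresh across the two for-loops: each fails with probability at most $\delta'$, and the choice on line \ref{alg2:setparams} makes the overall failure probability at most $\delta$. The remainder of the analysis conditions on the event that every call satisfies the conclusions of Lemma \ref{lemm:thresh}.

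Fix an optimal solution $O$ with $|O| \le k$ and $f(O) = \opt$. The central ingredient is a $(1 - 1/e - O(\epsi))$-type greedy bound for the descending-thresholds procedure built around \thresh; the full argument appears in Appendix \ref{apx:threshgreedy} and parallels the standard-greedy analysis for \iter in Appendix \ref{apx:iter}. Concretely, each batch returned by \thresh has per-element expected marginal at least $(1 - \epsi')\tau$ along a uniformly random permutation, and the final threshold tested is $O(\epsi \opt / k)$; a Badanidiyuru--Vondr\'ak-style induction over the descending thresholds then yields
\[
\ex{f(A)} \;\ge\; r\,\ex{f(A \cup O)} - O(\epsi)\opt.
\]
Applied conditionally on $A$ to the second for-loop, whose greedy runs on $f\restriction_{\mathcal N \setminus A}$ and whose natural benchmark is $O \setminus A$,
\[
\ex{f(B) \mid A} \;\ge\; r\,\ex{f(B \cup (O \setminus A)) \mid A} - O(\epsi)\opt,
\]
and the tower property removes the conditioning on $A$.

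To combine, observe that submodularity and nonnegativity give $f(O) \le f(O \cap A) + f(O \setminus A)$, and, since $A \cap B = \emptyset$, submodularity applied to $A \cup O$ and $B \cup (O \setminus A)$ (whose intersection is $O \setminus A$) together with nonnegativity gives $f(O \setminus A) \le f(A \cup O) + f(B \cup (O \setminus A))$. Taking expectations, plugging in the two greedy bounds, and applying the $(1/\alpha)$-approximation guarantee $\alpha\,\ex{f(A')} \ge \ex{f(O \cap A)}$ from line \ref{line:unc},
\[
\opt \;\le\; \alpha\,\ex{f(A')} + \tfrac{1}{r}\ex{f(A)} + \tfrac{1}{r}\ex{f(B)} + O(\epsi)\opt \;\le\; \bigl(\alpha + \tfrac{2}{r}\bigr)\ex{f(C)} + O(\epsi)\opt,
\]
and rearranging yields the target ratio. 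The adaptivity and query-complexity bounds follow by summing the cost of $2\log_{1-\epsi'}(1/(ck)) = O(\log k/\epsi)$ sequential \thresh calls (each with adaptivity $O(\log(n/\delta')/\epsi')$ and $O(n/\epsi')$ expected queries) with a single \unc invocation of cost $(\Theta, \Xi)$.

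The main obstacle is the conditioning issue emphasized just before the theorem: the bound on $\ex{f(A)}$ is genuinely unconditional (it depends on the randomness inside the first greedy), while the bound on $\ex{f(B)}$ must be proved conditionally on $A$. Attempting to also condition the first-greedy bound on the realized set $A$ would invalidate the per-batch marginal-gain inequalities, which rely on the expectation over the internal permutations of \thresh. The resolution is to keep the first bound unconditional, prove the second conditionally, and combine via the tower property; the combination is delicate because $\ex{f(O \cap A)}$ must appear with coefficient $1$ (absorbed by the \unc bound without a $1/r$ factor), so losing any $r$ on that term would degrade the final ratio.
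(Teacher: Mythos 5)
There is a genuine gap: the two greedy bounds you assert are not provable, and in fact are false even in the simplest monotone case. Consider a modular $f$ with $f(S) = |S|$ on $\mathcal N = \{1,\ldots,2k\}$; the descending-threshold greedy can return $A = \{k+1,\ldots,2k\}$ while $O = \{1,\ldots,k\}$, giving $f(A) = k$ but $(1-1/e)\,f(A\cup O) = (1-1/e)\cdot 2k > k$. Structurally, the difficulty is that the greedy recurrence
\[
f(\mathcal A_{i+1}) - f(\mathcal A_i) \;\gtrsim\; \tfrac{1}{k}\bigl(f(\mathcal A_i \cup O) - f(\mathcal A_i)\bigr)
\]
has a \emph{moving} target $f(\mathcal A_i\cup O)$; since $f$ is non-monotone, this cannot be replaced by a fixed benchmark the way it is in the monotone case, and the best you can close to is a $1/2$-type bound (by submodularity, $f_O(\mathcal A_i) \ge f_O(\mathcal A_k)$, so summing gives $f(\mathcal A_k) \ge f(\mathcal A_k\cup O) - f(\mathcal A_k)$). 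The same problem afflicts your conditional bound for $B$. Your final algebra would then yield $\alpha + 4$ in the denominator, not $\alpha + 2/(1-1/e)$, i.e.\ it reproduces the old $1/6$ ratio rather than $\approx 0.193$.

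The paper gets around this precisely by refusing to bound $f(A)$ and $f(B)$ separately. Lemma~\ref{lemma:three} establishes a \emph{joint} recurrence on $\Gamma_i = f(\mathcal A_i) + f(\mathcal B_i)$:
\[
\ex{\Gamma_{i+1}} - \ex{\Gamma_i} + \tfrac{2M}{ck} \;\ge\; \tfrac{(1-\epsi')^2}{k}\bigl(\ex{f(O\setminus\mathcal A)} - \ex{\Gamma_i}\bigr),
\]
and here the target $\ex{f(O\setminus\mathcal A)}$ is a fixed scalar, so the recurrence closes to a genuine $1 - e^{-(1-\epsi')^2}$ factor. Getting that fixed target requires exactly the chain of inequalities you invoke for the combination step (submodularity of $f(\mathcal A_{j(i)}\cup O)$ and $f(\mathcal B_{j'(i)}\cup (O\setminus\mathcal A))$ plus nonnegativity), but applied \emph{inside} the recurrence per step $i$, not after the fact to already-closed per-sequence bounds. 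Your discussion of the conditioning subtlety is on the right track — the paper also handles it by marginalizing over events that pin down the algorithm's state (Lemmas~\ref{lemm:A}--\ref{lemm:B2}) — but the more fundamental missing idea is that the two greedy runs must be coupled in a single recurrence to obtain the $1-1/e$ factor at all.
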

  If the algorithm of \citeA{Chen2018b} is used for \unc,
\latg achieves approximation
  ratio $\approx 0.193 - \epsi$
  with adaptivive complexity
  $\oh{\log(n)\log( k )}$
  and query complexity
  $\oh{n \log(k) }$, wherein the $\epsi$
  dependence has been suppressed.
  \revtwo{Same as the \atg algorithm, \unc is implemented to use a random subset, and \latg achieves approximation
  ratio $\approx 0.139 - \epsi$
  with adaptivive complexity
  $\oh{\log(n)\log( k )}$
  and query complexity
  $\oh{n \log(k) }$.}
\begin{proof}[Proof of Theorem~\ref{thm:latg}]
  In this proof, we
  assume that the guarantees 
  of Theorem~\ref{thm:threshold} hold
  for each call to \threseq made by
  \latg; this occurs with probability at least
  $(1 - 1/n)$ by the union bound and 
  the choice of $\delta$.
  
\textbf{Overview of Proof.}
For the proof, a substantial amount of
machinery is necessary to lower bound
the marginal gain. The necessary notations
are made first; then, in Lemmas~\ref{lemm:A} --~\ref{lemm:B},
we formulate the necessary lower bounds on the marginal gains
for the first and second greedy procedures. For each respective
greedy procedure, 
this is accomplished by considering the good elements in the selected set
returned by \threseq, or the dummy element if the size of 
selected set is limited. 
This allows us to formulate a recurrence
on the sum of the marginal gains (Lemma~\ref{lemma:three}).
Finally, the recurrence allows us to proceed similarly to our proof
in Appendix~\ref{apx:iter} after a careful analysis of the error
introduced (Lemma~\ref{lemm:seven} in Appendix~\ref{apx:latg}).
\textbf{Notations.}
  Followed by the notations in the pseudocode of Alg.~\ref{alg:latg},
  $A$ and $A'$ are returned by the first greedy procedure,
  while $B$ and $B'$ are returned by the second one.
  % Let $A_i$ be the set $A$ returned at the end of iteration $i$
  % by the first greedy procedure.
  \red{Let $A_i$ be the set $A$ after iteration $i$,
  $a_j'$ be the $j$-th element in $A'$,
  and $i(j)$ be the iteration that returns $a_j'$.
  If $j > |A'|$, let $a_j'$ be a dummy element,
  and $i(j) = \ell+1$.
  Furthermore, define $\mathcal{A}_j' = \{a_1', \ldots, a_j'\}$.}
  % Let $\mathcal{A}_j'$ be the first $j$ elements in $A'$,
  % where $1 \le j \le |A'|$.
  % Furthermore, for $|A'| < j \le k$, 
  % let $\mathcal{A}_j'$ be $A'$
  % combined with $j-|A'|$ dummy elements.
  % Let $\{a_j'\} = \mathcal{A}_j' \backslash \mathcal{A}_{j-1}'$,
  % $a_j'$ be returned at iteration $i(j)$,
  % and $A_{i(j)}$ be the set $A$ returned at iteration $i(j)$.
  % If $a_j'$ is dummy element, let $i(j)=\ell+1$.
  Then, we define $B_{i(j)}$ and $\mathcal{B}_j'$ analogously.
\begin{restatable}{lemma}{LemmaA}
  \label{lemm:A}
  For $1 \le j \le k$, there are at least $\lceil(1-\epsi')k\rceil$ \red{inequalities where}
    $$f(\mathcal{A}_j')-f(\mathcal{A}_{j-1}')+\frac{M}{ck}\ge 
    \frac{1-\epsi'}{k}\left(f(O\cup A_{i(j)-1})-f(\mathcal{A}_{j-1}')\right).$$
    And for any $j$,
    $$f(\mathcal{A}_j')\ge f(\mathcal{A}_{j-1}').$$
\end{restatable}
The proof of the above lemma can be found in Appendix~\ref{apx:latg}.
Following the notations and the proof of Lemma~\ref{lemm:A},
we can get an analogous result
for the gain of $\mathcal B'$ as follows.
% Let $l \le k$,
% $\boldsymbol{\alpha} \in \mathbb{N}^l$ and  $\boldsymbol{\beta} \in \mathbb{R}^l$ be
% sequences of length $l$. Let $i \le k$, and let $E \subseteq U$.
% Let $\omega$ be the event that
% $\boldsymbol{t}=\boldsymbol{\alpha}$ and
% $\boldsymbol{\tau} = \boldsymbol{\beta}$ and
% $\mathcal{A}_{j(i)} = E$; let $\Omega$ be
% the collection of all events $\omega$ of this form.

%   Let $l \le k$, and let 
%   $\boldsymbol{\alpha} \in \mathbb{N}^l$ and  
%   $\boldsymbol{\beta} \in \mathbb{R}^l$ be
%   sequences of length $l$.
%   Let $i \le k$, let $A$ 
%   be a subset of $U$ of size at most $k$, and let $F$ be a 
%   subset of $U \setminus A$.
%   Let $\omega'$ be the event that
%   $\mathcal A = A \land \mathcal B_{j'(i)} = F \land \mathbf{t}' = \boldsymbol{\alpha} \land \boldsymbol{\tau}' = \boldsymbol{\beta}$. Finally, let $\Omega'$ be
% the collection of all events $\omega'$ of this form.
\begin{restatable}{lemma}{LemmaB} 
  \label{lemm:B}
  For $1 \le j \le k$, there are at least $\lceil(1-\epsi')k\rceil$ of $j$ such that
    $$f(\mathcal{B}_j')-f(\mathcal{B}_{j-1}')+\frac{M}{ck}\ge 
    \frac{1-\epsi'}{k}\left(f((O\backslash A)\cup B_{i(j)-1})-f(\mathcal{B}_{j-1}')\right).$$
    And for any $j$,
    $$f(\mathcal{B}_j')\ge f(\mathcal{B}_{j-1}').$$
\end{restatable}

The next lemma proved in Appendix~\ref{apx:latg} establishes the main recurrence.
\begin{restatable}{lemma}{LemmaC} 
  \label{lemma:three}
  Let $\Gamma_u=f(\mathcal{A}_{j(u)}')+f(\mathcal{B}_{j(u)}')$, 
    where $j(u)$ is the $u$-th $j$ which satisfies
    Lemma~\ref{lemm:A} or Lemma~\ref{lemm:B}.
    Then, there are at least $\lceil(1-\epsi')k\rceil$ of $u$ follow that
    $$f(O\backslash A)-\Gamma_u-\frac{2M}{c(1-\epsi')}\le 
    \left(1-\frac{1-\epsi'}{k}\right)\left(f(O\backslash A)-
    \Gamma_{u-1}-\frac{2M}{c(1-\epsi')}\right).$$
  \end{restatable}

Lemma~\ref{lemma:three} yields a recurrence of the form 
$\left(b-u_{i+1}\right) \le a\left(b-u_{i}\right)$, $u_0 = 0$,
and has the solution $u_i \ge b(1 - a^i)$.
Consequently, we have 
\begin{align} \label{eq:part}
f(A') + f(B') &\ge \left(1-\left(1-\frac{1-\epsi'}{k}\right)^{(1-\epsi')k}\right)
\left(f(O \backslash A) - \frac{2M}{c(1-\epsi')}\right) \nonumber \\
& \ge \left(1-e^{-(1-\epsi')^2}\right)
\left(f(O \backslash A) - \frac{2M}{c(1-\epsi')}\right)
\end{align}
Let $\beta = 1 - e^{-(1 - \epsi')^2}$. From
the choice of $C$ on line~\ref{line:chooseC}, we have $2f(C) \ge f(A') + f(B')$ and
so from (\ref{eq:part}), we have 
\begin{align} \label{eq:part3}
f(O \setminus A) &\le \frac{2}{\beta} f(C) + \frac{2M}{c(1 - \epsi')} \nonumber \\ 
&\le \frac{2}{\beta} f(C) + \frac{2f(O)}{c(1 - \epsi')}.
\end{align}

Since an $(1/\alpha)$-approximation is used 
for \unc, for any $A$, $f(O \cap A) / \alpha \le \ex{ f(A'') | A }$; therefore, 
\begin{equation} \label{eq:int}
\red{\ex{f( O \cap A )} \le \alpha \ex{f(C)}}.
\end{equation}

For any set $A$, $f(O) \le f(O \cap A) + f(O \setminus A)$
by submodularity and nonnegativity. 
Therefore, by Inequalities~\ref{eq:part3} and~\ref{eq:int},
\begin{align*}
f(O) &\le \ex{f(O \cap A) + f(O \setminus A)}\\
&\le \red{\alpha \ex{f(C)}+ \frac{2}{\beta} \ex{f(C)} + \frac{2f(O)}{c(1 - \epsi')}}.
\end{align*}
Therefore, \red{we have from Lemma~\ref{lemm:seven} in Appendix~\ref{apx:latg},}
$$
  \ex{f(C)}\ge \frac{1-\frac{2}{c(1-\epsi')}}{\alpha+\frac{2}{\beta}}f(O)
  \ge \left(\frac{e-1}{\alpha(e-1)+2e}-\epsi\right)f(O).\qedhere
  $$

%Let $\hat \epsi$ be the accuracy parameter
%$\epsi$ that is input to \latg. 
% From (\ref{eq:part3}), (\ref{eq:total}), and (\ref{eq:int}) and the choices of $c$, $\epsi'$
% on line~\ref{alg2:setparams},
% we have from Lemma~\ref{lemm:seven} in Appendix~\ref{apx:latg}
% \begin{align*}
% \ex{f(C)} &\ge \left( \frac{1 - \frac{2}{c(1-\epsi')^2}}{\alpha + \frac{2}{\beta}} \right) f(O) \\ 
% &\ge \left( \frac{(e - 1)}{ \alpha ( e - 1 ) + 2e} - \epsi  \right) f(O).
% \end{align*}  %\approx (0.193 - \eta) f(O). $$
\end{proof}

\section{Empirical Evaluation} \label{sec:exp}
In this section, we evaluate our algorithm in comparison
with the state-of-the-art parallelizable algorithms: 
\anm of \citeA{Fahrbach2018a}, the algorithm of \citeA{Ene2020},
\red{and two versions of \park in~\citeA{amanatidis2021submodular}
(\parkone represents the one without binary search
and \parktwo is the one with binary search)}.
Also, we compare four versions of our algorithms
with different threshold procedures: \thresam of \citeA{Fahrbach2018}, 
two versions of threshold sampling algorithms of \citeA{amanatidis2021submodular},
and \threseq proposed in this paper.
Our results are summarized as follows.
~\footnote{Our code is available at 
\textit{https://gitlab.com/luciacyx/nm-adaptive-code.git}.}
\begin{itemize}
\item Our algorithm \latg obtains the best objective
value of any of the parallelizable algorithms;
obtaining an improvement of up to 19\% over the next algorithm,
our \atg. Both \citeA{Fahrbach2018a} and \citeA{Ene2020} exhibit
a large loss of objective value at both small and large $k$ values.
% ; see Figs.~\ref{fig:val-Google} and~\ref{fig:val-Google-largek}.
\item Both our algorithm \atg, \red{\parkone,} and \anm use a very small number
of adaptive rounds. Both \latg and the algorithm of \citeA{Ene2020}
use roughly an order of magnitude more adaptive rounds.
% ; see Figs.~\ref{fig:rounds-Google} and~\ref{fig:rounds-Google-largek}.
\item The algorithm of \citeA{Ene2020} is the most query efficient if access
is provided to an exact oracle 
for the multilinear extension of a submodular function and its 
gradient~\footnote{The definition of the multilinear extension is given in Appendix~\ref{apx:ene}.}.
% ; see Fig.~\ref{fig:query-Google-largek}.
However, if these oracles must be approximated with the set function, 
% \red{However, if the gradient of the multilinear extension must
% be approximated with s}
their algorithm becomes very inefficient and does not scale 
beyond small instances ($n \le 100$).
% ; see Fig.~\ref{fig:multilinear} in Appendix~\ref{apx:exp}. 
\item Our algorithms used fewer queries to the submodular set function 
than the linear-time algorithm \frg
in \citeA{Buchbinder2015a}.
\red{Both versions of \park are the most query inefficient.}
% ; see Fig.~\ref{fig:query-Google-largek}.
\item Comparing \atg with four threshold sampling algorithms, 
our \threseq proposed in this paper is the most 
query and round efficient without loss of objective values. 
If running \thresam theoretically, 
with a large amount of sampling in \reducedmean,
\red{we empirically establish that the query complexity of algorithms 
using \thresam can be three to four orders of magnitude 
worse than other algorithms over the SMCC instances in our benchmark}.
% ; see Fig.~\ref{fig:main2}.
\end{itemize}

% that across the three metrics
% of objective value, adapativity, and query complexity,
% \algOnefullname and \anm are very similar, but \algOnefullname
% achieves better objective value on small $k$ values ($k < 200$).
% Over these two algorithms,
% \algTwofullname provides significant improvement in objective value (up to 18\%),
% especially for larger $k$ values ($k = \Omega( n )$), at the cost
% of more adaptive rounds.
% \algTwofullname also outperforms the $\Omega(n \log k)$-adaptive
% algorithm \fig, both in terms of objective value and total number
% of queries. In addition, we show that the state-of-the-art algorithm
% of \cite{Ene2020} for continous, DR-submodular functions 
% does not work well in practice as a heuristic
% for the optimization of discrete set functions. 
% % In summary, both \algOnefullname and \algTwofullname exhibit good
% % objective value, adaptivity, and query complexity. \algOnefullname is more parallelizable
% % at the cost of objective value, especially as $k$ increases. Across all three metrics,
% % the closest competitor is \anm of \cite{Fahrbach2018a}; however, for small $k$ ($k < 200$),
% % the objective value of \anm suffers in comparison to the other algorithms. 

\subsection{\revtwo{Algorithm Setup for \atg and \latg}}
\revtwo{In the pseudocodes for \atg and \latg,
$M$ is used as the upper bound of $\opt/k$ 
which is set to $\max_{x \in \uni} \ff{x}$.
In the experiment, we used a sharper upper bound, the average
of the top $k$ singleton values,
maintaining the analysis of approximation ratio.
Additionally, the $(1/2)-$approximation \unc algorithm is substituted with a random set,
which is a $(1/4)$-approximation by~\citeA{Feige2011}.
Consequently, the obtained approximation ratios for \atg and \latg in the actual experiment are
$1/8-\epsi$ and $0.139-\epsi$, respectively.
}

\subsection{Comparison Algorithms and Other Settings}
In addition to the algorithms discussed in the preceding
paragraphs, we evaluate the following baselines:
the \iter algorithm of \citeA{Gupta2010a},
and the linear-time $(1/e - \epsi)$-approximation algorithm \frg of \citeA{Buchbinder2015a}.
These algorithms are both $\oh{k}$-adaptive, where $k$ is the cardinality constraint.

The algorithm of \citeA{Ene2020} requires access to an oracle
for the multilinear extension and its gradient. In the case of maximum cut,
the multilinear extension and its gradient can be computed in closed form
in time linear in the size of the graph, as described in Appendix~\ref{apx:ene}. 
This fact enables us to evaluate
the algorithm of \citeA{Ene2020} using direct oracle access to the multilinear
extension and its gradient on the maximum cut application. However,
no closed form exists for the multilinear extension of
the revenue maximization objective. In this case, we found (see Appendix~\ref{apx:exp})
that sampling to approximate the multilinear extension is exorbitant in terms
of runtime; hence, we were unable to evaluate \citeA{Ene2020} on revenue maximization.
% For more details on the applications and datasets, see
% Appendix~\ref{apx:exp}.
%
%the deterministic $1/4$-approximation algorithm \fig of \cite{Kuhnle2019} with query complexity
%and adapativity
%$\Theta (n \log k)$, 

For all algorithms, the accuracy parameter $\epsi$ was set to $0.1$; 
the \revtwo{failure probability parameter} $\delta$ was set to 0.1;
$100$ samples
were used to evaluate expectations for \thresam in \anm (thus, this
algorithm was run as heuristics with no performance guarantee). 
Further, in the algorithms \algTwofullname, \park, and \anm,
we ignored the smaller values of $\epsi$, $\delta$ in each algorithm,
and simply used the input values of $\epsi$ and $\delta$.
For \algTwofullname and \park, 
by using the best solution value found so far
as a lower bound on $\opt$,
we used an early termination condition to check if the threshold
value $\tau < \alpha\opt (1 - \epsi ) / k$, where $\alpha$ is the approximation ratio for each algorithm.
This early termination condition is responsible for
the high variance in total queries. 
% We also used a sharper upper bound on $\opt / k$ in place
% of the maximum singleton: the sum of the top $k$ singleton values divided by $k$. 
We attempted to use the same sharper upper bound \revtwo{on $\opt / k$ as our algorithms} in \anm, but it resulted in significantly worse
objective values, so we simply used the maximum singleton as described in \citeA{Fahrbach2018a}.
\red{\park is generalized by an algorithm that deals with knapsack constraints.
By calling threshold sampling algorithm a large number of times,
\park is able to achieve a constant probability on certain events.
Hence, it is comparatively less efficient than algorithms dealing with cardinality constraints.
For our experiments, we ran only \parkone
and \parktwo on BA and ca-GrQc datasets.}

Randomized algorithms are averaged over $20$ independent repetitions,
and the mean is reported. The standard deviation is indicated by a shaded
region in the plots. Any algorithm that requires a subroutine for $\unc$
is implemented to use a random set, \revtwo{following the setting used for \atg and \latg.}
% , which is a $(1/4)$-approximation by \citeA{Feige2011}.

\subsection{Applications and Datasets} 
% All combinatorial algorithms are evaluated on two applications
% of \sm: the cardinality-constrained maximum cut application and revenue maximization
% on social networks, a variant of the influence maximization problem in which $k$ users
% are selected to maximize revenue. 
% We evaluate on a variety of network technologies from
% the Stanford Large Network Dataset Collection \cite{snapnets}.
\textbf{Maxcut.} The cardinality-constrained maximum cut function is defined as follows.
Given graph $G = (V, E)$, and nonnegative edge weight $w_{ij}$ on each edge
$(i,j) \in E$. For $S \subseteq V$, let
$$f(S) = \sum_{i \in V \setminus S} \sum_{j \in S} w_{ij}.$$
In general, this is a non-monotone, submodular function.
\red{In our implementation, all edges have a weight of 1.}

\textbf{Revmax.} The revenue maximization objective is defined as follows.
Let graph $G = (V, E)$ represent a social network, 
with nonnegative edge weight $w_{ij}$ on each edge
$(i,j) \in E$.
We use the concave graph model introduced by \citeA{Hartline2008}.
In this model, each user $i \in V$ is associated with a non-negative,
concave function $f_i: \reals \to \reals$. The value $v_i(S) = f_i( \sum_{j \in S} w_{ij} )$
encodes how likely the user $i$ is to buy a product if the set $S$ has adopted it.
Then the total revenue for seeding a set $S$ is 
$$f(S) = \sum_{i \in V \setminus S} f_i\left( \sum_{j \in S} w_{ij} \right).$$
This is a non-monotone, submodular function. In our implementation,
each edge weight $w_{ij} \in (0,1)$ is chosen uniformly randomly; further,
$f_i( \cdot ) = ( \cdot )^{\alpha_i}$, where $\alpha_i \in (0,1)$ is chosen
uniformly randomly for each user $i \in V$. 

\textbf{Dataset.} Network topologies from SNAP were used; specifically,
web-Google ($n=875713$, $m = 5105039$), a web graph from
Google, ca-GrQc ($n = 5242, m = 14496$), a collaboration
network from Arxiv General Relativity and Quantum Cosmology,
and ca-Astro ($n = 18772, m = 198110$), a collaboration network
of Arxiv Astro Physics.
In addition, 
a Barabási-Albert random graph was used (BA), with $n = 968$, $m = 5708$.

\begin{figure*}[t] \centering
  \subfigure[Objective, small $k$]{ \label{fig:val-Google}
    \includegraphics[width=0.3\textwidth,height=0.16\textheight]{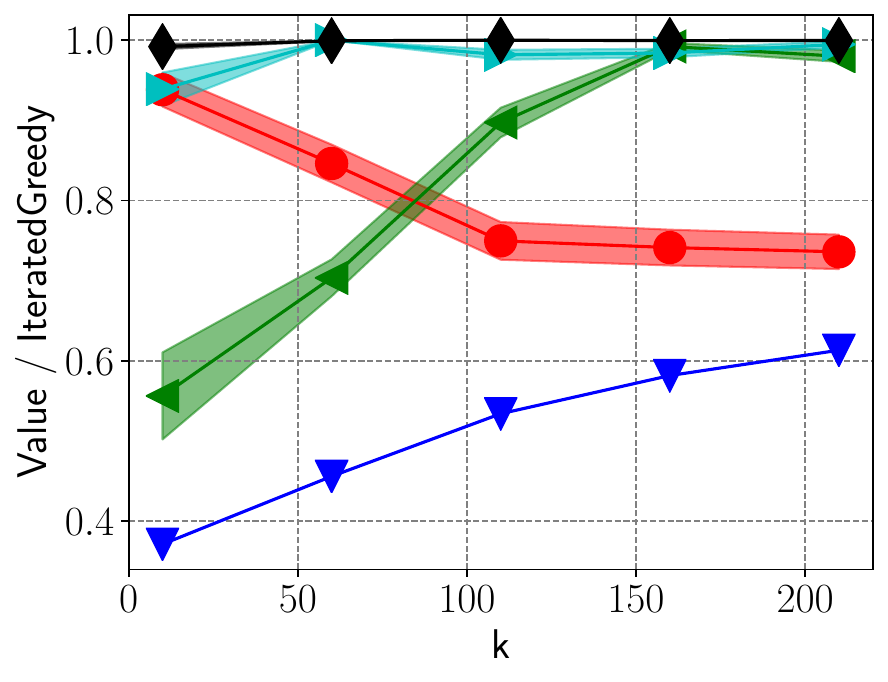}
  }
  \subfigure[Absolute Objective, small $k$]{ \label{fig:legend}
  \includegraphics[width=0.3\textwidth,height=0.16\textheight]{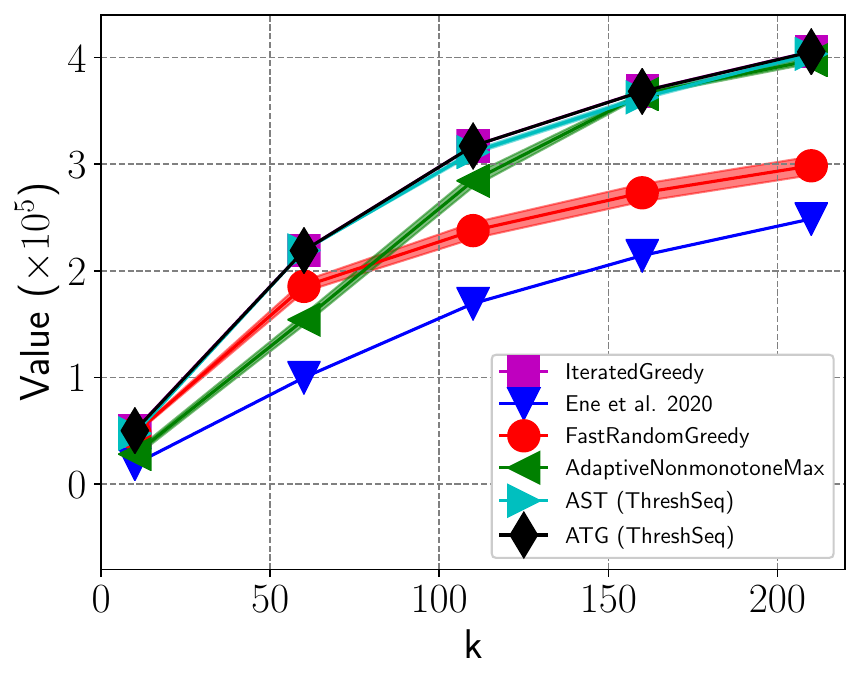}
}
  \subfigure[Rounds, small $k$]{ \label{fig:rounds-Google}
    \includegraphics[width=0.3\textwidth,height=0.16\textheight]{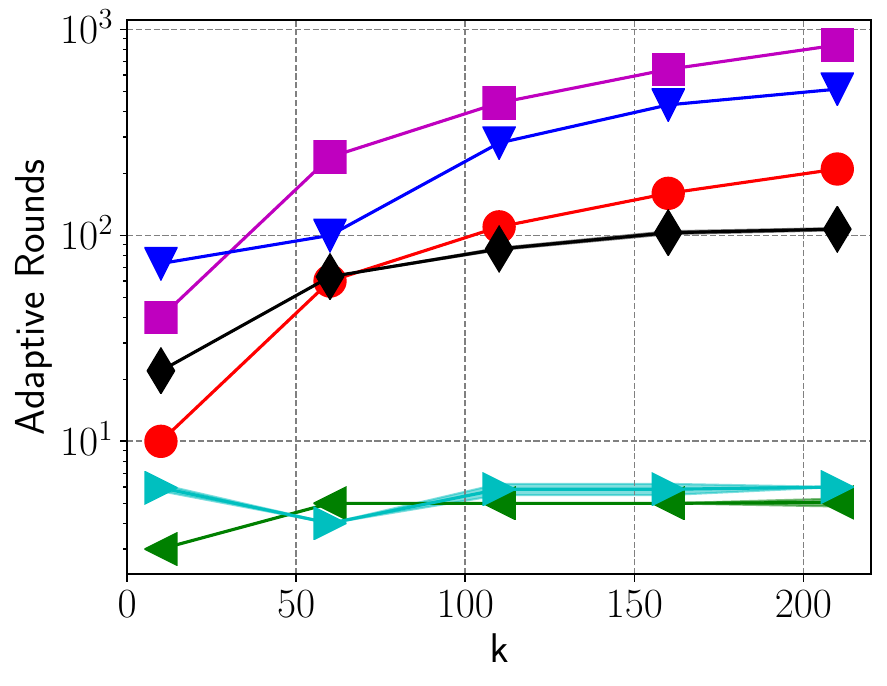}
  }
\subfigure[Objective, large $k$]{ \label{fig:val-Google-largek}
  \includegraphics[width=0.3\textwidth,height=0.16\textheight]{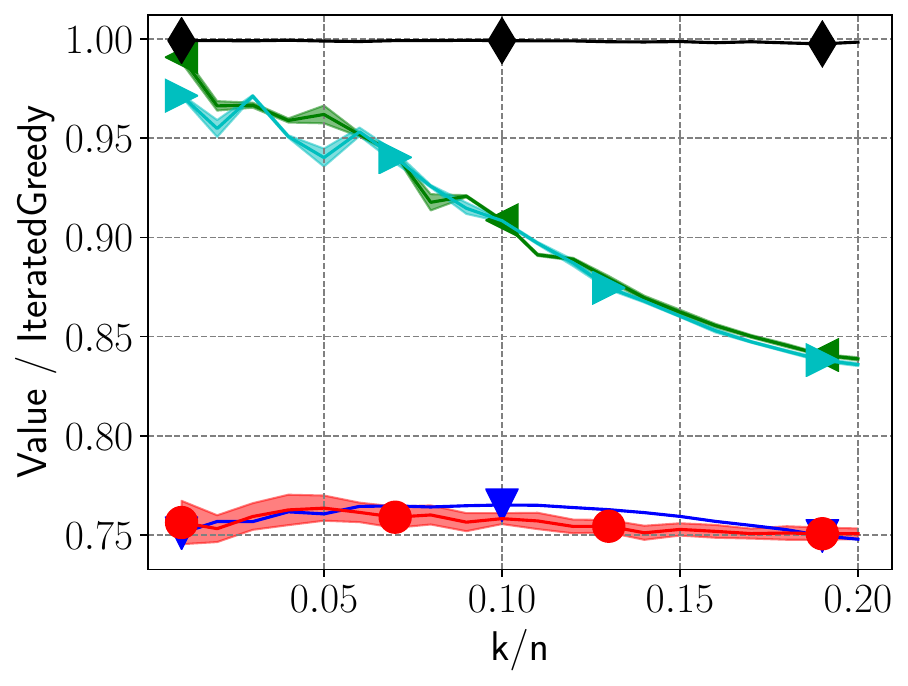}
}
  \subfigure[Queries, large $k$]{ \label{fig:query-Google-largek}
    \includegraphics[width=0.3\textwidth,height=0.16\textheight]{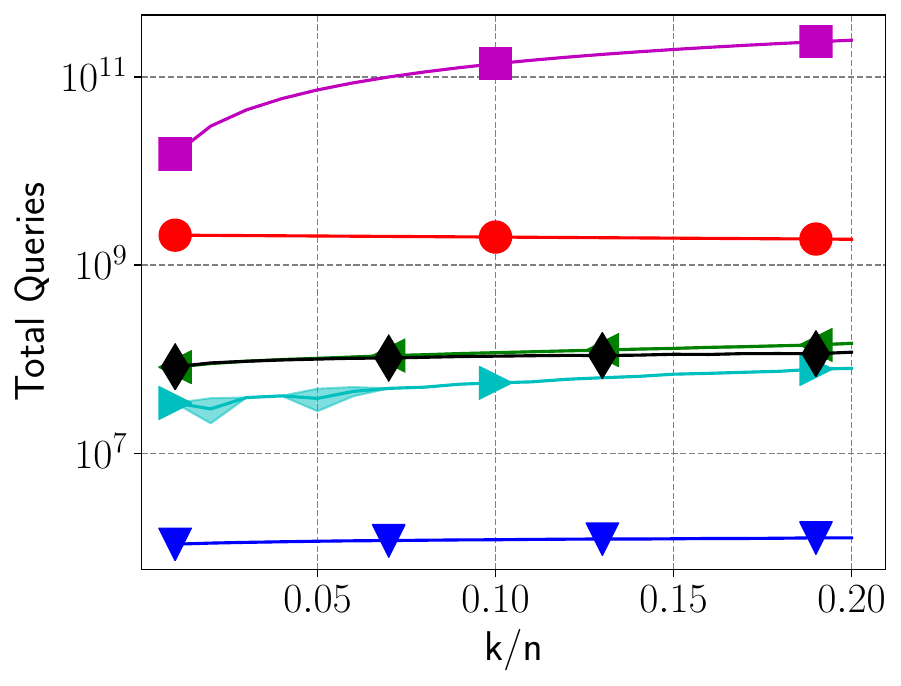}
  }
\subfigure[Rounds, large $k$]{ \label{fig:rounds-Google-largek}
  \includegraphics[width=0.3\textwidth,height=0.16\textheight]{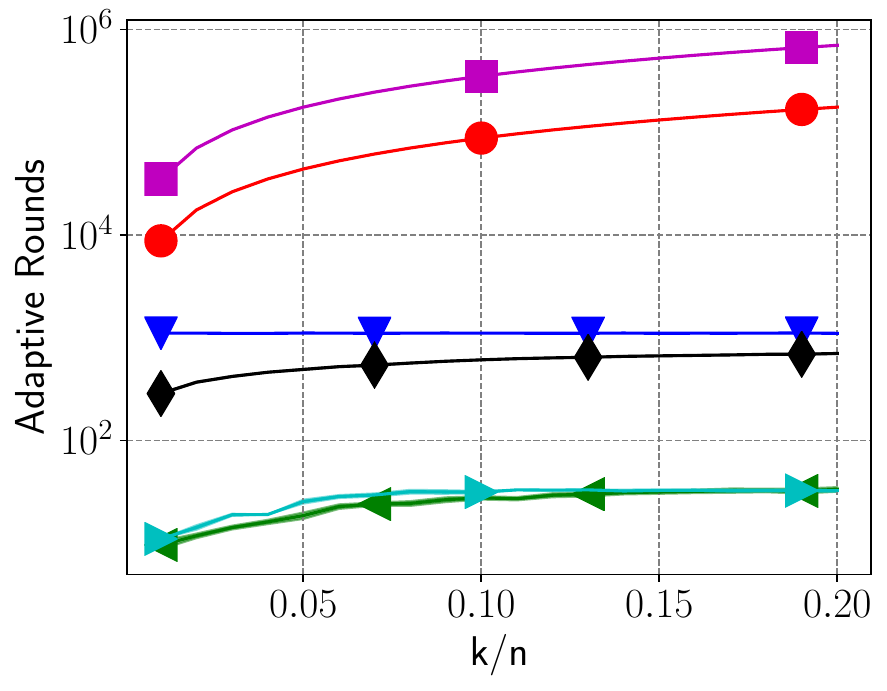}
}
  \caption{Comparison of objective value (normalized by the \iter objective value), 
    total queries, and adaptive rounds on web-Google for the maxcut application for 
    both small and large $k$ values. The large $k$ values are given as a fraction of the number of nodes in the network. The algorithm of \citeA{Ene2020} is run with oracle access to the multilinear extension and its gradient; total queries reported for this algorithm are queries to these oracles, rather than the original set function. The legend in Fig.~\ref{fig:legend} applies to all
    other subfigures.} \label{fig:main}
\end{figure*}
\begin{figure}[ht] \centering %%TODO: remove legend from BA plots \centering
  \subfigure[Objective, small $k$]{ \label{fig:val-astro}
    \includegraphics[width=0.3\textwidth,height=0.15\textheight]{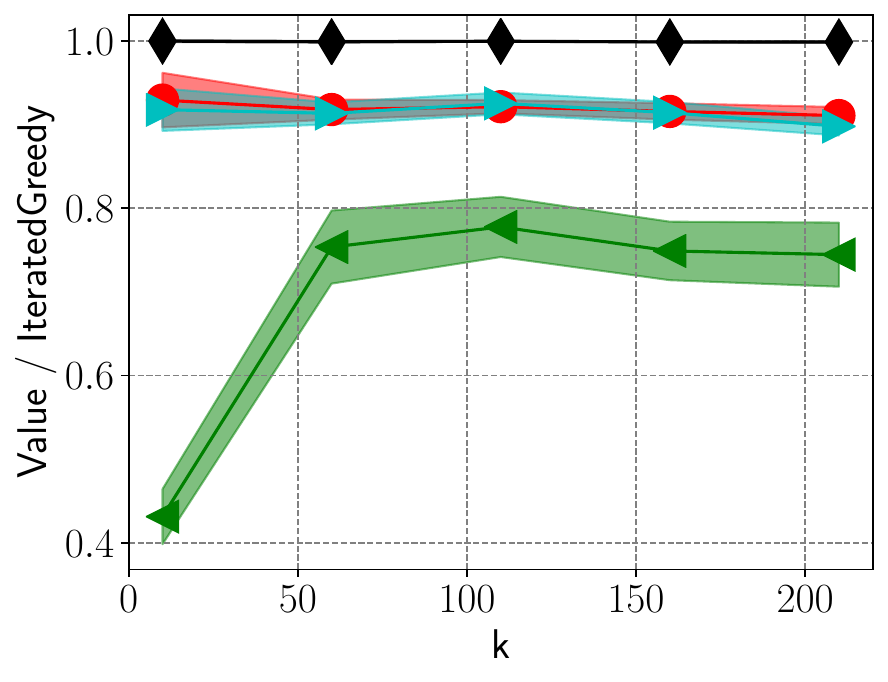}
  }
  \subfigure[Queries, small $k$]{ \label{fig:query-astro}
    \includegraphics[width=0.3\textwidth,height=0.15\textheight]{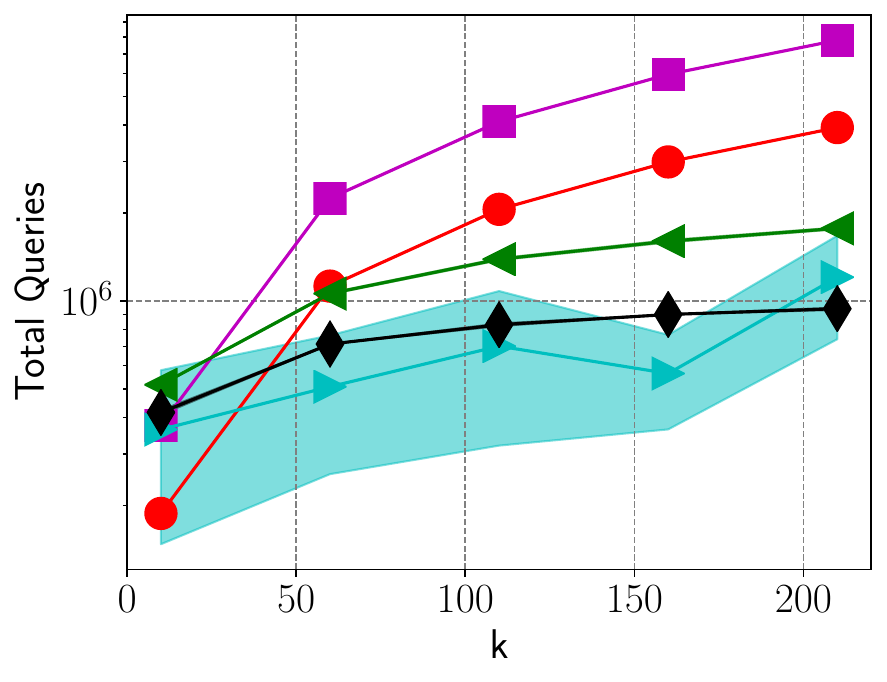}
  }
  \subfigure[Rounds, small $k$]{ \label{fig:rounds-astro}
    \includegraphics[width=0.3\textwidth,height=0.15\textheight]{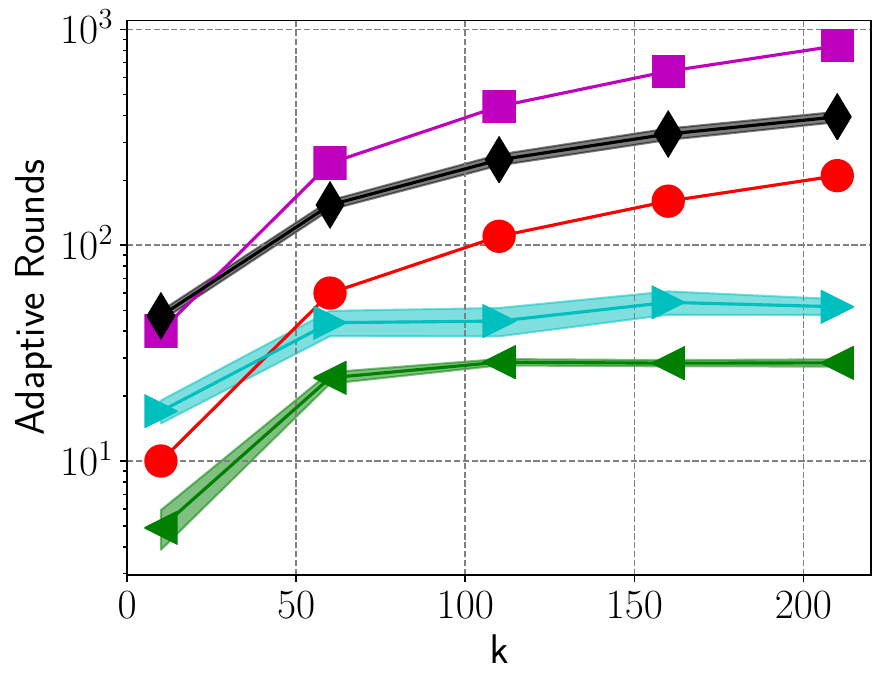}
  }
  \subfigure[Objective, large $k$]{ \label{fig:val-astro-largek}
    \includegraphics[width=0.3\textwidth,height=0.15\textheight]{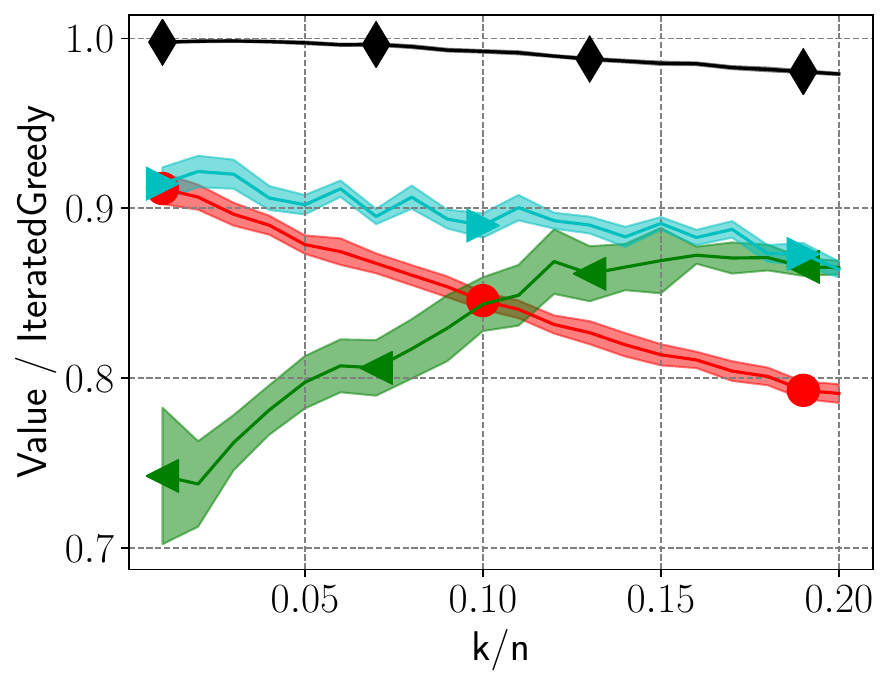}
  }
  \subfigure[Queries, large $k$]{ \label{fig:query-astro-largek}
    \includegraphics[width=0.3\textwidth,height=0.15\textheight]{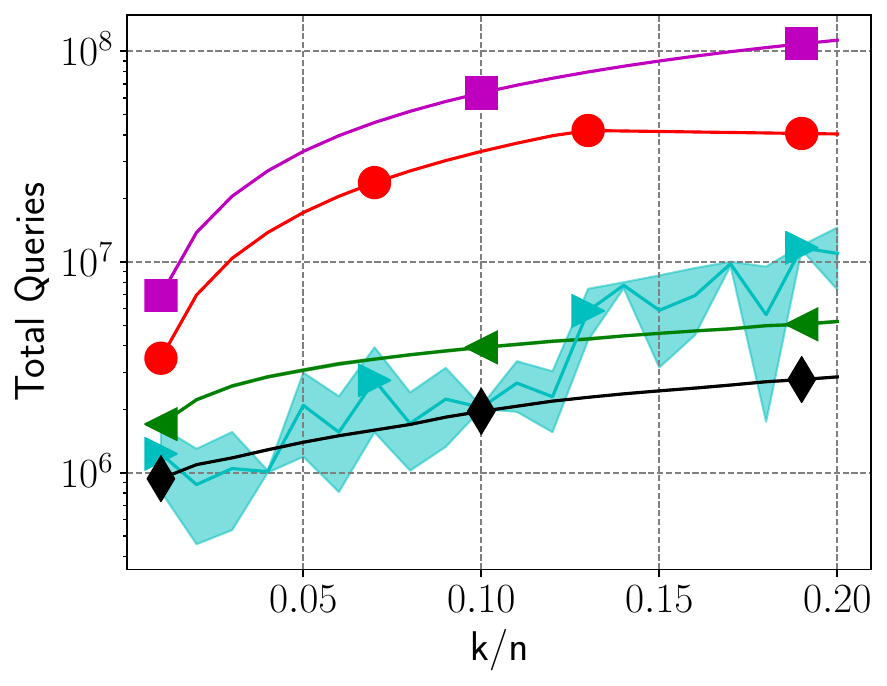}
  }
  \subfigure[Rounds, large $k$]{ \label{fig:rounds-astro-largek}
    \includegraphics[width=0.3\textwidth,height=0.15\textheight]{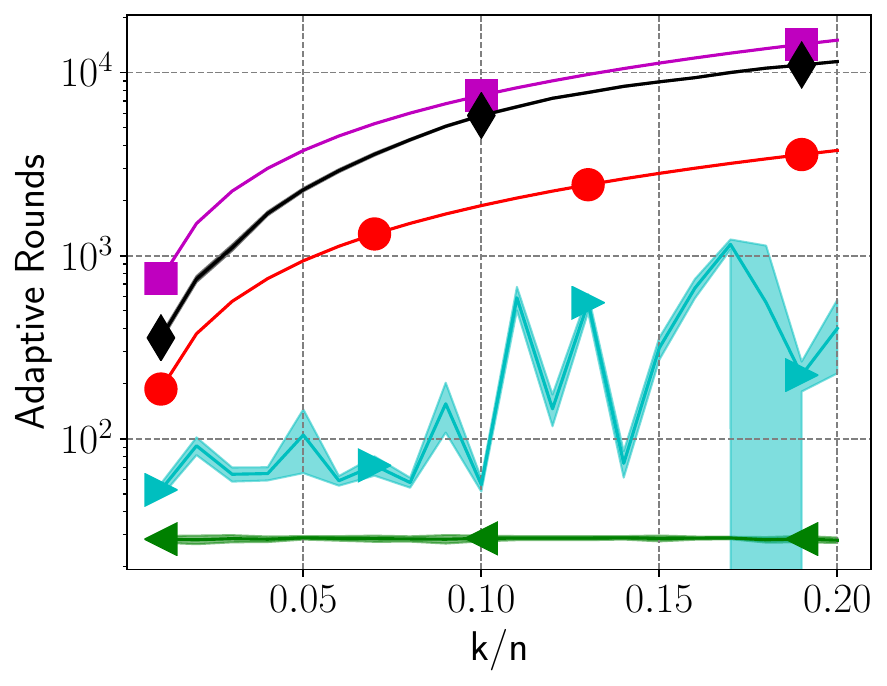}
  }
  \caption{Results for revenue maximization on ca-Astro, for both small and large $k$ values. Large $k$ values are indicated by a fraction of the total number $n$ of nodes. The legends in Fig.~\ref{fig:main} and~\ref{fig:apx-exp-maxcut} apply.} 
  \label{fig:apx-exp-revmax}
\end{figure}
\begin{figure}[ht] \centering
  \subfigure[Objective, BA]{ \label{fig:val-ba}
    \includegraphics[width=0.3\textwidth,height=0.15\textheight]{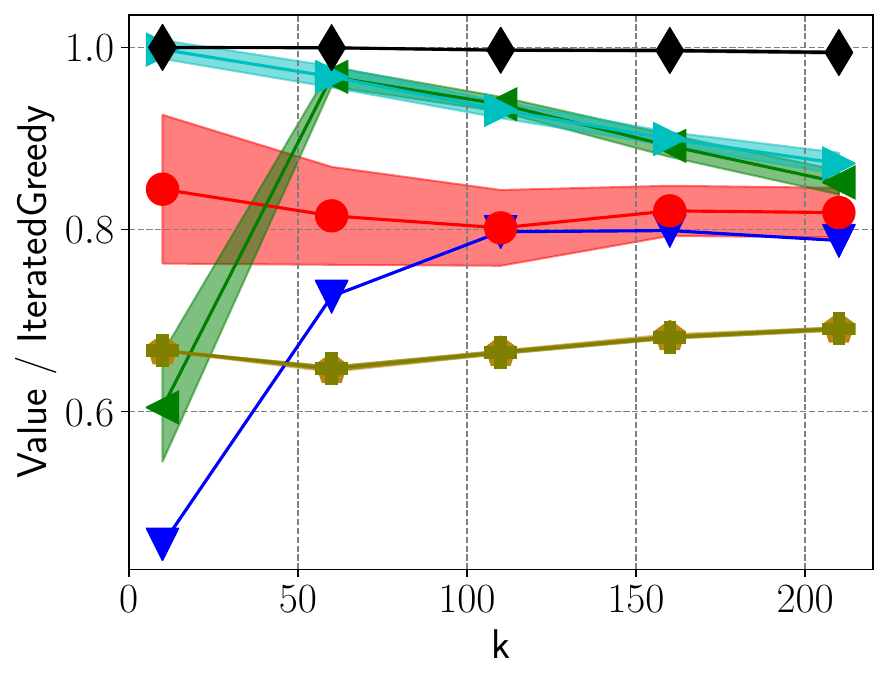}
  }
  \subfigure[Queries, BA]{ \label{fig:query-ba}
    \includegraphics[width=0.3\textwidth,height=0.15\textheight]{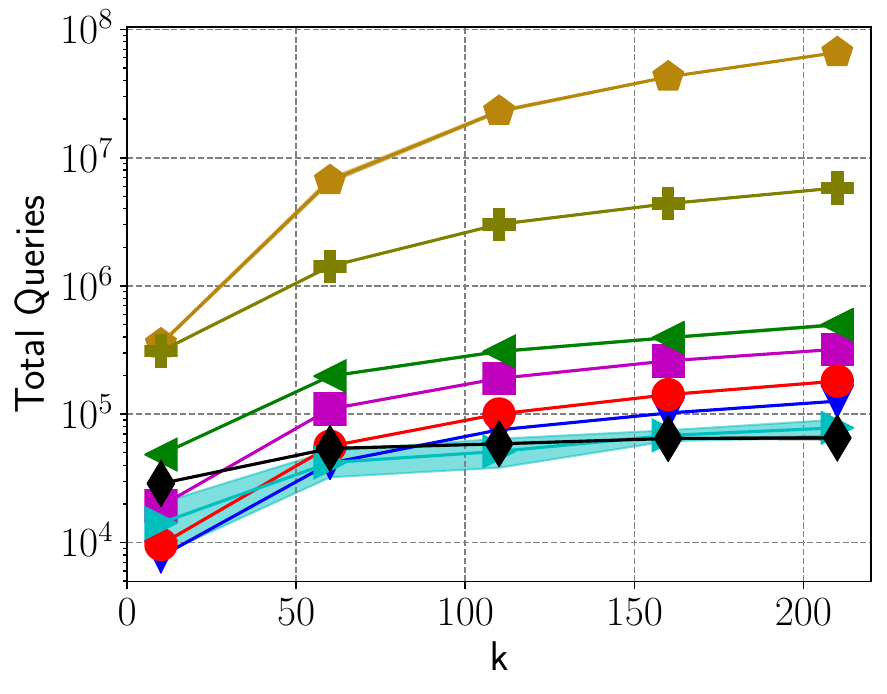}
  }
  \subfigure[Rounds, BA]{ \label{fig:rounds-ba}
    \includegraphics[width=0.3\textwidth,height=0.15\textheight]{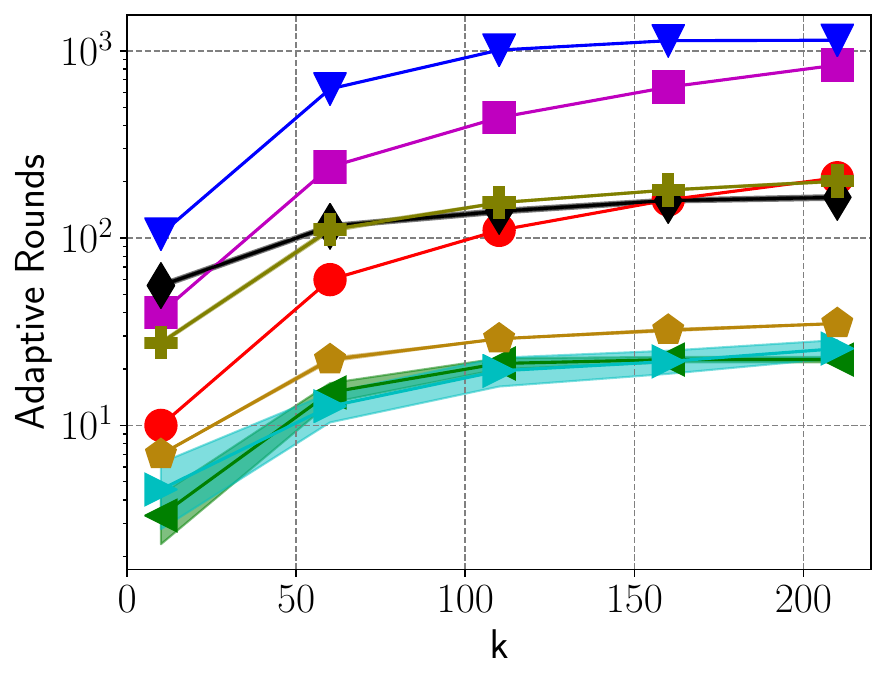}
  }
  \subfigure[Objective, ca-GrQc]{ \label{fig:val-grqc}
    \includegraphics[width=0.3\textwidth,height=0.15\textheight]{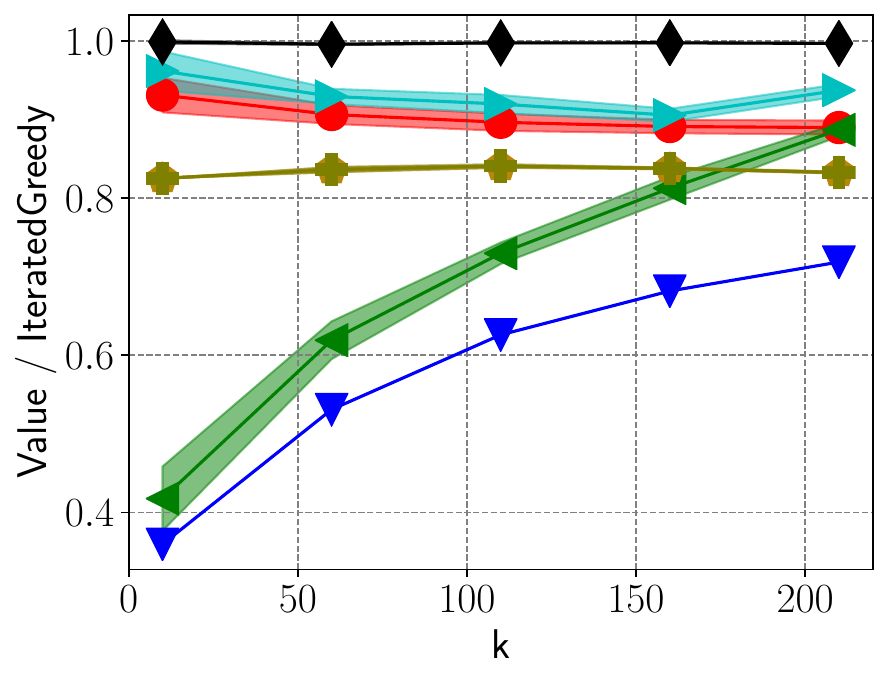}
  }
  \subfigure[Queries, ca-GrQc]{ \label{fig:query-grqc}
    \includegraphics[width=0.3\textwidth,height=0.15\textheight]{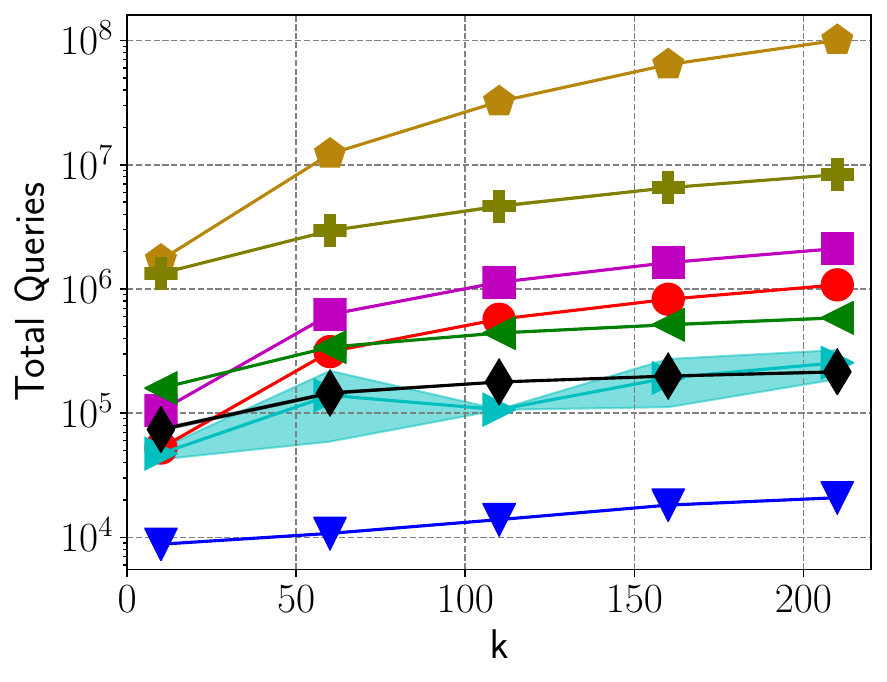}
  }
  % \subfigure[Rounds, ca-GrQc]{ \label{fig:rounds-grqc}
  %   \includegraphics[width=0.3\textwidth,height=0.15\textheight]{}
  % }
  \subfigure[Legend]{ \label{fig:rounds-grqc}
    \includegraphics[width=0.3\textwidth,height=0.15\textheight]{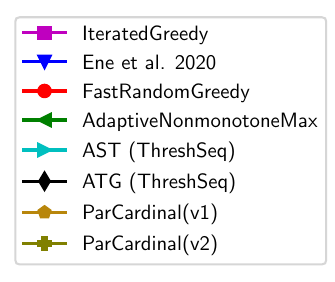}
  }
  \caption{Additional results for maximum cut on BA and ca-GrQc with
  \park algorithms.} 
  \label{fig:apx-exp-maxcut}
\end{figure}

\subsection{Main Results}
In Fig.~\ref{fig:main}, we show representative results for cardinality-constrained
maximum cut on web-Google ($n=875713$)
for both small and large $k$ values. 
Results on other datasets
and revenue maximization are given in Fig.~\ref{fig:apx-exp-maxcut} and~\ref{fig:apx-exp-revmax}.
In addition, results for \citeA{Ene2020} when the multilinear
extension is approximated via sampling are given in Appendix~\ref{apx:exp}.
The algorithms are evaluated by objective value of
solution, total queries made to the oracle, and the number of adaptive rounds (lower is better).
Objective value is normalized by that of \iter. 

In terms of objective value (Figs.~\ref{fig:val-Google} and~\ref{fig:val-Google-largek}), 
our algorithm \latg maintained better than $0.99$ of the \iter value,
while all other algorithms fell below $0.95$ of the \iter value on some instances. 
Our algorithm \atg obtained similar objective value to \anm on larger $k$ values,
but performed much better on small $k$ values. Finally, the algorithm of \citeA{Ene2020}
obtained poor objective value for $k \le 100$ and about $0.75$ of the \iter value
on larger $k$ values. 
It is interesting to observe that the two algorithms with
the best approximation ratio of $1/e$, \citeA{Ene2020} and \frg, returned 
the worst objective values on larger $k$ (Fig.~\ref{fig:val-Google-largek}).
% The next
% best was \fig, followed by \atg and \anm, the last of which performed relatively
% poorly
% on $k < 100$. For $k > 100$, \frg achieved the worst objective value, roughly $0.75$ of
% the \iter value. 
For total queries (Fig.~\ref{fig:query-Google-largek}), 
the most efficient is \citeA{Ene2020}, although
it does not query the set function directly, but
the multilinear extension and its gradient. The most efficient
of the combinatorial algorithms was 
\atg, followed by \latg.
Finally, with respect to the number of adaptive rounds 
(Fig.~\ref{fig:rounds-Google-largek}), the best was \anm, 
closely followed by \atg; the next lowest was \latg, followed by 
\citeA{Ene2020}.

\red{The results in Fig.~\ref{fig:apx-exp-maxcut} and~\ref{fig:apx-exp-revmax}
are qualitatively similar.
Regarding the \park algorithms,
the results in Fig.~\ref{fig:apx-exp-maxcut} demonstrate that
\parktwo is highly parallelizable.
However, despite achieving a 0.172 approximation ratio, 
the objective values of \parkone and \parktwo fell below 0.85 of the \iter.
Because of a constant number of call repetitions to \threseq in \park,
these two algorithms are the most query inefficient
and are roughly two to three orders of magnitude worse
than our algorithms.}
\begin{figure*}[t] \centering
  \subfigure[Objective, BA]{ \label{fig:val-BA-2-ast}
    \includegraphics[width=0.3\textwidth,height=0.16\textheight]{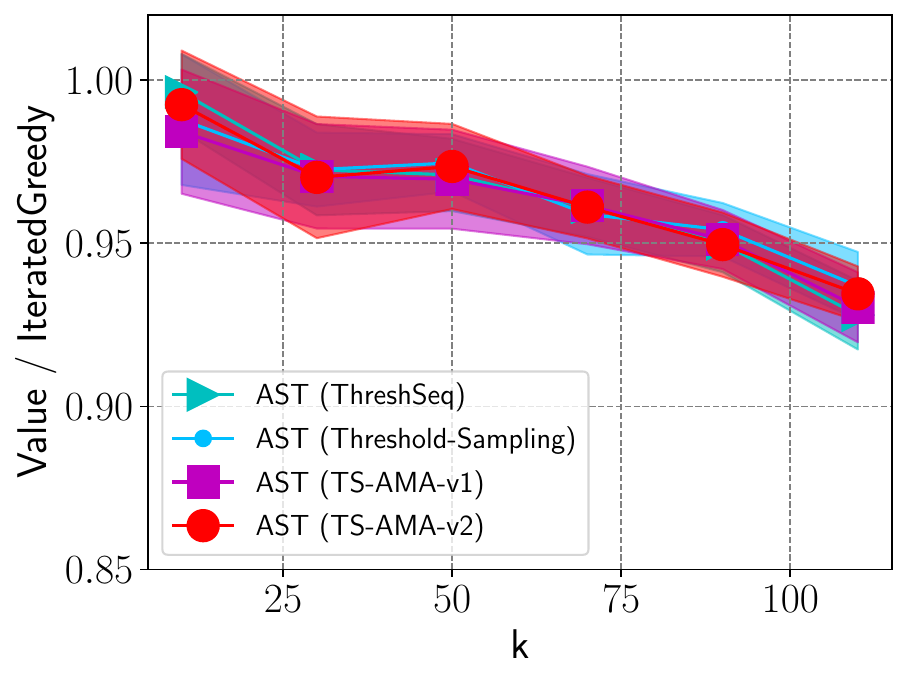}
  }
  \subfigure[Rounds, BA]{ \label{fig:rounds-BA-2-ast}
    \includegraphics[width=0.3\textwidth,height=0.16\textheight]{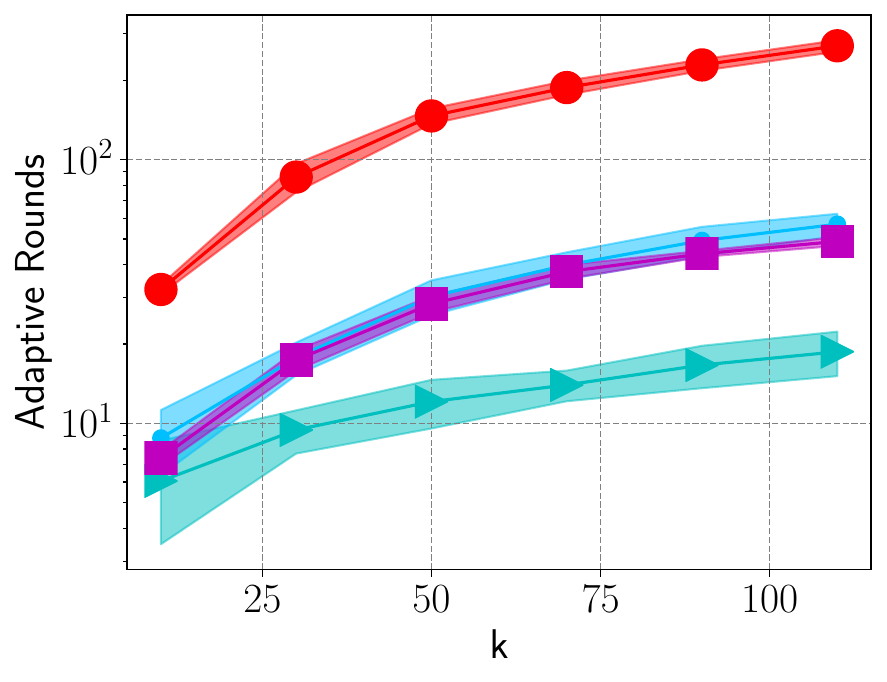}
  }
  \subfigure[Queries, ca-GrQc]{ \label{fig:query-BA-2-ast}
    \includegraphics[width=0.3\textwidth,height=0.16\textheight]{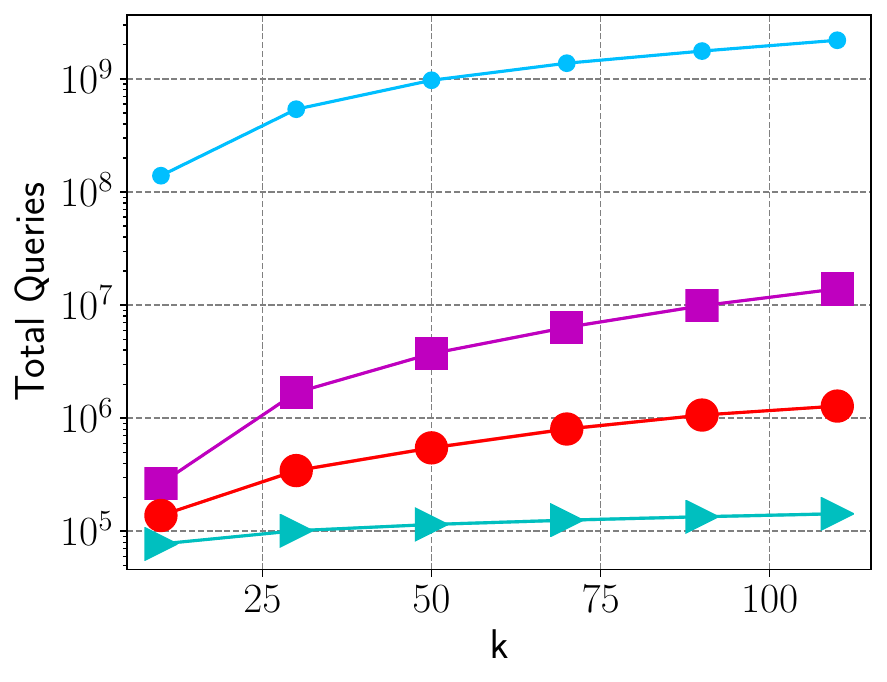}
  }
  \subfigure[Objective, ca-GrQc]{ \label{fig:val-grqc-2-ast}
    \includegraphics[width=0.3\textwidth,height=0.16\textheight]{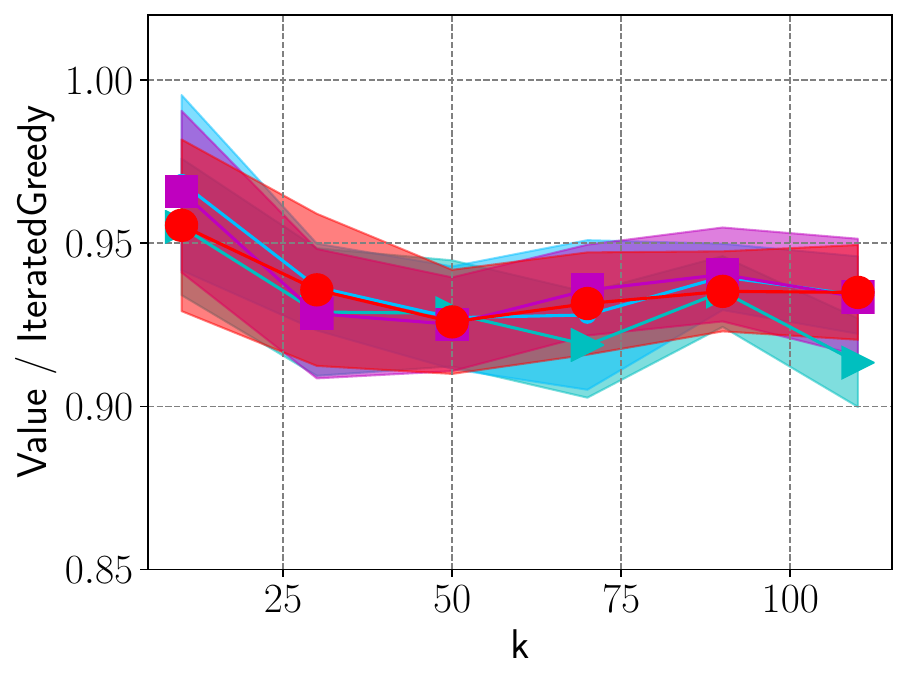}
  }
  \subfigure[Rounds, ca-GrQc]{ \label{fig:rounds-grqc-2-ast}
    \includegraphics[width=0.3\textwidth,height=0.16\textheight]{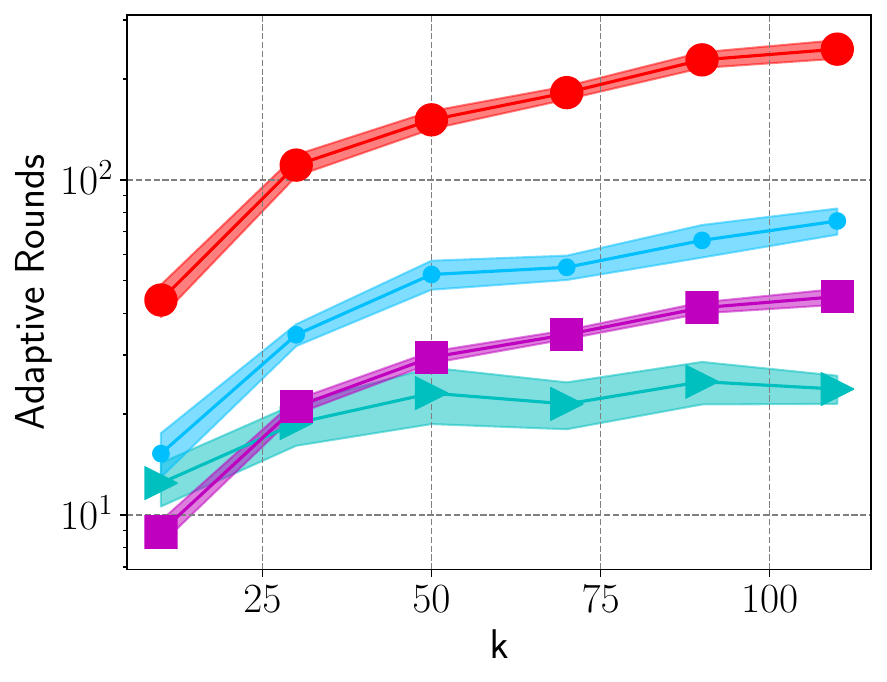}
  }
  \subfigure[Queries, ca-GrQc]{ \label{fig:query-grqc-2-ast}
    \includegraphics[width=0.3\textwidth,height=0.16\textheight]{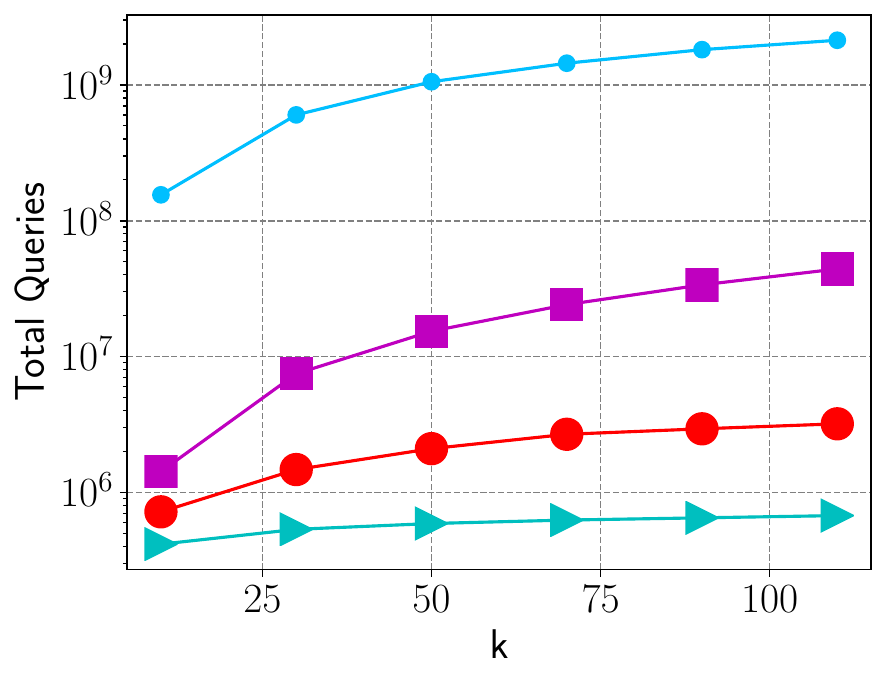}
  }
  \subfigure[Objective, BA]{ \label{fig:val-BA-2-atg}
    \includegraphics[width=0.3\textwidth,height=0.16\textheight]{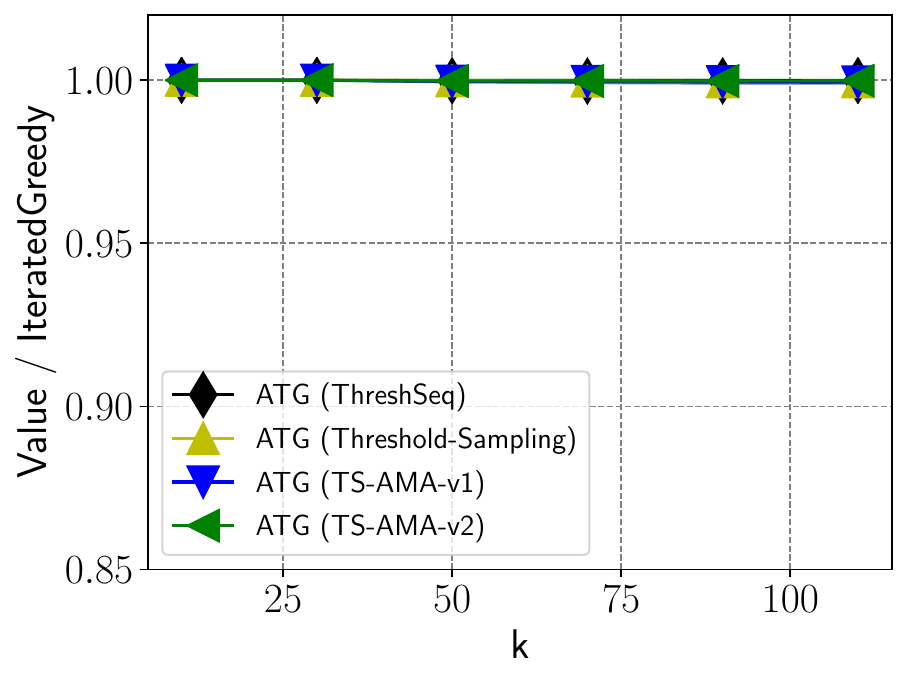}
  }
  \subfigure[Rounds, BA]{ \label{fig:rounds-BA-2-atg}
    \includegraphics[width=0.3\textwidth,height=0.16\textheight]{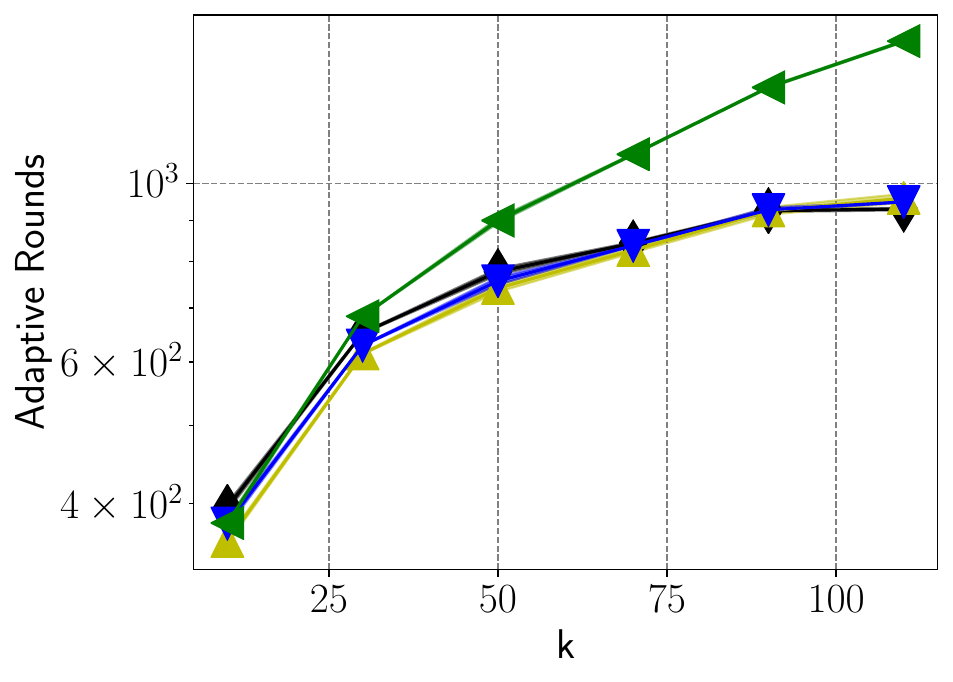}
  }
  \subfigure[Queries, ca-GrQc]{ \label{fig:query-BA-2-atg}
    \includegraphics[width=0.3\textwidth,height=0.16\textheight]{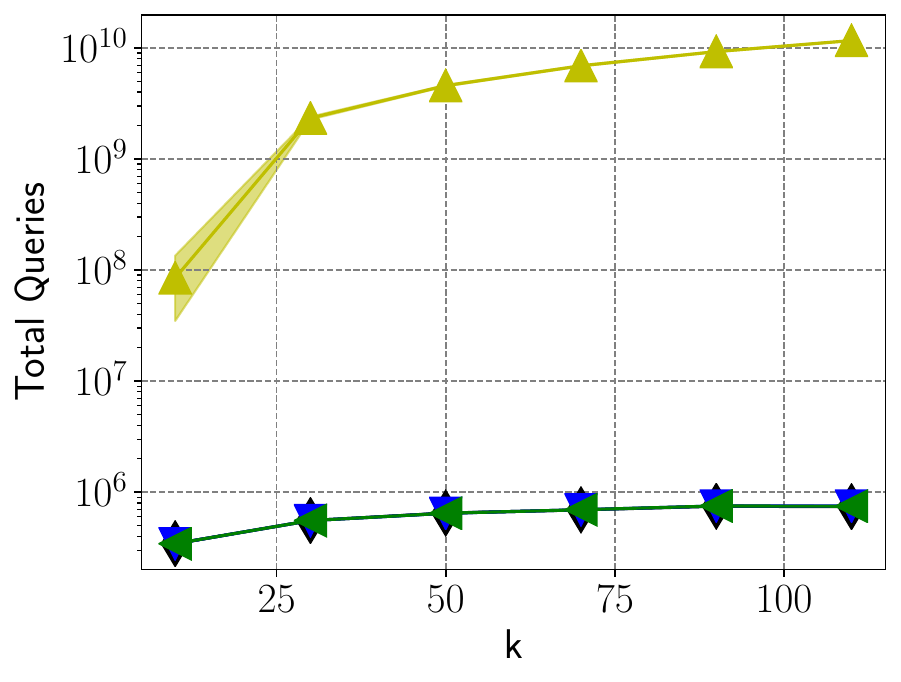}
  }
  \subfigure[Objective, ca-GrQc]{ \label{fig:val-grqc-2-atg}
    \includegraphics[width=0.3\textwidth,height=0.16\textheight]{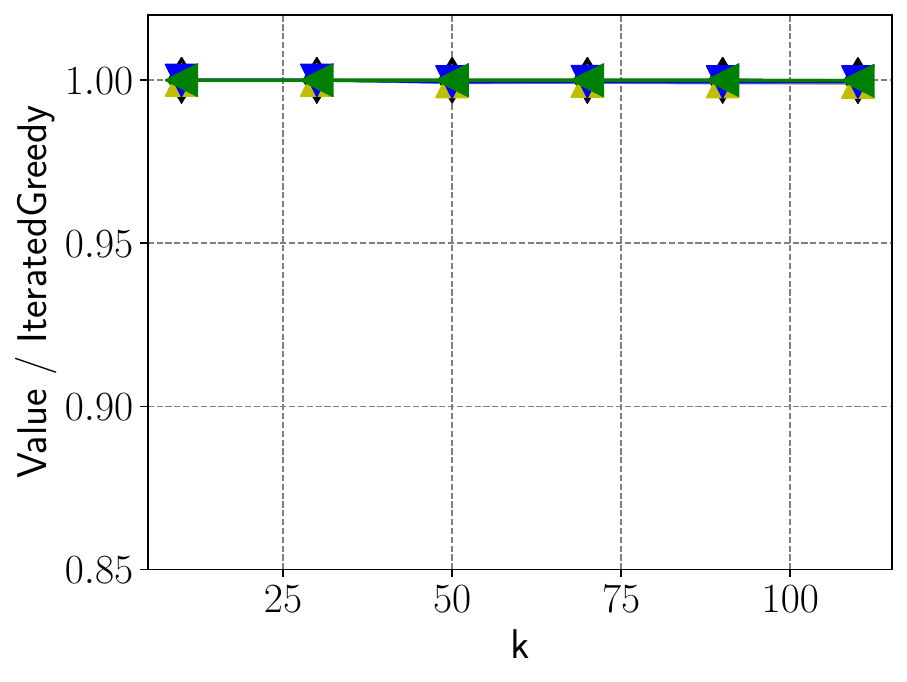}
  }
  \subfigure[Rounds, ca-GrQc]{ \label{fig:rounds-grqc-2-atg}
    \includegraphics[width=0.3\textwidth,height=0.16\textheight]{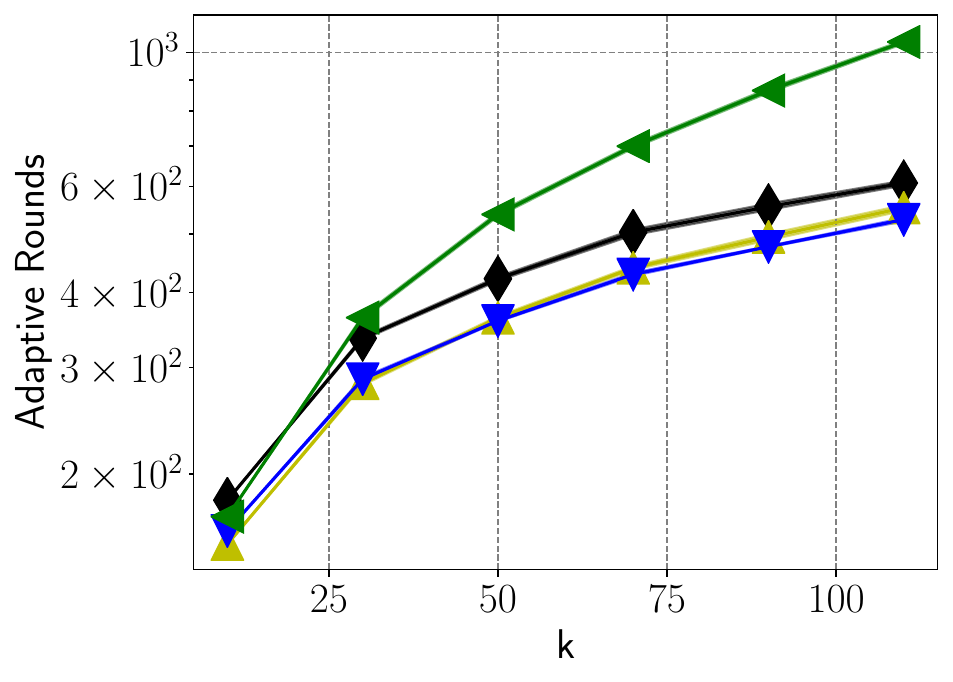}
  }
  \subfigure[Queries, ca-GrQc]{ \label{fig:query-grqc-2-atg}
    \includegraphics[width=0.3\textwidth,height=0.16\textheight]{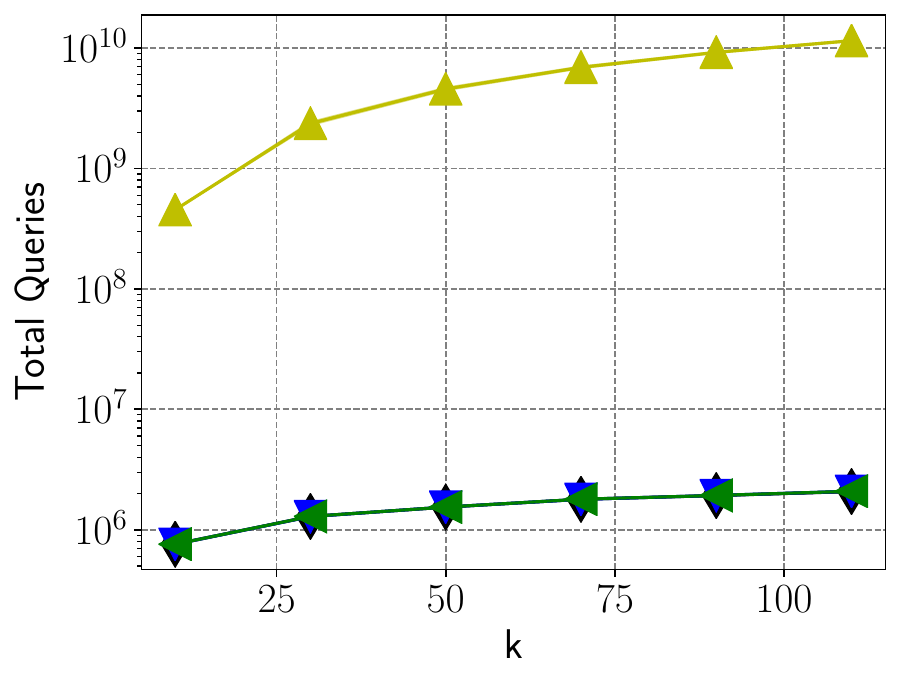}
  }
  \caption{Results of \atg and \latg with four threshold sampling procedures on two datasets. 
  The algorithms are run strictly following pseudocode.
  The legends in Fig.~\ref{fig:val-BA-2-ast} and ~\ref{fig:val-BA-2-atg}
  apply to all other subfigures.
   } \label{fig:main2}
\end{figure*}

\subsection{Comparison of Different Threshold Sampling Procedures.} 
Fig.~\ref{fig:main2} shows the results of \atg and \latg
with different threshold sampling procedures for cardinality-constrained maximum cut 
on two datasets, BA ($n=968$) and ca-GrQc ($n = 5242$).
All the algorithms are run according to pseudocode without any modification.
\threseqama and \tsbin represent the \threseq algorithms without and with
binary search proposed in \citeA{amanatidis2021submodular}.

All four versions of \atg return similar results on objective values;
see Figs.~\ref{fig:val-BA-2-ast} and~\ref{fig:val-grqc-2-ast}.
As for adaptive rounds, \threseq, \thresam, and \threseqama all run in
$\oh{\log(n)}$ rounds, while \tsbin runs in $\oh{\log^2(n)}$ rounds.
By the results in Figs.~\ref{fig:rounds-BA-2-ast} and~\ref{fig:rounds-grqc-2-ast},
our \threseq is the most highly parallelizable algorithm, followed by \threseqama.
\tsbin is significantly worst as what it is in theory.
\red{Perhaps the reason why our algorithm performed better in practical settings 
can be attributed to the following factors.
Theoretically, all algorithms except for \tsbin have the 
same order of adaptivity.
However, the adaptivity of each algorithm is associated with 
different constants, which in turn depend on the design of the algorithm.
Our algorithm maintains two sets $A$, $A' \subseteq A$ during its execution.
% be that, with the two sets maintained by the algorithm,
At the beginning of each iteration, 
the filtration step is with respect to 
set $A$, which contains elements with negative marginal gains;
% is responsible for the filtration step;
at the end of the algorithm, elements with negative marginal gains
are excluded from $A$ to get $A'$,
which is used to get the average marginal gains of the solution.
% the elements with negative marginal gains participate in the filtration
% but are excluded from the set responsible for the average marginal gains of the solution.
From an experimental point of view, 
this implementation allows us to filter out more elements after one round
while maintaining the same average marginal gain.}
With respect to the query calls,
while our \threseq only queries once for each prefix,
\thresam queries $16\lceil \log(2/\hat{\delta})/\hat{\epsi}^2 \rceil$ times,
and both \threseqama and \tsbin query $|V|$ times.
According to Figs.~\ref{fig:query-BA-2-ast} and~\ref{fig:query-grqc-2-ast},
our \threseq is the most query efficient one among all.
Also, the total queries do not increase a lot when $k$ increases.
With binary search, \tsbin is the second best one which has $\oh{n\log^2(n)}$
query complexity.
As for \thresam,
with the input values as $n=968$, $k=10$, and $\epsi = \delta = 0.1$,
it queries about $2\times 10^5$ times for each prefix
which is significantly large.

Regarding to different \latg algorithms,
they all return the competitive solutions compared with \iter;
see Figs.~\ref{fig:val-BA-2-atg} and~\ref{fig:val-grqc-2-atg}.
Since each iteration of \latg calls a threshold sampling subroutine
which is based on the solution of previous iterations and 
a slowly decreasing threshold $\tau$,
after the first filtration of the subroutine, the size of the candidate set is limited.
Thus, there is no significant difference between different {\latg}s
concerning rounds and queries.
However, there are two exceptions.
First, since \tsbin is the only one who has $\oh{\log^2(n)}$ adaptive rounds,
it still runs with more rounds; see Figs.~\ref{fig:rounds-BA-2-atg} and~\ref{fig:rounds-grqc-2-atg}.
Also, the number of queries of \latg with \thresam is significantly large
with the same reason discussed before.

Among all, \threseq proposed in this paper is not only the best theoretically,
but also performs well in experiments compared with
the pre-existing threshold sampling algorithms.

% As for the two variants of \threseq proposed 
% in \cite{amanatidis2021submodular}.
% Algorithms with \threseqama save the rounds but induce more queries,
% while \tsbin saves the queries but increases the rounds.

% In regard to objective value, all different versions of \atg get similar 
% objective values. So do \latg.
% With respect to adaptive rounds, all the threshold procedures
% runs within $\oh{\log(n)}$ rounds except the \tsbin 
% which runs within $\oh{\log(n)\log(k)}$ rounds.
% From Fig.~\ref{fig:rounds-BA-2} and ~\ref{fig:rounds-grqc-2}, 
% \atg with \threseq proposed in this paper is the most efficient.
% The adaptive rounds of \atg(\tsbin) increases apparently
% compared with other \atg algorithms.
% Also, since each iteration of \latg calls a threshold subroutine
% which is based on the previous iterations and 
% a slowly decreasing threshold $\tau$,
% after first filtration, the size of candidate set is limited.
% Thus, there is no significant difference between different \latg.

% As for number of queries, from Fig.~\ref{fig:query-grqc-2}, 
% \atg with \threseq proposed in this paper is still the most query efficient.
% Next, we discuss the results of all algorihtms except ones with \thresam.
% Since the query complexity of \threseqama is $\oh{n^2\log(n)}$,
% \atg(\threseqama) queries more than \atg with \threseq and \tsbin.

%%% Local Variables:
%%% mode: latex
%%% TeX-master: "main"
%%% End:

\section{Discussion and Future Directions}
\red{In this paper, we propose a new threshold sampling algorithm, \threseq,
which solves \tp on non-monotone instances with high probability, optimal adaptivity,
and query complexity.
Different from other state-of-the-art thresholding algorithms,
\threseq is based on maintaining two sets that separately solve \tp.
Then, we propose two approximation algorithms \algOnefullname
and \algTwofullname that are inspired by \iter.}

\red{Compared to state-of-the-art algorithms,
our \threseq exhibits the highest query efficiency with relatively fewer
adaptive rounds;
\latg produces results that are almost identical to IteratedGreedy in terms of objective value, 
and, relatively speaking, is the most query efficient combinatorial algorithm;
\atg is the second most parallelizable algorithm among all algorithms and
delivers reasonably good objective values.
Despite demonstrating good results, it should be noted that our approximation algorithms rely on
the \unc subroutine, which
requires access to the multilinear extension 
and may be impractical in certain settings.
So, in the experiment, we substituted it with a random subset sampling approach
which provides an expected $(1/4)$-approximation ratio.
However, this substitution may result in a decrease in the objective value during the experiment.
}

\red{Further investigations are needed in our work
and there is still significant room for improvement.
 For instance,
in non-monotone submodular maximization problems,
using the objective value of max singleton to guess $\opt$
is a common practice that involves $\oh{\log(n)}$ guesses.
If we can reduce the number of guesses to a constant, 
the query complexity can be improved significantly by a factor of $\oh{\log(n)}$.
Additionally, the current theoretical best approximation ratio is 0.385. 
\revtwo{In our paper, the best we proposed is a $(0.193-\epsi)-$approximation
algorithm ($(0.139-\epsi)-$approximation in our experiment with a random subset unconstraint algorithm).
Hence, the question remains interesting: Can we parallelize other algorithms that provide a better approximation ratio?
}
% which is much higher than the ratios achieved by our algorithm. 
% As the approximation ratios achieved by our algorithms are already tight, 
% To attain a better approximation ratio, we need to devise a different framework.
}

\revtwo{In our paper, 
we focus on the number of queries and the query parallelizability
assuming that the oracle computation time dominates the overall computation duration.
However, in practical scenarios, the function representation significantly influences algorithm performance (\eg the multilinear extension and its gradient have closed forms for maximum cut).
Thus, for any particular application, there is a lot of room for improvement
based on such specific representation of the submodular function.
}

%%% Local Variables:
%%% mode: latex
%%% TeX-master: "main"
%%% End:

\clearpage
\appendix
\section{Probability Lemma and Concentration Bounds} \label{apx:prob}
\begin{lemma}\label{lemma:chernoff}
    (Chernoff bounds \cite{mitzenmacher2017probability}). 
    Suppose $X_1$, ... , $X_n$ are independent binary random variables such that 
    $\prob{X_i = 1} = p_i$. Let $\mu = \sum_{i=1}^n p_i$, and 
    $X = \sum_{i=1}^n X_i$. Then for any $\delta \geq 0$, we have
    \begin{align}
        \prob{X \ge (1+\delta)\mu} \le e^{-\frac{\delta^2 \mu}{2+\delta}}.
    \end{align}
    Moreover, for any $0 \leq \delta \leq 1$, we have
    \begin{align}
        \prob{X \le (1-\delta)\mu} \le e^{-\frac{\delta^2 \mu}{2}}.
    \end{align}
\end{lemma}
\begin{lemma} \label{lemma:indep}
    \cite{chen2021best}.
    Suppose there is a sequence of $n$ Bernoulli trials:
    $X_1, X_2, \ldots, X_n,$
    where the success probability of $X_i$
    depends on the results of
    the preceding trials $X_1, \ldots, X_{i-1}$.
    Suppose it holds that $$\prob{X_i = 1 | X_1 = x_1, X_2 = x_2, \ldots, X_{i-1} = x_{i-1} } \ge \eta,$$ where $\eta > 0$ is a constant and $x_1,\ldots,x_{i-1}$ are arbitrary.
  
    Then, if $Y_1,\ldots, Y_n$ are independent Bernoulli trials, each with probability $\eta$ of
    success, then $$\prob {\sum_{i = 1}^n X_i \le b } \le \prob{\sum_{i = 1}^n Y_i \le b }, $$
    where $b$ is an arbitrary integer.
  
    Moreover, let $A$ be the first occurrence of success in sequence $X_i$.
    Then, $$\ex{A} \le 1/\eta.$$
\end{lemma}
\begin{lemma} \label{lemma:indep2}
    \cite{chen2021best}.
    Suppose there is a sequence of $n+1$ Bernoulli trials:
    $X_1, X_2, \ldots,X_{n+1},$
    where the success probability of $X_i$
    depends on the results of
    the preceding trials $X_1, \ldots, X_{i-1}$,
    and it decreases from 1 to 0.
    Let $t$ be a random variable based on the $n+1$ Bernoulli trials.
    Suppose it holds that 
    $$\prob{X_i = 1 | X_1 = x_1, X_2 = x_2, \ldots, X_{i-1} = x_{i-1}, i\le t } \ge \eta,$$ 
    where $x_1,\ldots,x_{i-1}$ are arbitrary and $0 < \eta < 1$ is a constant.
    Then, if $Y_1,\ldots, Y_{n+1}$ are independent Bernoulli trials, each with probability $\eta$ of
    success, then 
    $$\prob {\sum_{i = 1}^t X_i \le bt } \le \prob{\sum_{i = 1}^t Y_i \le bt }, $$
    where $b$ is an arbitrary integer.
\end{lemma}
\section{Counterexample for \thresam with Non-monotone Submodular Functions}
\label{sec:counterexample}
\citeA{Fahrbach2018} proposed a subroutine, 
\thresam, which returns a solution
$S \subseteq \mathcal{N}$ that $\ex{f(S)/S} \ge (1-\epsi)\tau$
within logarithmic rounds and linear time. 
The full pseudocode for \thresam is given in Alg.~\ref{alg:thresh}.
The notation $\mathcal U ( S, t )$ represents the uniform distribution 
over subsets of $S$ of size $t$. 
\thresam relies upon the procedure \reducedmean, given in Alg.~\ref{alg:rm}.
The Bernoulli distribution input to \reducedmean is the distribution 
$\mathcal D_t$, which is defined as follows. 
\begin{definition} \label{def:indicator}
  Conditioned on the current state of the algorithm,
  consider the process where the set $T \sim \mathcal U (A, t - 1)$ and then
  the element $x \sim A \setminus T$ are drawn uniformly at random.
  Let $\mathcal D_t$ denote the probability distribution over the indicator
  random variable 
  $$I_t = \mathbb I [f( S \cup T  + x) - f( S \cup T ) \ge \tau ]. $$
\end{definition}

Below, we state the lemma of \thresam in \citeA{Fahrbach2018}.
\begin{lemma}[\citeS{Fahrbach2018}] \label{lemm:thresh} 
  The algorithm \thresam outputs $S \subseteq \mathcal N$ with
  $|S| \le k$ in $\oh{\log (n / \delta ) / \epsi }$ adaptive
  rounds such that the following properties hold with 
  probability at least $1 - \delta$: 
  \begin{itemize}
    \item[1.] There are $\oh{n / \epsi}$ oracle queries in expectation.
    \item[2.] The expected average marginal $\ex{f(S)/|S|} \ge (1-\epsi)\tau$.
    \item[3.]  If $|S| < k$, then $f_x(S) < \tau$ for all 
    $x \in \mathcal N$.
  \end{itemize}
\end{lemma}

In \citeA{Fahrbach2018a} and \citeA{kuhnle2021nearly},
the above lemma is used with non-monotone submodular functions;
however, in the case that $f$ is non-monotone, the lemma does not hold.
Alg.~\ref{alg:rm} only checks (on Line \ref{line:Check}) if there is 
more than a constant fraction of 
elements whose marginal gains are larger than the threshold $\tau$.
If there exist elements with
large magnitude, \textit{negative} marginal gains, then
the average marginal gain may fail to satisfy
the lower bound in Lemma~\ref{lemm:thresh}.
As for the proof in \citeA{Fahrbach2018a},
the following inequality does not hold (needed for the proof of Lemma
3.3 of \citeA{Fahrbach2018a}):
\[\ex{\marge{T}{S}}\ge (\ex{I_1}+\ex{I_2} +\ldots +\ex{I_t})\tau,\]
where $|T| = t^*$ and $t\ge t^*/(1+\hat{\epsi})$.
Next, we give a counterexample for the two versions of \thresam used in 
\citeA{Fahrbach2018} and \citeA{Fahrbach2018a}
where the only difference is that
the if condition in Alg.~\ref{alg:thresh}
on Line~\ref{line:stopWithA} changes to $|A|<3k$
in \citeA{Fahrbach2018a}.
\begin{algorithm}[t] 
  \caption{The \reducedmean algorithm of \citeA{Fahrbach2018}} \label{alg:rm}
  \begin{algorithmic}[1]
  \State \textbf{Input:} access to a Bernoulli distribution $\mathcal D$,
 error $\epsi$, failure probability $\delta$
 \State Set number of samples $m \gets 16 \lceil \log (2 / \delta ) / \epsi^2 \rceil$
 \State Sample $X_1, X_2, \ldots, X_m \sim \mathcal D$
 \State Set $\bar \mu \gets \frac{1}{m} \sum_{i = 1}^m X_i$
 \If{ $\bar \mu \le 1 - 1.5 \epsi$ }
 \State \textbf{return} \texttt{true}
 \EndIf
 \State \textbf{return} \texttt{false}
 \end{algorithmic}
 \end{algorithm}
 \begin{algorithm}[t]
  \caption{The threshold sampling algorithm of \citeA{Fahrbach2018}}
  \label{alg:thresh}
  \begin{algorithmic}[1]
    \Procedure{\thresam}{$f, k, \tau, \epsi,\delta$}
    \State \textbf{Input:} evaluation oracle $f:2^{\mathcal N} \to \reals$, constraint $k$,
    threshold $\tau$, error $\epsi$, failure probability $\delta$
    \State Set smaller error $\hat{\epsi} \gets \epsi / 3$
    \State Set iteration bounds $r \gets \lceil \log_{(1 - \hat \epsi)^{-1}}(2n / \delta) \rceil, m \gets \lceil \log(k) / \hat \epsi \rceil$
    \State Set smaller failure probability $\hat \delta \gets \delta / (2r(m + 1))$
    \State Initialize $S \gets \emptyset, A \gets N$
    \For{ $r$ sequential rounds }
    \State Filter $A \gets \{ x \in A : \Delta (x, S) \ge \tau \}$\label{line:filter}
    \If{$|A| = 0$}\label{line:stopWithA}
    \State \textbf{break}
    \EndIf
    \For{$i = 0$ to $m$ in parallel}\label{thresh:for}
    \State Set $t \gets \min \{ \lfloor (1 + \hat \epsi)^i \rfloor, |A|\}$
    \State $rm[t] \gets $\reducedmean $( \mathcal D_t, \hat \epsi, \hat \delta )$ \label{line:Check}
    \EndFor
    \State $t' \gets \min t$ such that $rm[t]$ is \texttt{true}
    \State Sample $T \sim \mathcal U \left( A, \min \{t', k - |S| \} \right)$
    \State Update $S \gets S \cup T$
    \If{ $|S| = k$ }
    \State \textbf{break}
    \EndIf
    \EndFor
    \State \textbf{return} $S$
    \EndProcedure
\end{algorithmic}
\end{algorithm}

\textbf{Counterexample 1.}
Define a set function $f: 2^{\uni} \to \reals$ as follows,
\red{where $a \in \uni$,}
\begin{equation*}
  f(B)= \left\{
    \begin{aligned}
      &n^2 + |B|,&\text{if } a \not \in B\\
      &n^2 +1- (|B|-1)n, &\text{if }a \in B
    \end{aligned}
  \right. .
\end{equation*}
Let $k = n = |\mathcal{N}|>400$, $\tau=1$, $\epsi=0.1$, $\delta=0.1$.
Run $\thresam(f, k, \tau, \epsi, \delta)$.
\begin{proof}
For any $B \subseteq \mathcal{N}$ and $x\in \mathcal{N} \backslash B$, 
the above set function follows that
\begin{equation*}
  \marge{x}{B}= \left\{
    \begin{aligned}
      &1,&&\text{if } x \neq a \text{ and } a \not \in B\\
      &-n,&&\text{if } x \neq a \text{ and } a \in B\\
      &1-|B|(n+1), &&\text{if } x = a
    \end{aligned}
  \right. .
\end{equation*}
Thus, $f$ is a non-negative, non-monotone submodular function.

\red{Consider the first iteration of the outer for loop, where $S = \emptyset$,
and $A = \uni$ after Line~\ref{line:filter}.}
For any $1 < t \le |\mathcal{N}|$, $|T| = t-1$,
\[\ex{I_t} = \prob{f( S \cup T  + x) - f( S \cup T ) \ge \tau }
= \prob{x \neq a \text{ and } a \not \in T}
= 1-\frac{t}{n}.\]
So, with any value of $\epsi$, 
\reducedmean returns \texttt{true} when $t> \epsi n/2$.
The first round of \thresam
samples a set $T_1$ with $t_1'=|T_1| > \epsi n/2$.
Then update $S$ by $S = T_1$.

For the \thresam in \citeA{Fahrbach2018a} with stop condition
$|A| < 3k$, the algorithm stopped here after the first iteration,
no matter what is sampled.
In this case, the expectation of marginal gains of the set returned
by the algorithm would be as follows,
\red{
\begin{align*}
  \ex{\marge{S}{\emptyset}} &= \prob{a \in T_1}\cdot \exc{\marge{T_1}{\emptyset}}{a \in T_1}
  +\prob{a \not \in T_1}\cdot\exc{\marge{T_1}{\emptyset}}{a \not \in T_1}\\
  &= \frac{t_1'}{n}\cdot\left(1-(t_1'-1)n\right) + \frac{n-t_1'}{n}\cdot t_1' \\
  &= t_1'\left(2-t_1'+\frac{1-t_1'}{n}\right) < 0.
\end{align*}}
Next, we consider the \thresam with stop condition $|A|=0$.
After the first iteration discussed above,
if $a \in T_1$, all the elements would be filtered out at the second round.
Algorithm stoped here and returned $S$, say $S_1$.
If $a \not \in T_1$, $T_1$ and $a$ would be filtered out at the second round,
which means $A = \mathcal{N} \backslash (S \cup \{a\})$.
And for any $T \subseteq A $ and 
$x \in A \backslash T$, 
\[f(S \cup T + x) - f(S \cup T) = 1.\]
Therefore, $\ex{I_t} = 1$ for all $t$.
After several iterations, $S = \mathcal{N} \backslash \{a\}$ would be returned, say $S_2$.

The expectation of objective value of the set returned would be as follows,
\red{\begin{align*}
  \ex{\marge{S}{\emptyset}}&= \prob{a \in T_1}\cdot \exc{\marge{S_1}{\emptyset}}{a \in T_1}
  + \prob{a \not \in T_1} \cdot \exc{\marge{S_2}{\emptyset}}{a \not \in T_1}\\
  & = \frac{t_1'}{n}(1 - (t_1'-1)n) + \frac{n-t_1'}{n}(n-1)\\
  &= \frac{2t_1'}{n}-1-t_1'^2+n < 0,
\end{align*}}
since $\epsi = 0.1$, $n > 400$, and $\epsi n/2 < t_1' < \epsi(1+\epsi/3)n/2$.
% If $a \not \in T$, let $g(B) = f(S \cup B)$,
% where $B \subseteq \mathcal{N} \backslash S$ and $a \not \in S$. Then,
% \begin{equation*}
%   g(B)= \left\{
%     \begin{aligned}
%       &n^2 + |S| + |B|,&\text{if } a \not \in B\\
%       &n^2 +1- (|S| +|B|-1)n, &\text{if }a \in B
%     \end{aligned}
%   \right. ,
% \end{equation*}
%   \begin{equation*}
%   \Delta_g(x|B)= \left\{
%     \begin{aligned}
%       &1,&&\text{if } a \neq x \text{ and } a \not \in B\\
%       &-n,&&\text{if } a \neq x \text{ and } a \in B\\
%       &1-(|S|+|B|)(n+1), &&\text{if } a = x
%     \end{aligned}
%   \right. .
% \end{equation*}
% Similarly to the first round, the second round samples another
% $T$ such that $|T| > 2$, and $\ex{\Delta_g(T|\emptyset)} < 0$.
% Thus, each round of the outer for loop samples a set $T$
% such that $\ex{\marge{T}{S}} < 0$.
\end{proof}

\section{Proofs for Section~\ref{sec:ts}} \label{apx:threseq}
\ThresholdFilterSet*
\begin{proof}[proof of Lemma~\ref{lemma:ThresholdFilterSet}]
    After filtering on Line~\ref{line:threshold-filtering},
    any element $x \in V$ follows that $\marge{x}{A} \ge \tau$.
    Therefore, $S_0 = \emptyset$.
    Also, it is obvious that $\marge{x}{A\cup V}=0$.
    So, $S_{|V|} = V$.
    Next, let's consider any $x \in S_{i-1}$.
    By submodularity,
    $$\marge{x}{A \cup T_i} \le \marge{x}{A\cup T_{i-1}}< \tau.$$
    Thus, for any $x \in S_{i-1}$, it holds that $x \in S_i$,
    which means $S_{i-1} \subseteq S_i$.
\end{proof}
\ThresholdProb*
\begin{proof}[proof of Lemma~\ref{lemma:ThresholdProb}]
    Call an element $v_i \in V$ \textit{bad} iff
    $\marge{v_i}{A\cup T_{i-1}} < \tau$;
	and \textit{good}, otherwise.
    The random permutation of $V$ can be regarded
    as $|V|$ dependent Bernoulli trials, with success
    iff the element is bad and failure otherwise.
	Observe that, the probability that an element in $T_i$ is bad,
	when $i \le t$,
	is less than $\epsi/2$, conditioned on the outcomes of the 
	preceding trials.
    We know that,
    $$\prob{i^* < \min\{s,t\}} \le \prob{\#\text{ bad elements in }
    T_{i'} > \epsi i' \text{, where } i' = \min\{s,t\}}.$$
    Let $X_i=1$, if $v_i$ is bad; and $X_i=0$, otherwise.
    Then, $(X_i)$ is a sequence of
    dependent Bernoulli trails.
    And for any $i \le i'$,
    $\prob{X_i = 1} \le \epsi/2$.
    Let $(Y_i)$ be a sequence of independent and identically distributed Bernoulli trails,
    each with success probability $\epsi/2$.
    Then, the probability of $i^* < \min\{s,t\}$ can be bounded as follows:
    $$\prob{i^* < \min\{s,t\}} 
    \le \prob{\sum_{i=1}^{i'} X_i > \epsi i'} \overset{(a)}{\le}
    \prob{\sum_{i=1}^{i'} Y_i > \epsi i'} \overset{(b)}{\le} 1/2, $$
    where Inequality (a) follows from Lemma~\ref{lemma:indep2},
    and Inequality (b) follows from Law of Total Probability
    and Markov's inequality.
\end{proof}

\ThresholdGood*
\begin{proof}[proof of Lemma~\ref{lemma:ThresholdGood}]
    Let $A_j$ be the set $A$ after iteration $j$,
    $T_{j,i}$ be the first $i$ elements of $V_j$ at $j$-th iteration.
    Similarly, define $A_j'$ as the set $A'$ after iteration $j$,
    \red{$T'_{j,i} = T_{j,i} \cap A'$}.

    From Algorithm~\ref{alg:threshold}, $A=\sum_{j=1}^{\ell} T_{j,i^*}$.
    For each $T_{j,i^*}$, there are at least $(1-\epsi)$-fraction of $T_{j,i^*}$ are good.
    Totally, there are at least $(1-\epsi)$-fraction of $A$ are good.

    By Line~\ref{line:threshold-Aprime}, $T'_{j,i}$ only contains the elements
    with nonnegative marginal gains in $T_{j,i}$.
    Therefore, any element in $A'$ has nonnegative marginal gain when added.
    For any good element $v_i \in V_j$, by submodularity, 
    $\marge{v_i}{A'_{j-1}\cup T'_{j,i-1}}\ge 
    \marge{v_i}{A_{j-1}\cup T_{j,i-1}} \ge \tau$.
    Thus, a good element in $A$ is always good in $A'$.
    % there are at least $(1-\epsi)|A|$ $v_i$s such that 
    % $v_i \in A' \subseteq A$ and 
    % $\marge{v_i}{A'_{j-1}\cup T'_{j,i-1}}\ge \marge{v_i}{A_{j-1}\cup T_{j,i-1}}\ge \tau$.
    % And for any $v_i \in A'$, $\marge{v_i}{A'_{j-1}\cup T'_{j,i-1}}\ge 0$.
\end{proof}

\section{\threshgreedy and Modification} \label{apx:threshgreedy}
\begin{algorithm}[t]
   \caption{The \threshgreedy Algorithm of \citeA{Badanidiyuru2014}}
   \label{alg:threshgreedy}
   \begin{algorithmic}[1]
     \Procedure{\threshgreedy}{$f, k, \epsi$}
     \State \textbf{Input:} evaluation oracle $f:2^{\mathcal N} \to \reals$, constraint $k$,
     accuracy parameter $\epsi > 0$
     \State $M \gets \argmax_{x \in \mathcal N} f(x)$; 
     \State $S \gets \emptyset$
     \For{ $\tau = M$; $\tau \ge (1 - \epsi)M/k$; $\tau \gets \tau (1 - \epsi)$}
     \For{ $x \in \mathcal N$ }
     \If{ $f( S \cup \{x\} ) - f(S) \ge \tau$ }
     \State $S \gets S \cup \{x\}$
     \If{ $|S| = k$ }
     \State \textbf{break} from outer \textbf{for}
     \EndIf
     \EndIf
     \EndFor
     \EndFor
     \State \textbf{return} $S$
     \EndProcedure
\end{algorithmic}
\end{algorithm}
In this section, we describe \threshgreedy (Alg.~\ref{alg:threshgreedy}) 
of \citeA{Badanidiyuru2014} and
how it is modified to have low adaptivity. This algorithm achieves 
ratio $1 - 1/e - \epsi$ in $\oh{n \log k}$ queries
if the function $f$ is monotone but has no constant ratio
if $f$ is not monotone. 

The \threshgreedy algorithm works as follows:
a set $S$ is initialized to the empty set. Elements
whose marginal gain exceed a threshold value are added
to the set in the following way:
initially,
a threshold of $\tau = M = \argmax_{a \in \mathcal N} f(a)$
is chosen, which is iteratively decreased by a factor
of $(1 - \epsi)$ until $\tau < M/k$. For each threshold $\tau$,
a pass through all elements
of $\mathcal N$ is made, during which any
element $x$ that satisfies $f(S \cup \{x\}) - f(S) \ge \tau$  is added to the set $S$.
While this strategy leads to an efficient $\oh{n \log k}$ total
number of queries, it also has $\Omega (n \log k)$ adaptivity, as
each query depends on the previous ones.

To make this approach less adaptive,
we replace the highly adaptive pass through $\mathcal N$ (the inner \textbf{for} loop)
with a single call to \thresam, which requires $\oh{\log n}$
adaptive rounds and $\oh{n / \epsi}$ queries in expectation.
This modified greedy approach appears twice in \latg (Alg.~\ref{alg:latg}),
corresponding to the two \textbf{for} loops. 
\section{Improved Ratio for \iter} \label{apx:iter}
\begin{algorithm}[t]
   \caption{The \iter Algorithm of \citeA{Gupta2010a}}
   \label{alg:iter}
   \begin{algorithmic}[1]
     \Procedure{\iter}{$f, k$}
     \State \textbf{Input:} evaluation oracle $f:2^{\mathcal N} \to \reals$, constraint $k$
     \State $A \gets \emptyset$
     \For{ $i \gets 1$ to $k$ }
     \State $a_i \gets \argmax_{x \in \mathcal N} f( A \cup \{x\} ) - f(A)$
     \State $A \gets A \cup \{a_i\}$
     \EndFor
     \State $B \gets \emptyset$
     \For{ $i \gets 1$ to $k$ }
     \State $b_i \gets \argmax_{x \in \mathcal N \setminus A} f ( B \cup \{x\}) - f (B)$
     \State $B \gets B \cup \{b_i\}$
     \EndFor
     \State $A' \gets $ \unc $(A)$
     \State \textbf{return} $C \gets \argmax \{ f(A), f(A'), f(B) \}$
     \EndProcedure
\end{algorithmic}
\end{algorithm}
In this section, we prove an improved approximation ratio
for the algorithm \iter of \citeA{Gupta2010a}, wherein a
ratio of $1/(4 + \alpha)$ is proven given access to
a $1/\alpha$-approximation for \unc. 
We improve this ratio to $\iterratio \approx 0.193$ if $\alpha = 2$.
Pseudocode for \iter
is given in Alg.~\ref{alg:iter}.

\iter works as follows. First a standard greedy procedure is run
which produces set $A$ of size $k$.
Next, a second greedy procedure is run to yield set $B$; during
this second procedure, elements of $A$ are ignored.
A subroutine for \unc is used on $f$ restricted to $A$, which
yields set $A'$. Finally the set of $\{A, A', B \}$ that maximizes
$f$ is returned. 
\begin{theorem}
  Suppose there exists an $(1/\alpha)$-approximation for
  \unc. Then by using this procedure as a subroutine,
  the algorithm \iter has approximation ratio
  $\iterratio$ for \sm.
\end{theorem}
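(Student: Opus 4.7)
The plan is to analyze the two greedy loops jointly, in the spirit of the proof of Theorem~\ref{thm:latg} but without the random-set complications. Write $A_i$ and $B_i$ for the partial sets after $i$ insertions in the first and second greedy loops respectively, and set $\Gamma_i = f(A_i) + f(B_i)$. First I would establish the per-step marginal-gain bounds
$f(A_{i+1}) - f(A_i) \ge \frac{1}{k}\left( f(O \cup A_i) - f(A_i) \right)$ and
$f(B_{i+1}) - f(B_i) \ge \frac{1}{k}\left( f((O \setminus A) \cup B_i) - f(B_i) \right),$
where $O$ is an optimal solution. Both follow from the standard greedy argument: $a_{i+1}$ maximizes $f_x(A_i)$ over all $x \in \mathcal N$, so in particular over $o \in O$, and by submodularity $\sum_{o \in O} f_o(A_i) \ge f(O \cup A_i) - f(A_i)$. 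The same argument applied to the second greedy, whose ground set is $\mathcal N \setminus A$ and whose reference feasible set is $O \setminus A$, yields the second bound.

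Summing gives the recurrence
$\Gamma_{i+1} - \Gamma_i \ge \frac{1}{k} \left( f(O \cup A_i) + f((O \setminus A) \cup B_i) - \Gamma_i \right),$
and the main technical step, which I expect to be the only nontrivial obstacle, is to lower bound $f(O \cup A_i) + f((O \setminus A) \cup B_i)$ by $f(O \setminus A)$. I would apply submodularity to this pair: the union is $O \cup A_i \cup B_i$, which is nonnegative and can be dropped, while the intersection $(O \cup A_i) \cap ((O \setminus A) \cup B_i)$ collapses to $O \setminus A$, using $A_i \subseteq A$ (which kills the $A_i \cap (O \setminus A)$ and $A_i \cap B_i$ cross terms) together with $B_i \subseteq \mathcal N \setminus A$ (which forces $O \cap B_i \subseteq O \setminus A$). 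Nonnegativity of $f$ on the union term then yields the desired bound.

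Substituting into the recurrence gives $\Gamma_{i+1} - \Gamma_i \ge \frac{1}{k}\left( f(O \setminus A) - \Gamma_i \right)$, whose standard solution with $\Gamma_0 \ge 0$ is $\Gamma_k \ge \bigl(1 - (1 - 1/k)^k\bigr) f(O \setminus A) \ge (1 - 1/e) f(O \setminus A)$. Since $C$ is the best of $A$, $A'$, $B$, we have $f(C) \ge \frac{1}{2}(f(A) + f(B)) = \Gamma_k/2 \ge \frac{e-1}{2e} f(O \setminus A)$. Separately, since \unc is a $(1/\alpha)$-approximation applied to $f\restriction_A$ and $O \cap A \subseteq A$, we get $\alpha f(C) \ge \alpha f(A') \ge f(O \cap A)$. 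Combining these two bounds via the submodularity/nonnegativity inequality $f(O) \le f(O \cap A) + f(O \setminus A)$ gives $f(O) \le \bigl( \alpha + \tfrac{2e}{e-1} \bigr) f(C) = \tfrac{e(2+\alpha) - \alpha}{e-1} f(C),$ which rearranges to $f(C) \ge \tfrac{e-1}{e(2+\alpha) - \alpha} f(O)$, matching the claimed ratio.
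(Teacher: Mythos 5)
Your proposal is correct and follows essentially the same approach as the paper's proof in Appendix C: the same per-step greedy bounds via submodularity, the same coupled recurrence on $\Gamma_i = f(A_i) + f(B_i)$ reducing to $f(O \setminus A)$, the same $(1-1/e)$ solution, and the same final combination with the unconstrained maximization bound via $f(O) \le f(O \cap A) + f(O \setminus A)$. The only difference is cosmetic: you spell out the union/intersection collapse $(O \cup A_i) \cap ((O \setminus A) \cup B_i) = O \setminus A$ explicitly, whereas the paper just cites submodularity and disjointness of $A_i$ and $B_i$.
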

\begin{proof}
  For $1 \le i \le k$, let $a_i,b_i$ be as chosen during the
  run of \iter. Define $A_i = \{ a_1, \ldots, a_{i - 1} \}$,
  $B_i = \{ b_1, \ldots, b_{i - 1} \}$. Then
  for any $1 \le i \le k$, we have
  \begin{align*}
    f(A_{i + 1}) + f(B_{i + 1}) - f(A_i) - f(B_i) &= f_{a_i}(A_i) + f_{b_i}( B_i ) \\
 &\ge \frac{1}{k} \sum_{o \in O} f_o( A_i ) + \frac{1}{k} \sum_{o \in O \setminus A} f_o (B_i ) \\
    &\ge \frac{1}{k} \left( \ff{O \cup A_i} - \ff{A_i} + \ff{ (O \setminus A) \cup B_i } - \ff{ B_i }\right) \\
    &\ge \frac{1}{k} \left( \ff{ O \setminus A } - (\ff{A_i} + \ff{B_i}) \right),
  \end{align*}
  where the first inequality follows from the greedy choices,
  the second follows from submodularity,
  and the third follows from submodularity and the fact that $A_i \cap B_i = \emptyset$.
  Hence, from this recurrence and standard arguments, 
  $$f(A) + f(B) \ge (1 - 1/e) \ff{O \setminus A},$$
  where $A,B$ have their values at termination of \iter.
  Since $f(A') \ge f(O \cap A) / \alpha$, 
  we have from submodularity
  \begin{align*}
    f(O) &\le f( O \cap A ) + f(O \setminus A) \\
    &\le \alpha f(A') + (1-1/e)^{-1}(f(A) + f(B)) \\
    &\le ( \alpha + 2(1-1/e)^{-1} ) f(C).  &\qedhere
  \end{align*}
\end{proof}
\section{Proofs for Section~\ref{sec:latg}} \label{apx:latg}
In this section, we provide the
proofs omitted from Section~\ref{sec:latg}.

\LemmaA*
\begin{proof}[Proof of Lemma~\ref{lemm:A}]
  Since each element in $A'$ has nonnegative marginal gain, 
  it always holds that 
  $f(\mathcal{A}_j')\ge f(\mathcal{A}_{j-1}')$.

  From Lemma~\ref{lemma:ThresholdGood}, there are at least $(1-\epsi)$-fraction
  of $A$ are good elements.
  Therefore, there are at least $(1-\epsi)k$ of $a_j'$
  which is good element or dummy element.
  Next, let's consider the following 3 cases of $a_j'$.

  \textbf{Case} $i(j)=1$ and $a_j'$ is good.
  By Theorem~\ref{thm:threshold} and Lemma~\ref{lemma:ThresholdGood}, it holds that 
  $$f(\mathcal{A}_j')-f(\mathcal{A}_{j-1}')\ge \tau_1=M \ge \frac{1}{k} \sum_{o \in O}f(o)\ge \frac{1}{k}f(O).$$

  \textbf{Case} $i(j)>1$ and $a_j'$ is good.
  Since $a_j'$ is returned at iteration $i(j)$ and $a_j'$ is good, it holds that:
  (1) $f(\mathcal{A}_j')-f(\mathcal{A}_{j-1}')\ge \tau_{i(j)}$; 
  (2) at previous iteration $i(j)-1$, \threseq returns $S_{i(j)-1}$ that $|S_{i(j)-1}| < k-|A_{i(j)-2}|$.
  By property (2) and Theorem~\ref{thm:threshold}, for any $o \in O\backslash A_{i(j)-1}$,
  $\marge{o}{A_{i(j)-1}} < \tau_{i(j)-1}$.
  Then,
  \begin{align*}
    f(\mathcal{A}_j')-f(\mathcal{A}_{j-1}')&\ge \tau_{i(j)}= (1-\epsi')\tau_{i(j)-1}\\
    &> \frac{1-\epsi'}{k}\sum_{o\in O\backslash A_{i(j)-1}}\marge{o}{A_{i(j)-1}}\\
    &\ge \frac{1-\epsi'}{k}\left(f(O\cup A_{i(j)-1})-f(A_{i(j)-1})\right)\\
    &\ge \frac{1-\epsi'}{k}\left(f(O\cup A_{i(j)-1})-f(A'_{i(j)-1})\right) \numberthis \label{ineq:sub8}\\
    &\ge \frac{1-\epsi'}{k}\left(f(O\cup A_{i(j)-1})-f(\mathcal{A}_{j-1}')\right), \numberthis \label{ineq:sub7}
  \end{align*}
  where Inequality~\ref{ineq:sub8} follows from the proof of Lemma~\ref{lemma:ThresholdGood},
  and Inequality~\ref{ineq:sub7} follows from $A'_{i(j)-1} \subseteq \mathcal{A}_{j-1}'$.

  \textbf{Case} $i(j)=\ell+1$ (or $a_j'$ is dummy element). In this case, 
  $|A|<k$ when the first \textbf{for} loop ends.
  So, \threseq in the last iteration returns $S_\ell$ that $|S_\ell| < k-|A_{\ell-1}|$.
  From Theorem~\ref{thm:threshold}, it holds that $\marge{o}{A_{\ell}} < \tau_\ell < \frac{M}{ck}$, 
  for any $o \in O\backslash A_{\ell}$.
  Thus,
  $$\frac{M}{ck}> \frac{1}{k}\sum_{o\in O\backslash A_{\ell}} \marge{o}{A_{\ell}}\ge 
  \frac{1}{k}\left(f(O\cup A_\ell) - f(A_\ell)\right) \overset{(a)}{\ge} 
  \frac{1}{k}\left(f(O\cup A_\ell) - f(\mathcal{A}_j')\right),$$
  where Inequality (a) follows from $A_\ell = A$ and $f(\mathcal{A}_j') = f(A')$.

  The first inequality of Lemma~\ref{lemm:A} holds in those three cases
  with at least $(1-\epsi')k$ of $j$.
\end{proof}

\LemmaC*
\begin{proof}[Proof of Lemma~\ref{lemma:three}]
  From Lemma~\ref{lemm:A}, $f(\mathcal{A'}_{j(u)-1})\ge f(\mathcal{A'}_{j(u-1)})$, and
  \begin{align*}
    f(\mathcal{A}_{j(u)}')& \ge \left(1-\frac{1-\epsi'}{k}\right)f(\mathcal{A}_{j(u)-1}')+
    \frac{1-\epsi'}{k}f(O\cup A_{i(j(u))-1})-\frac{M}{ck}\\
    &\ge \left(1-\frac{1-\epsi'}{k}\right)f(\mathcal{A}_{j(u-1)}')+
    \frac{1-\epsi'}{k}f(O\cup A_{i(j(u))-1})-\frac{M}{ck}.
  \end{align*}
  Similarly,
  $$f(\mathcal{B}_{j(u)}')\ge \left(1-\frac{1-\epsi'}{k}\right)f(\mathcal{B}_{j(u-1)}')+
  \frac{1-\epsi'}{k}f((O\backslash A)\cup B_{i(j(u))-1})-\frac{M}{ck}.$$
  By adding the above two inequalities and the submodularity, we have,
  \begin{align*}
    \Gamma_u&\ge \left(1-\frac{1-\epsi'}{k}\right) \Gamma_{u-1}+
    \frac{1-\epsi'}{k}\left(f(O\cup A_{i(j(u))-1})+f((O\backslash A)\cup B_{i(j(u))-1})\right)-\frac{2M}{ck}\\
    &\ge \left(1-\frac{1-\epsi'}{k}\right) \Gamma_{u-1}+\frac{1-\epsi'}{k} f(O\backslash A)-\frac{2M}{ck}.&\qedhere
  \end{align*}
\end{proof}

\begin{lemma} \label{lemm:seven}
  Let $\epsi \in (0,1)$, and suppose
  $c = 8 / \epsi$, $\epsi' = (1 - 1/e) \epsi / 8$,
  and $\beta = 1 - e^{(1 - \epsi')^2}$.
  Then
  \begin{equation}\left( \frac{1 - \frac{2}{c(1-\epsi')}}{\alpha + 
    \frac{2}{\beta}} \right) \ge \left( \frac{e - 1}{\alpha(e - 1) + 2e} - \epsi  
    \right). \label{ineq:ana}
  \end{equation}
\end{lemma}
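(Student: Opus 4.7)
The plan is to view the target value $R^{*} := \frac{\alpha(e-1)}{e-1+2e\alpha} = \frac{1}{1/\alpha + 2e/(e-1)}$ as the limit of the left-hand side as $c \to \infty$ and $\epsi \to 0$ (interpreting $\beta = 1 - e^{-(1-\epsi)^2}$, as is consistent with equation~(\ref{eq:part}) in the main proof), and then to bound the total loss incurred by the finite choices $c = 8/\eta$ and $\epsi = (1-1/e)\eta/8$ by $\eta$. Set $N = 1 - \frac{2}{c(1-\epsi)^{2}}$, $D = \frac{1}{\alpha} + \frac{2}{\beta}$, $N^{*} = 1$, and $D^{*} = \frac{1}{\alpha} + \frac{2e}{e-1}$. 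Then the LHS of~(\ref{ineq:ana}) equals $N/D$ and the RHS equals $R^{*} - \eta$, so the claim reduces to showing $R^{*} - N/D \le \eta$.

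I would begin from the telescoping decomposition
$$ R^{*} - \frac{N}{D} \;=\; \frac{D - D^{*}}{D\,D^{*}} \;+\; \frac{N^{*} - N}{D}, $$
valid since $N^{*} = 1$. Because $(1-\epsi)^{2} \le 1$ implies $\beta \le 1 - 1/e$ and hence $D \ge D^{*}$, both terms are nonnegative and can be bounded separately. For the \emph{numerator} error, $\frac{N^{*} - N}{D} \le \frac{2}{c(1-\epsi)^{2}\,D^{*}} \le \frac{e-1}{c\,e\,(1-\epsi)^{2}}$ after using $D^{*} \ge 2e/(e-1)$. Substituting $c = 8/\eta$ gives $\frac{(e-1)\eta}{8e(1-\epsi)^{2}}$, which (for $\eta \le 1$, which may be assumed WLOG since the inequality is trivial once the RHS is nonpositive) is a small explicit constant times $\eta$.

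For the \emph{denominator} error $\frac{D - D^{*}}{D\,D^{*}}$, I would apply $(1-\epsi)^{2} \ge 1 - 2\epsi$ and the elementary estimate $e^{x} - 1 \le x\,e^{x}$ for $x \ge 0$ to write
$$ D - D^{*} \;=\; \frac{2}{\beta} - \frac{2e}{e-1} \;\le\; \frac{2e\,(e^{2\epsi} - 1)}{(e - e^{2\epsi})(e - 1)} \;\le\; \frac{4e\,\epsi\,e^{2\epsi}}{(e - e^{2\epsi})(e - 1)}. $$
With $\epsi$ small enough that $e - e^{2\epsi}$ is bounded below by a positive absolute constant, this yields $D - D^{*} = O(\epsi)$. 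Dividing by $D\,D^{*} \ge (D^{*})^{2} \ge (2e/(e-1))^{2}$ gives that the first term is also $O(\epsi)$, and after substituting $\epsi = (1-1/e)\eta/8$ this becomes a small explicit constant times $\eta$.

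The main obstacle will be the bookkeeping of absolute constants: one must check that the sum of the two bounds is at most $\eta$ uniformly in $\alpha \ge 1$. The specific choices $c = 8/\eta$ and $\epsi = (1-1/e)\eta/8$ are engineered so that each piece contributes well under half of $\eta$; concretely, after restricting to $\eta \le 1$ one can pin down explicit numerical bounds for $(1-\epsi)^{-2}$ and $e^{2\epsi}$, reducing the final verification to routine arithmetic.
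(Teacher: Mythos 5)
Your proposal is correct, and it takes a genuinely different route from the paper's proof. The paper first establishes the two pointwise bounds $1 - \frac{2}{c(1-\epsi)^2} \ge 1 - \eta$ and $\frac{2}{\beta} \le \frac{2}{1-1/e} + \eta/2$ (the latter via a chain of equivalences ending in $\log x \le x - 1$), so that with $A = 1$ and $B = 1/\alpha + 2/(1-1/e)$ the left-hand side is at least $\frac{A - \eta}{B + \eta}$; it then closes with a self-contained ratio manipulation showing $\frac{A-\eta}{B+\eta} \ge \frac{A}{B} - \eta$ using only $A = 1$ and $B \ge 2$. You instead view the target $R^* = 1/D^*$ as a limit and use the exact additive split $R^* - N/D = \frac{D - D^*}{D D^*} + \frac{N^* - N}{D}$, bounding the ``numerator error'' and ``denominator error'' separately; for the latter you invoke $(1-\epsi)^2 \ge 1 - 2\epsi$ and $e^x - 1 \le x e^x$ rather than the paper's logarithm trick. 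Both are elementary; your decomposition is perhaps more transparent about where the loss comes from, while the paper's $\frac{A-\eta}{B+\eta} \ge \frac{A}{B} - \eta$ step is shorter and cleanly isolates the final ratio algebra from the analysis of $\epsi$ and $c$. You also correctly flagged the sign typo in the lemma statement ($\beta$ should be $1 - e^{-(1-\epsi)^2}$). The one thing your write-up leaves undone is the final numerical check that the two error terms sum to at most $\eta$ for all $\eta \in (0,1)$ and $\alpha > 0$; this does go through (the numerator term is at most roughly $0.1\,\eta$ and the denominator term at most roughly $0.05\,\eta$ with your bounds), but the paper carries the analogous verification out in full, so you would need to fill that in to have a complete argument.
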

\begin{proof}[Proof of Lemma~\ref{lemm:seven}]
  We start with the following two inequalities, which are verified below.
  \begin{align}
    1 - \frac{2}{c(1-\epsi')} \ge 1 - \epsi& ,\label{ineq:5} \\
    %\frac{2}{1 - \epsi'} \le 2 + \epsi / 2&, \text{ and} \label{ineq:6}\\
    \frac{2}{1 - e^{-(1 - \epsi')^2}} \le \frac{2}{1 - 1/e} + \epsi / 2& .\label{ineq:7}
  \end{align}
  Let $A = 1$, $B = 1 / \alpha + 2/(1 - 1/e)$. 
  From the inequalities above,
  the left-hand side of (\ref{ineq:ana})
  is at least $\frac{A - \epsi}{B + \epsi}$ and
  \begin{align*}
    \frac{A - \epsi}{B + \epsi} \ge \frac{A}{B} - \epsi &\iff \epsi \ge \frac{A}{B} - \frac{A - \epsi}{B + \epsi} \\
                                                     &\iff 1 \ge \frac{A}{\epsi B} - \frac{A}{\epsi (B + \epsi )} + \frac{1}{B + \epsi}.
  \end{align*}
  Next,
  \begin{align*}
    \frac{A}{\epsi B} - \frac{A}{\epsi (B + \epsi )} + \frac{1}{B + \epsi} = \frac{A}{B (B + \epsi )} + \frac{1}{B + \epsi} \le 1/4 + 1/2 < 1,
  \end{align*}
  since $B \ge 2$ and $A = 1$. Finally, $A/B = \frac{\alpha (e - 1)}{e - 1 + 2\alpha e}$.

  \textbf{Proof of (\ref{ineq:5}).}
  \begin{align*}
    c \ge 8 / \epsi  \ge \frac{2}{\epsi(1 - \epsi')},
  \end{align*}
  since $\epsi' = (1 - 1/e) \epsi / 8 < 3/4$.

  % \paragraph{Proof of (\ref{ineq:6})}
  % \begin{align*}
  %   \frac{2}{1 - \epsi'} \le 2 + \epsi / 2 &\iff 2 \le (2 + \epsi / 2)( 1 - \epsi' ) \\
  %   &\iff \epsi' ( 2 + \epsi / 2 ) \le \epsi / 2 \\
  %   &\iff \epsi' \le \frac{\epsi / 2}{2 + \epsi / 2} 
  % \end{align*}
  % which is satisfied by the choice of $\epsi'$.

  \textbf{Proof of (\ref{ineq:7}).}
  Let $\lambda = 1 - 1/e$, $\kappa = e^{-(1-\epsi' )^2}$.
  Inequality (\ref{ineq:7})
  is satisfied iff.
  \begin{align*}
    2 \lambda \le 2(1 - \kappa ) + \frac{\lambda \epsi (1 - \kappa )}{2} &\iff 2\lambda \le 2 - 2 \kappa + \lambda \epsi / 2 - \lambda \epsi \kappa / 2 \\
                                                                        &\iff 2 \kappa + \lambda \epsi \kappa /2 \le \lambda \epsi /2 + 2 - 2 \lambda \\
                                                                        &\iff \kappa = e^{-(1 - \epsi')^2} \le \frac{\lambda \epsi / 2 + 2 - 2 \lambda }{2 + \lambda \epsi / 2} \\
    &\iff (1 - \epsi' )^2 \ge \log \left( \frac{2 + \lambda \epsi / 2}{\lambda \epsi / 2 + 2 - 2 \lambda} \right),
  \end{align*}
  which in turn is satisfied if
  $$2 \epsi' \le 1 - \log \left( \frac{2 + \lambda \epsi / 2}{\lambda \epsi / 2 + 2 - 2 \lambda} \right).$$
  Then
  \begin{align*}
    2 \epsi' = \lambda \epsi / 4 \le \frac{2 + \lambda \epsi /2 - 4 \lambda}{2 - \lambda} &\le \frac{2 + \lambda \epsi / 2 - 4 \lambda}{2 + \lambda \epsi / 2 - 2 \lambda} \\
                                                                                     &= \frac{2 ( \lambda \epsi /2 + 2 - 2 \lambda ) - 2 - \lambda \epsi / 2}{\lambda \epsi /2 + 2 - 2 \lambda} \\
                                                                                     &= 2 - \frac{2 + \lambda \epsi / 2}{\lambda \epsi /2 + 2 - 2\lambda} \\
                                                                                     &= 1 - \left( \frac{2 + \lambda \epsi / 2}{\lambda \epsi /2 + 2 - 2\lambda} - 1 \right) \\
                                                                                     &\le 1 - \log \left( \frac{2 + \lambda \epsi / 2}{\lambda \epsi /2 + 2 - 2\lambda} \right),
  \end{align*}
  where we have used $\log x \le x - 1$, for $x > 0$.
\end{proof}

\section{Multilinear Extension and Implementation of \citeA{Ene2020}} \label{apx:ene}
In this section, we describe the multilinear extension and implementation of 
\citeA{Ene2020}. The multilinear extension $F$
of set function $f$ is defined to be, for $x \in [0,1]^n$:
$$ F(x) = \ex{ f(S) } = \sum_{S \subseteq V} f(S) Pr(S), $$
where $$Pr(S) = \prod_{i \in S} x_i \cdot \prod_{i \not \in S} (1 - x_i).$$
The gradient is approximated by using the central difference in each coordinate
$$ \frac{dF}{dx}(x) \approx \frac{F( x + \gamma / 2 ) - F(x - \gamma /2)}{\gamma}, $$
unless using this approximation required evaluations outside the unit cube, in which
case the forward or backward difference approximations were used. 
The parameter $\gamma$ is set to $0.5$.

Finally, for the maximum cut application, closed forms expressions exist for
both the multilinear extension and its gradient. These are:
$$ F(x) = \sum_{(u,v) \in E} x_u \cdot (1 - x_v ) + x_v \cdot (1 - x_u), $$
and
$$(\nabla F)_u  = \sum_{v \in N(u)} (1 - 2x_v). $$

\textbf{Implementation.} The algorithm was implemented as specified in the pseudocode on page
19 of the arXiv version
of \citeA{Ene2020}. We followed the same parameter choices as in
\citeA{Ene2020}, although we set $\epsi = 0.1$ as setting it to $0.05$
did not improve the objective value significantly but caused a large increase
in runtime and adaptive rounds. The value of $\delta = \epsi^3$ was used
after communications with the authors. 
\begin{figure*}[t] \centering
  \subfigure[Objective value]{ \label{fig:val-Ene}
    \includegraphics[width=0.28\textwidth,height=0.15\textheight]{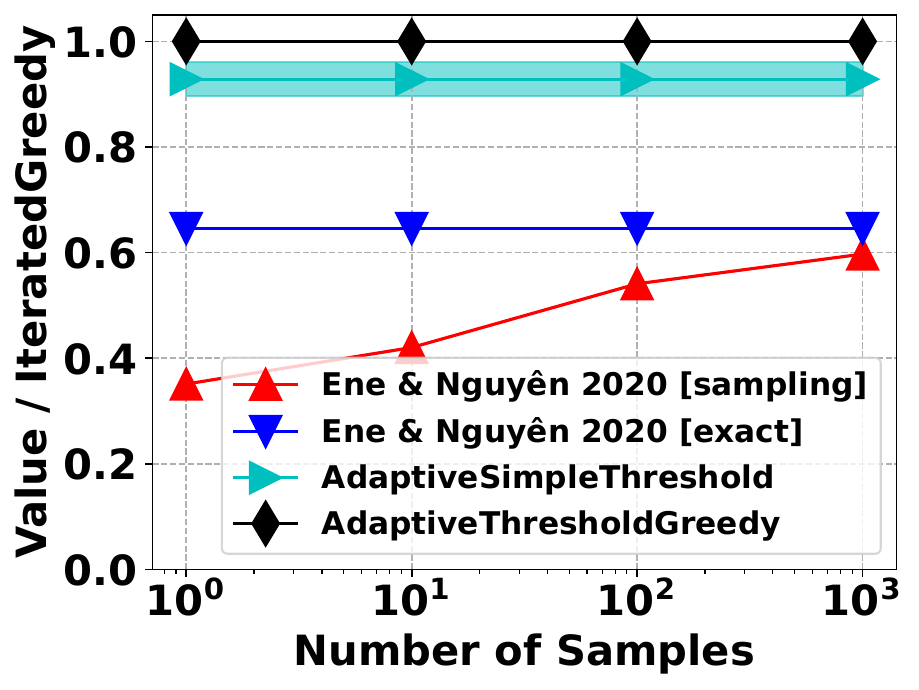}
  }
  \subfigure[Queries to set function]{ \label{fig:queryEne}
    \includegraphics[width=0.28\textwidth,height=0.15\textheight]{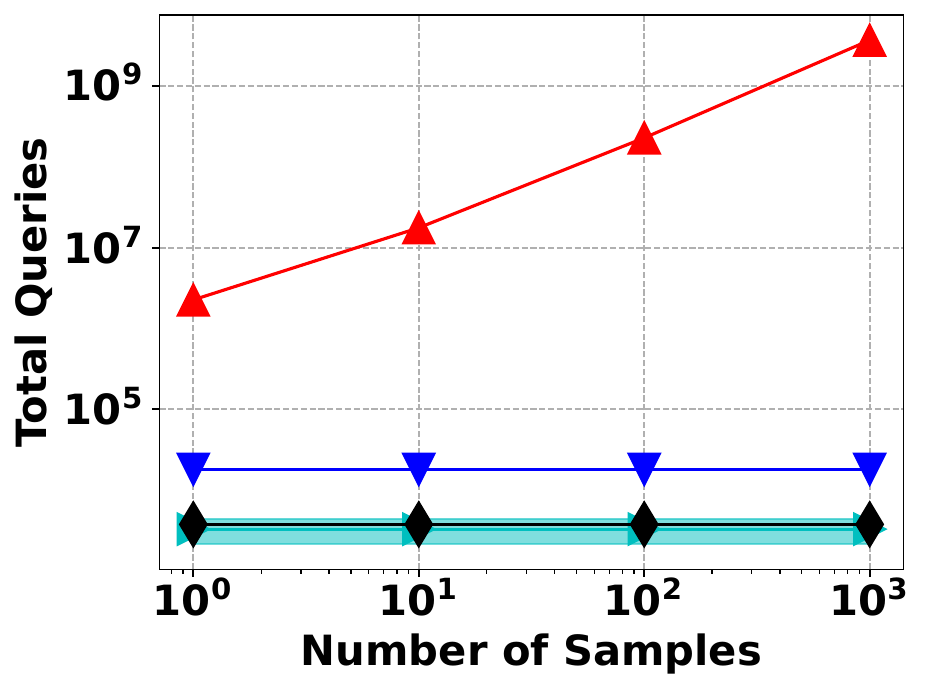}
  }
  \subfigure[Adaptive Rounds]{ \label{fig:roundsEne}
    \includegraphics[width=0.28\textwidth,height=0.15\textheight]{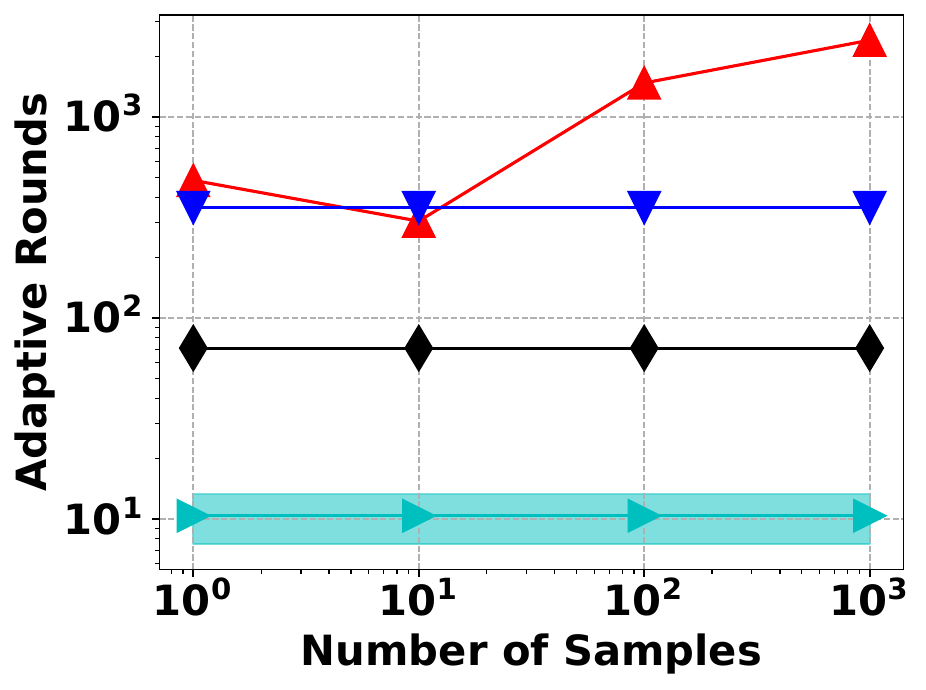}
  }
  \caption{Comparison of our algorithms with \citeA{Ene2020} on a very small random graph ($n = 87$, $k = 10$). In all plots, the $x$-axis shows the number of samples used to approximate the multilinear extension. } \label{fig:multilinear}
\end{figure*} 
\subsection{Additional Experiments}\label{apx:exp}
In this section, we further investigate the performance of
\citeA{Ene2020} when closed-form evaluation of the multilinear
extension and its gradient are impossible. \red{It is known}
that sampling to approximate the multilinear extension
and its gradient is extremely inefficient or yields poor solution
quality with a small number of samples. For this reason, we exclude
this algorithm from our revenue maximization experiments.
To perform this evaluation, we compared versions of the
algorithm of \citeA{Ene2020} that use varying number
of samples to approximate the multilinear extension.

Results are shown in Fig.~\ref{fig:multilinear} 
on a very small random graph with $n=87$ and $k = 10$.
The figure shows the objective value
and total queries to the set function vs. the number of samples
used to approximate the multilinear extension. 
There is a clear
tradeoff between the solution quality and the number of queries
required; at $10^3$ samples per evaluation, the algorithm
matches the objective value of the version with the 
exact oracle;
however, even at roughly $10^{11}$ queries (corresponding
to $10^4$ samples for each evaluation of the multilinear extension),
the algorithm of \citeA{Ene2020} is unable to exceed $0.8$ of
the IteratedGreedy value. On the other hand, if $\le 10$ samples are used to
approximate the multilinear extension, the algorithm is unable to exceed
$0.5$ of the IteratedGreedy value and still requires on the order of $10^7$
queries. 
\bibliography{main.bib}
\bibliographystyle{theapa}

\end{document}